\newcommand{\ifarxivelse}[2]{\iftoggle{arxiv}{#1}{#2}}
\newlist{compactitem}{itemize}{3}
\setlist[compactitem]{topsep=2pt,partopsep=0pt,itemsep=0.5pt,parsep=0pt}
\setlist[compactitem,1]{label=\textbullet}
\setlist[compactitem,2]{label=---}
\setlist[compactitem,3]{label=*}
\newenvironment{tightcenter}{%
	\setlength\topsep{2pt}
	\setlength\parskip{2pt}
	\begin{center}
	}{%
	\end{center}
}
\definecolor{commentcolor}{RGB}{60,114,26}
\newcounter{mycounter} % Definiere einen neuen Counter
\newmdtheoremenv[%
    outerlinewidth=2,%
    roundcorner=10pt,%
    leftmargin=0,%
    rightmargin=0,%
    backgroundcolor=black!10,%
    outerlinecolor=blue!70!black,%,
    innertopmargin=5pt,%
    innerleftmargin=5pt,%
    innerrightmargin=5pt,%
    splittopskip=\topskip,%
    skipabove=5pt,%
    skipbelow=5pt,
]{leitmotif}{Guiding Principle}%[section]
\crefname{leitmotif}{guiding principle}{guiding principles}
\Crefname{leitmotif}{Guiding Principle}{Guiding Principles}
\newcommand{\genericcomment}[3]{\todo[color=#1,size=\tiny,fancyline,author=#2]{#3}\xspace}
\renewcommand{\genericcomment}[3]{} % Comment in to remove all TODOs
\newcommand{\ms}[1]{\genericcomment{red!35}{MS}{#1}\xspace}
\newcommand{\mw}[1]{\genericcomment{yellow!35}{MW}{#1}\xspace}
\newcommand{\tw}[1]{\genericcomment{blue!35}{TW}{#1}\xspace}
\DeclareMathOperator*{\argmax}{arg\,max}
\DeclareMathOperator*{\argmin}{arg\,min}
\newcommand{\nats}{\mathbb{N}}
\newcommand{\exnats}{{\overline{\nats}}} % extended nats
\newcommand{\floats}{\mathbb{F}}
\newcommand{\tool}[1]{\textsf{#1}\xspace}
\newcommand{\storm}{\tool{Storm}}
\newcommand{\mcsta}{\tool{mcsta}}
\newcommand{\prism}{\tool{PRISM}}
\newcommand{\isabelle}{\tool{Isabelle/HOL}}
\newcommand{\set}{X} % generic set
\newcommand{\setalt}{Y} % generic set (alt)
\newcommand{\po}{\preceq} % generic PO
\newcommand{\pwrel}[1]{\mathrel{\ddot{#1}}} % pointwise lifted relation
\newcommand{\lub}[1]{\sup{#1}}
\newcommand{\glb}[1]{\inf{#1}}
\newcommand{\lfp}[1]{\operatorname{lfp}{#1}} % least fixed point of function #1
\newcommand{\gfp}[1]{\operatorname{gfp}{#1}} % greatest fixed point of function #1
\newcommand{\fp}[1]{\operatorname{fp}{#1}} % fixed point of function #1
\newcommand{\monFunGeneric}{\mathcal{F}}
\newcommand{\mdp}{\mathcal{M}}
\newcommand{\mdptuple}{(\states,\act,\prmdp)}
\newcommand{\states}{S}
\newcommand{\act}{{Act}}
\newcommand{\target}{T}
\newcommand{\prmdp}{P}
\newcommand{\pr}{\mathbb{P}} % probability measure
\newcommand{\opt}{\operatorname{opt}}
\newcommand{\prmax}{\pr^{\max}}
\newcommand{\prmin}{\pr^{\min}}
\newcommand{\propt}{\pr^{\opt}}
\newcommand{\prind}[1]{\pr^{#1}}
\newcommand{\reach}{\lozenge}
\newcommand{\strat}{\sigma}
\newcommand{\pre}{\mathit{Pre}}
\newcommand{\post}{\mathit{Post}}
\newcommand{\bellman}[1]{\mathcal{B}^{#1}} % bellman operator in optimization direction #1
\newcommand{\modbellman}[1]{\tilde{\mathcal{B}}^{#1}} % modified bellman operator in optimization direction #1
\newcommand{\bellmanmax}{\bellman{\max}}
\newcommand{\modbellmanmax}{\modbellman{\max}}
\newcommand{\bellmanmin}{\bellman{\min}}
\newcommand{\modbellmanmin}{\modbellman{\min}}
\newcommand{\bellmanopt}{\bellman{\opt}}
\newcommand{\bellmanind}[1]{\bellman{#1}}
\newcommand{\modbellmanind}[1]{\modbellman{#1}}
\newcommand{\vals}{x} % input to bellman operator
\newcommand{\valsb}{y} % input to bellman operator
\newcommand{\distop}[2]{\mathcal{D}^{#1}_{#2}} % distance "ranking function" operator
\newcommand{\distopvals}[2]{\mathcal{D}^{#1}_{\vals#2\uparrow}} % distance "ranking function" operator dependent on vals
\newcommand{\distoprvals}[2]{\mathcal{D}^{#1}_{\vals\downarrow}} % distance operator for expected rewards inf case deoendent on vals
\newcommand{\rank}{r} % input of dist operator
\newcommand{\distopmod}[1]{\overline{\mathcal{D}}^{#1}} % distance "ranking function" operator
\newcommand{\Eopt}{\mathbb{E}^{\opt}}
\newcommand{\E}{\mathbb{E}}
\newcommand{\Estrat}{\mathbb{E}^{\strat}}
\newcommand{\Emax}{\mathbb{E}^{\max}}
\newcommand{\Emin}{\mathbb{E}^{\min}}
\newcommand{\starinf}{* = \infty}
\newcommand{\starrho}{* = \rho}
\newcommand{\RgeqZero}{\mathbb{R}_{\geq 0}}
\newcommand{\RgeqZeroInf}{\overline{\mathbb{R}}_{\geq 0}}
\newcommand{\RgeqZeroInfStates}{\overline{\mathbb{R}}_{\geq 0}^{\states}}
\newcommand{\rew}{\textnormal{rew}}
\newcommand{\newtarget}{\target'}
\newcommand{\newtargetmin}{\target^{\min}}
\newcommand{\newtargetmax}{\target^{\max}}
\newcommand{\newtargetstrat}{\target^{\strat}}
\newcommand{\bellmanr}[1]{\mathcal{E}^{#1}} % bellman operator in optimization direction #1
\newcommand{\modbellmanr}[1]{\tilde{\mathcal{E}}^{#1}}
\newcommand{\bellmanrmax}{\bellmanr{\max}}
\newcommand{\modbellmanrmax}{\modbellmanr{\max}}
\newcommand{\bellmanrmin}{\bellmanr{\min}}
\newcommand{\modbellmanrmin}{\modbellmanr{\min}}
\newcommand{\bellmanropt}{\bellmanr{\opt}}
\newcommand{\bellmanrind}[1]{\bellmanr{#1}}
\newcommand{\modbellmanrind}[1]{\modbellmanr{#1}}
\newcommand{\modbellmanrrhomin}{\bellmanrmintarget{\newtargetmin}}
\newcommand{\nopt}{\overline{\opt}}
\newcommand{\nmin}{\overline{\min}}
\newcommand{\nmax}{\overline{\max}}
\newcommand{\pos}{\textnormal{Pos}}
\newcommand{\bellmanropttarget}[1]{\bellmanr{\opt}_{#1}}%underscore the target
\newcommand{\bellmanrindtarget}[2]{\bellmanr{#1}_{#2}}%underscore the target
\newcommand{\bellmanrmintarget}[1]{\bellmanr{\min}_{#1}}%underscore the target
\newcommand{\bellmanrmaxtarget}[1]{\bellmanr{\max}_{#1}}%underscore the target
\newcommand{\distopmodtarget}[2]{\overline{\mathcal{D}}^{#1}_{#2}}
\newcommand{\newtargetmininput}{\target^{\min}_i}
\newcommand{\newtargetmaxinput}{\target^{\max}_i}
\newcommand{\newtargetstratinput}{\target^{\strat}_i}
\newcommand{\distoprrhovals}[2]{\mathcal{D}^{#1}_{\vals\uparrow}}
\newcommand{\indact}[1]{\act_{#1}^{\uparrow}}
\newcommand{\indactrew}[1]{\act_{#1}^{\uparrow}}%{Ind\act^{\textnormal{rew}}_{#1}}
\newcommand{\coindactrew}[1]{\act_{#1}^{\downarrow}}%{\act^{\downarrow}_{#1}}
\newcommand{\colMinUpper}{blue!75!black}
\newcommand{\colMinLower}{green!40!black}
\newcommand{\colMaxUpper}{red!75!black}
\newcommand{\colMaxLower}{orange!75!black}
\pgfplotsset{compat=newest}
\definecolor{plotred}{RGB}{255,0,0}
\definecolor{plotgreen}{RGB}{0,255,0}
\definecolor{plotblue}{RGB}{0,0,255}
\definecolor{plotyellow}{RGB}{230,230,0}
\definecolor{plotcyan}{RGB}{0,255,255}
\definecolor{plotorange}{RGB}{255,127,0}
\definecolor{plotpink}{RGB}{255,0,255}
\definecolor{plotlightgray}{RGB}{192,192,192}
\definecolor{plotdarkgray}{RGB}{128,128,128}
\definecolor{plotdarkred}{RGB}{128,0,0}
\definecolor{plotgreenyellow}{RGB}{128,128,0}
\definecolor{plotdarkgreen}{RGB}{0,128,0}
\definecolor{plotpurple}{RGB}{128,0,128}
\definecolor{plotteal}{RGB}{0,128,128}
\definecolor{plotdarkblue}{RGB}{0,0,128}
\definecolor{plotlightred}{RGB}{205,92,92}
\definecolor{plotlightblue}{RGB}{176,196,222}
\colorlet{color1}{plotred}
\colorlet{color2}{plotgreen}
\colorlet{color3}{plotblue}
\colorlet{color4}{plotyellow}
\colorlet{color5}{plotcyan}
\colorlet{color6}{plotorange}
\colorlet{color7}{plotpink}
\colorlet{color8}{plotlightgray}
\colorlet{color9}{plotdarkgray}
\colorlet{color10}{plotdarkred}
\colorlet{color11}{plotgreenyellow}
\colorlet{color12}{plotdarkgreen}
\colorlet{color13}{plotpurple}
\colorlet{color14}{plotteal}
\colorlet{color15}{plotdarkblue}
\colorlet{color16}{plotlightred}
\colorlet{color17}{plotlightblue}
\newcommand{\quantileplotxlabel}{Number of solved instances (out of \numbenchmarks)}
\newcommand{\quantileplotylabel}{Runtime (seconds)}
\newlength{\quantileplotwidth}
\newlength{\quantileplotheight}
\newcommand{\quantileplotlegendcols}{1}
\newcommand{\standardquantileplotlinestyle}{ultra thick}
\newcommand{\quantileplotlinestyle}{\standardquantileplotlinestyle}
\newcommand{\quantileplot}[8]{%
% Arguments:
% #1: csv filename
% #2: comma separated list of tool.config/color items}
% #3: comma separated list of readable config names 
% #4: xmin
% #5: xmax
% #6: ymin
% #7: ymax
% #8: legend pos (e.g. "north west")
	\begin{tikzpicture}
	\begin{axis}[
	width=\quantileplotwidth,
	height=\quantileplotheight,
	xmin=#4,
	xmax=#5,
	ymin=#6,
	ymax=#7,
	ymajorgrids,
	ymode=log,
	axis x line=bottom,
	axis y line=left,
	unbounded coords=discard,filter discard warning=false, % properly deal with missing data points
	xlabel=\quantileplotxlabel,
	ylabel=\quantileplotylabel,
	log ticks with fixed point, % enable to avoid 10^-x notation
	yticklabel style={font=\scriptsize},
	scaled y ticks=false,
	xticklabel style={font=\scriptsize},
	legend columns=\quantileplotlegendcols,
	legend pos={#8},
	legend style={nodes={scale=0.75, transform shape},inner sep=1pt},
/pgfplots/legend image code/.code={\draw[mark repeat=2,mark phase=2,##1] plot coordinates {(0cm,0cm) (0.3cm,0cm)};}, % make the legend lines a bit smaller
	every axis plot/.append style={\quantileplotlinestyle},
	legend cell align={left}
	]
	\iftoggle{showplots}{
	\foreach \tool\color in {#2}{%
		\edef\loopbody{
			\noexpand\addplot[\color] table [x=n,y=\tool shifted, col sep=tab] {#1};
		}
		\loopbody
	}
	\legend{#3}}{\node[anchor=south west, align=center, red] {\huge NOT\\ COMPILED};}
	\end{axis}
	\end{tikzpicture}%
}
\newlength\scatterplotsize
\newcommand{\scatterplotAlg}[6]{%
% Arguments:
% #1: csv filename
% #2: tool.config identifier for x-axis
% #3: label for x-axis
% #4: tool.config identifier for y-axis
% #5: label for y-axis
% #6: trigger showing the legend (true/false)
	\begin{tikzpicture}
	\begin{axis}[
	width=\scatterplotsize,
	height=\scatterplotsize,
	axis equal image,
	xmin=1,
	ymin=1,
	ymax=1280,
	xmax=1280,
	xmode=log,
	ymode=log,
	axis x line=bottom,
	axis y line=left,
	xtick={2,4,8,16,32,64,128},
	xticklabels={2,4,8,16,,64},
	extra x ticks = {1,360,512,724},
	extra x tick labels = {,\raisebox{-5pt}{$\ge$360},inval,n/a},
	extra x tick style = {grid=major, xticklabel style = {font=\tiny,rotate=-90,anchor=west}},
	xticklabel style={font=\scriptsize,yshift=2pt},%rotate=290,anchor=west,
	xlabel={#3},
	xlabel style={font=\scriptsize,yshift=15pt},%{yshift=16pt},
	ytick={2,4,8,16,32,64,128},
	yticklabels={2,4,8,16,32,64,128},
	extra y ticks = {1,360,512,724},
	extra y tick labels = {\scriptsize${\le}1$,\raisebox{-5pt}{$\ge$360},inval,n/a},
	extra y tick style = {grid = major, yticklabel style={font=\tiny}},
	yticklabel style={font=\scriptsize,xshift=2pt},
	ylabel={#5},
	ylabel style={font=\scriptsize,yshift=-12pt,xshift=-5pt},%{yshift=-0.4cm},
	legend pos=north east,
	legend columns=3,
	legend style={nodes={scale=0.75, transform shape},inner sep=1.5pt, xshift=1mm, yshift=6mm},
	set layers,
	mark layer=axis background
	%legend cell align={left}
	]
	
	\iftoggle{showplots}{\addplot[
	scatter,
	only marks,
	scatter/classes={
		P={mark=diamond*,plotorange,mark size=1.75},
		E={mark=asterisk,plotdarkgreen,mark size=1.75}
	},
	scatter src=explicit symbolic
	]%
	table [col sep=tab,x expr=#2,y expr=#4,meta=Prop] {#1};
	}{\node[anchor=south west, align=center, red] {\huge NOT\\ COMPILED};}
	\ifthenelse{\NOT\equal{#6}{false}}{\legend{reach. prob, exp. reward}}{}
	\addplot[no marks] coordinates {(0.01,0.01) (360,360) };
	\addplot[no marks, densely dotted] coordinates {(0.01,0.02) (150,360)};
	\addplot[no marks, densely dotted] coordinates {(0.02,0.01) (360,150)};
	\end{axis}
	\end{tikzpicture}
}
\newcommand{\scatterplotStates}[7]{%
% Arguments:
% #1: 1st config name
% #2: 1st style
% #3: 1st legend
% #4: 2nd config name
% #5: 2nd style
% #6: 2nd legend
% #7: display name for config (y-label)
\begin{tikzpicture}
	\begin{axis}[
	width=\scatterplotsize,
	height=\scatterplotsize,
	ymin=1,
	ymax=1280,
	xmode=log,
	ymode=log,
	axis x line=bottom,
	axis y line=left,
	xtick={1e1,1e3,1e5,1e7},
	xticklabel style={font=\scriptsize,yshift=2pt},%rotate=290,anchor=west,
	xlabel=number of states,
	xlabel style={font=\scriptsize,yshift=1.6pt},%{yshift=16pt},
	ytick={2,4,8,16,32,64,128},
	yticklabels={2,4,8,16,32,64,128},
	extra y ticks = {1,360,512,724},
	extra y tick labels = {\scriptsize${\le}1$,\raisebox{-5pt}{$\ge$360},inval,n/a},
	extra y tick style = {grid = major, yticklabel style={font=\tiny}},
	yticklabel style={font=\scriptsize,xshift=2pt},
	ylabel={#7},
	ylabel style={font=\scriptsize,yshift=-12pt,xshift=-12pt},%{yshift=-0.4cm},
	legend pos=north east,
	legend columns=3,
	legend style={nodes={scale=0.75, transform shape},inner sep=1.5pt, xshift=1mm, yshift=6mm},
	set layers,
	mark layer=axis background,
	table/col sep=tab
	%legend cell align={left}
	]
	
\iftoggle{showplots}{
	\addplot[#2] table [y expr={\thisrow{#1} < 1 ? 1 : \thisrow{#1}},x=States]{plotdata/scatter.csv};
	\addplot[#5] table [y expr={\thisrow{#4} < 1 ? 1 : \thisrow{#4}},x=States]{plotdata/scatter.csv};
}{\node[anchor=south west, align=center, red] {\huge NOT\\ COMPILED};}
	\legend{#3, #6}
	\end{axis}
\end{tikzpicture}
}
\newcommand{\scatterplotStatesCert}[1]{%
% Arguments:
% #1: config name
\scatterplotStates{logs.cert.#1}{plotred, only marks, mark=+, mark size=1.75}{incl. parsing}
{logs.cert.#1veriftime}{plotblue, only marks, mark=x,, mark size=1.75}{excl. parsing}
{certification time}
}
\newcommand{\scatterplotStatesStorm}[4]{%
% Arguments:
% #1: 1st config name
% #2: 1st legend
% #3: 2nd config name
% #4: 2nd legend
\scatterplotStates{logs.Storm.#1}{plotred, only marks, mark=+, mark size=1.75}{#2}
{logs.Storm.#3}{plotblue, only marks, mark=x,, mark size=1.75}{#4}
{generation time}
}
\def\orcidID#1{\textsuperscript{\,\smash{\protect\raisebox{-1.25pt}{\href{http://orcid.org/#1}{\protect\includegraphics[scale=.8]{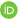}}}}}}
\begin{document}

\title{Fixed Point Certificates for Reachability and\\Expected Rewards in MDPs%
\thanks{This project has received funding from 
	the ERC CoG 863818 (ForM-SMArt), the Austrian Science Fund (FWF) 10.55776/COE12, %Krish
	a KI-Starter grant from the Ministerium f\"ur Kultur und Wissenschaft NRW, %Tim
	the DFG RTG 378803395 (ConVeY), %Maxi S
	the EU's Horizon 2020 research and innovation programmes under the Marie Sklodowska-Curie grant agreement Nos.\ 101034413 (IST-BRIDGE) and 101008233 (MISSION), %Maxi W and Tobi
    and the DFG RTG 2236 (UnRAVeL). % Tobi
	Experiments were performed with computing resources granted by RWTH Aachen University under project rwth1632.}
}

\titlerunning{Certificates for Reachability and Expected Rewards in MDPs}

\author{Krishnendu Chatterjee\inst{1}\orcidID{0000-0002-4561-241X} \and
Tim Quatmann\inst{2}\orcidID{0000-0002-2843-5511} \and
Maximilian Sch\"affeler\inst{3}\orcidID{0000-0002-2612-2335} \and
Maximilian Weininger\inst{1}\orcidID{0000-0002-0163-2152} \and
Tobias Winkler\inst{2}\orcidID{0000-0003-1084-6408} \and
Daniel Zilken\inst{1,2}\orcidID{0009-0007-7221-2554}
}
\authorrunning{K. Chatterjee et al.}
% First names are abbreviated in the running head.

\institute{%
	Institute of Science and Technology Austria, Klosterneuburg, Austria $\cdot$\\
	\email{\{Krishnendu.Chatterjee, Maximilian.Weininger\}@ist.ac.at}
	\and
	RWTH Aachen, Germany $\cdot$\\
	\email{\{tim.quatmann, tobias.winkler, daniel.zilken\}@cs.rwth-aachen.de}
	\and
	Technical University of Munich, Germany 
	$\cdot$ 
	\email{maximilian.schaeffeler@tum.de}
}

\maketitle

\begin{abstract}
%The abstract should briefly summarize the contents of the paper in
%150--250 words.
The possibility of errors in human-engineered formal verification software, such as model checkers, poses a serious threat to the purpose of these tools.
An established approach to mitigate this problem are \emph{certificates}---lightweight, easy-to-check proofs of the verification results. 
In this paper, we develop novel certificates for model checking of \emph{Markov decision processes} (MDPs) with quantitative reachability and expected reward properties.
Our approach is conceptually simple and relies almost exclusively on elementary fixed point theory.
Our certificates work for \emph{arbitrary} finite MDPs and can be readily computed with little overhead using standard algorithms.
We formalize the soundness of our certificates in \isabelle and provide a \emph{formally verified certificate checker}.
Moreover, we augment existing algorithms in the probabilistic model checker \storm with the ability to produce certificates and demonstrate practical applicability by conducting the first formal certification of the reference results in the Quantitative Verification Benchmark Set.
\keywords{Probabilistic model checking \and Markov decision processes \and Certificates \and Reachability \and Expected rewards \and Proof assistant}
\end{abstract}

% [TW] just for convenience while writing
%\tableofcontents

\section{Introduction}

\emph{Markov decision processes} (MDPs)~\cite{puterman,principles,Handbook} are \emph{the} model for sequential decision making in probabilistic environments.
Their many applications~\cite{white1993survey,QVBS} frequently require computing \emph{reachability probabilities} towards an \mbox{(un-)}desired system state, as well as the \emph{expected rewards} (or costs) accumulated until doing so.
%The result usually takes the form of a \emph{value function} that assigns optimal winning probability or reward to every state of the Markov decision process.
\emph{MDP model checking} amounts to computing (approximations of) these quantities in a \emph{mathematically rigorous} way, with a formal guarantee of their correctness and precision.
Various mature MDP model checking tools such as \prism~\cite{prism}, \mcsta~\cite{modest}, and \storm~\cite{storm} exist.
\Cref{fig:exampleReach} shows an example MDP.

\emph{Who checks the model checker?}
The possibility of errors in complex, human-engineered formal verification tools is a delicate issue:
How \emph{formal} is a verification result produced by an \emph{informal}, i.e., unverified implementation?
We highlight four sources of errors: (i) classic implementation bugs, (ii) unintentionally unsound algorithms~\cite{DBLP:conf/cav/Baier0L0W17,HM18}, optimizations, and heuristics, (iii) numerical errors due to floating point arithmetic~\cite{arndFP}, and (iv) errors in third-party back end libraries or tools, e.g., commercial LP solvers~\cite{TACAS23}.

%\emph{Formal guarantees} on the result of model checking are a core feature of these tools.
%However, in practice, the results often violate the required guarantees.
%In particular, heuristics necessary for dealing with state space explosion [BK08] lack formal proofs.
%For example, results can be arbitrarily far off even when using the textbook linear program because of the heuristics in commercial solvers [TACAS23], or when employing the naive stopping criterion for value iteration [HM18].
%topological heuristics exploiting the graph structure can lead to non-termination [ATVA22]; 
%The ubiquitous usage of floating point arithmetic (instead of exact but slower rational arithmetic) can lead to unsound result [Arnd20].
%The reference results in tool comparisons are taken from the most trusted tool, e.g.\ [QVBS,QComp20?,TACAS23,PET2], which is insufficient.
%Calvin
% However, PMC tools have been observed to contain
%bugs themselves [51], thereby diminishing trust in the verification results. 

\emph{Certifying algorithms}~\cite{DBLP:journals/csr/McConnellMNS11}
are a paradigm for establishing trust in implementations.
A certifying algorithm produces a concise, easily verifiable proof---a \emph{certificate}---of its result.
The certificate can be checked independently, possibly even by an external, simpler program amenable to formal verification, or by a third party.
%They postulate that a complex (and thus potentially faulty) algorithm should always be accompanied with a certificate that can easily verified be independently.
%
Formally verified certificate checkers are already employed in tool competitions on software verification~\cite{SVCOMP23} or SAT-solving~\cite{SATcomp}.
%, with both of them in particular including an analysis of different certificate checkers.
Existing proposals for certifying MDPs~\cite{DBLP:journals/jar/Holzl17,DBLP:conf/tacas/FunkeJB20,DBLP:phd/dnb/Jantsch22}, however, have some drawbacks (detailed further below) hindering wider adoption in the community 
and its competitions~\cite{qcomp19,qcomp20,qcomp23}.
%and its competitions, even those specifically focussing on soundness~\cite{qcomp20}.

\begin{tightcenter}
    \emph{The goal of this paper is to establish a new standard for certified MDP model checking, with a focus on applicability and extensibility.}
\end{tightcenter}

Our contributions towards this goal are as follows:
\begin{compactitem}
	\item We present \emph{fixed point certificates} for two-sided bounds on extremal reachability probabilities\ifarxivelse{}{ (\Cref{fig:certoverview})} and expected rewards\ifarxivelse{ (Tables~\ref{fig:certoverview}, \ref{fig:certoverviewrewards})}{}.
	Our certificates are sound and complete for \emph{arbitrary} finite MDPs without structural restrictions.%, and are \emph{small}: they consist of $\leq 2$ values per state.
	\item We formalize the theory in \isabelle~\cite{DBLP:books/sp/NipkowPW02}, proving soundness of our certificates, and generate a \emph{formally verified certificate checker implementation}.
	\item We implement a certifying variant of~\cite{TACAS23} \emph{Interval Iteration}~\cite{DBLP:conf/cav/Baier0L0W17} with floating point arithmetic in \storm~\cite{storm}.
    Using this, we give certified reference results for the Quantitative Verification Benchmark Set~\cite{QVBS}.
\end{compactitem}

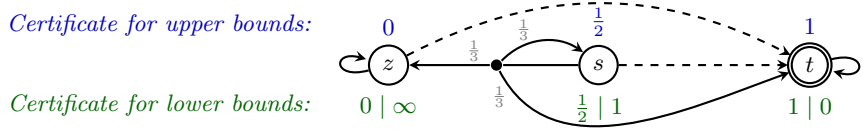
\begin{figure}[t]
    \centering
    \begin{tikzpicture}[every state/.style={minimum size=0},thick,>=stealth,node distance=22.5mm]
        \node[state,
            label=90:{\textcolor{\colMinUpper}{$\frac 1 2$}},%
            label=-90:{\textcolor{\colMinLower}{$\frac 1 2 \mid 1$}}%
        ] (s) {$s$};
        
        \node[state,accepting,right=of s,%
            label=90:{\textcolor{\colMinUpper}{$1$}},%
            label=-90:{\textcolor{\colMinLower}{$1 \mid 0$}}%
        ] (t) {$t$};
        
        \node[state,left=of s,%
            label=90:{\textcolor{\colMinUpper}{$0$}},%
            label=-90:{\textcolor{\colMinLower}{$0 \mid \infty$}}%
        ] (z) {$z$};

        \node[circle, fill,left=10mm of s,inner sep=1.5pt] (h) {};
        
        \node[state,white,left=25mm of z,%
            label=90:{\textcolor{\colMinUpper}{\textit{Certificate for upper bounds:}}},%
            label=-90:{\textcolor{\colMinLower}{\textit{Certificate for lower bounds:}}},%
        ] {$s$}; % this is a dummy

        \draw[] (s) edge (h);
        \draw[->] (h) edge[bend left=40] node[auto,gray,yshift=-3pt] {\tiny $\frac 1 3$} (s);
        \draw[->] (h) edge[out=-62.5,in=-160] node[near start,gray,xshift=-7mm,yshift=3mm] {\tiny $\frac 1 3$}(t);
        \draw[->] (h) edge (z) node[above,xshift=-3mm,yshift=-2pt,above,gray] {\tiny $\frac 1 3$};
        \draw[->,dashed] (s) edge (t);

        \draw[->] (t) edge[loop right] (t);
        \draw[->] (z) edge[loop left] (z);
        \draw[->,dashed] (z) edge[bend left=27.5] (t);
    \end{tikzpicture}%
    \caption{
        An MDP with states $\states = \{z,s,t\}$, two actions (distinguished by solid and dashed edges), uniform probabilities, and target set $T = \{t\}$.
        The annotations above and below each state are a certificate for \textcolor{black}{upper} and \textcolor{black}{lower} bounds on $\prmin(\reach \target)$, resp.
    }
    \label{fig:exampleReach}
\end{figure}

%We present a theory of certificates that is applicable to \emph{arbitrary} MDPs with reachability as well as \emph{expected reward} objectives.
%In line with the requirements of certifying algorithms~\cite{DBLP:journals/csr/McConnellMNS11}, the certificates are small (linear in the size of the state space), as well as simple and fast to check, requiring only a linear number of local operations.
%The certificates are \emph{independent of the solution algorithm}; still, we explicitly provide a way to generate them using floating-point precise value iteration, which is typically the fastest algorithm [TACAS23].
%We provide a certificate checker whose code and meta-theory is \emph{formally verified} in Isabelle/HOL~\cite{DBLP:books/sp/NipkowPW02}.

\emph{Extensibility} towards further properties is enabled by our simple, clean theory summarized as four \emph{guiding principles}: %\Cref{leitmotif:knasterTarski,leitmotif:rankingFunctions,leitmotif:uniqueFixedPoints,leitmotif:witnessStrat}.
%In a nutshell, these principles are:
%
%(G.P.~\ref{leitmotif:knasterTarski}) MW: I didn't like the look of G.P., so I tried others. Can discuss if you prefer otherwise.
(GP\ref{leitmotif:knasterTarski})
We characterize the quantities of interest as a \emph{fixed point} of basic, easy-to-evaluate \emph{Bellman-type} operator~\cite{DBLP:journals/iandc/Bellman58}.
The fundamental certification mechanism is to use \emph{fixed point induction} for proving bounds on the least or greatest fixed point.
%
%(G.P.~\ref{leitmotif:rankingFunctions}) 
(GP\ref{leitmotif:rankingFunctions}) 
We certify \emph{qualitative} reachability properties using \emph{ranking functions} which are amenable to fixed point induction, too.
%
%(G.P.~\ref{leitmotif:uniqueFixedPoints}) 
(GP\ref{leitmotif:uniqueFixedPoints}) 
As the basic Bellman operators frequently have undesired, spurious fixed points~\cite{atva14,HM18,lics23} (often related to \emph{end components}~\cite{dealfaro97}), we consider slight modifications requiring qualitative reachability information which we certify following GP\ref{leitmotif:rankingFunctions}.
%
%(G.P.~\ref{leitmotif:witnessStrat})
(GP\ref{leitmotif:witnessStrat})
When GP\ref{leitmotif:uniqueFixedPoints} is insufficient or not applicable, we implicitly include a \emph{witness strategy} in our certificate. %, either explicitly or implicitly.

\emph{Technical challenges} still arise in concretely applying these guiding principles.
For instance, a key novelty of our paper is a ranking-function type certificate for \emph{not almost sure reachability} (\Cref{prop:certQualReachTwo}), which is surprisingly involved.

\begin{table}[t]
    \caption{
        Our reachability certificates.
        Sound and complete for arbitrary finite MDPs.
%        They apply to arbitrary finite MDPs with states $\states$, target set $\target \subseteq \states$ and associated Bellman operators $\bellmanmin$ and $\bellmanmax$.
%        The operators $\distop{\max}{}$ and $\distopvals{\min}{}$ characterize certain distances to $\target$ when viewing the MDP as a graph, see \Cref{sec:qualreachandsafe} for details.
    }
    \label{fig:certoverview}
    \begin{adjustbox}{max width=\textwidth}
        \setlength{\tabcolsep}{4pt}%
        \renewcommand{\arraystretch}{1.05}%
        \begin{tabular}{l l l}        
            \toprule 
            \rowcolor{gray!4}\quad\textbf{Certificate} & \textbf{Condition(s)} & \textit{Explanation} \\ \midrule
            \rowcolor{\colMinUpper!4}\multicolumn{3}{l}{\textcolor{black}{\textbf{Upper} bounds on \textbf{minimal} reachability probabilities:} ~ $\forall s \in \states \colon \prmin_{s}(\reach\target) \leq \vals(s)$ \hspace*{2mm} \textcolor{gray}{[\Cref{thm:certupper}]}} \\ 
            \rowcolor{\colMinUpper!4}\quad$\vals \in [0,1]^\states$ & $\bellmanmin(\vals) \leq \vals$ & \textit{\textbf{min}-Bellman operator {\textbf{decreases}} value of {all} states}\\[2pt] 
            \rowcolor{\colMaxUpper!4}\multicolumn{3}{l}{\textcolor{black}{\textbf{Upper} bounds on \textbf{maximal} reachability probabilities:} ~ $\forall s \in \states \colon \prmax_{s}(\reach\target) \leq \vals(s)$ \hspace*{1mm} \textcolor{gray}{[\Cref{thm:certupper}]} } \\
            \rowcolor{\colMaxUpper!4}\quad$\vals \in [0,1]^\states$ & $\bellmanmax(\vals) \leq \vals$ & \textit{\textbf{max}-Bellman operator \textbf{decreases} value of {all} states}\\[2pt] 
            \rowcolor{\colMinLower!4}\multicolumn{3}{l}{\textcolor{black}{\textbf{Lower} bounds on \textbf{minimal} reachability probabilities:} ~  $\forall s \in \states \colon \prmin_{s}(\reach\target) \geq \vals(s)$ \hspace*{3mm} \textcolor{gray}{[\Cref{thm:certlower}]}} \\
            \rowcolor{\colMinLower!4}\quad$\vals \in [0,1]^\states$ & $\bellmanmin(\vals) \geq \vals$ & \textit{\textbf{min}-Bellman operator \textbf{increases} value of {all} states}\\
            \rowcolor{\colMinLower!4}\quad$\rank \in \exnats^\states$ & $\distop{\max}{}(\rank) \leq \rank$ & \textit{$\rank$ upper bounds \textbf{maximal} distances to $\target$} \\ 
            \rowcolor{\colMinLower!4}& $\vals(s) {>} 0 \implies \rank(s) {<} \infty$ & \textit{positive reachability necessitates finite distance} \\[2pt]
            \rowcolor{\colMaxLower!4}\multicolumn{3}{l}{\textcolor{black}{\textbf{Lower} bounds on \textbf{maximal} reachability probabilities:} ~ $\forall s \in \states \colon \prmax_{s}(\reach\target) \geq \vals(s)$ \hspace*{2mm} \textcolor{gray}{[\Cref{thm:certlower-nostrat}]}} \\
            \rowcolor{\colMaxLower!4}\quad$\vals \in [0,1]^\states$ & $\bellmanmax(\vals) \geq \vals$ & \textit{\textbf{max}-Bellman operator \textbf{increases} value of {all} states}\\
            \rowcolor{\colMaxLower!4}\quad$\rank \in \exnats^\states$ & $\distopvals{\min}{}(\rank) \leq \rank$ & \textit{$\rank$ upper bounds \textbf{min.} distances to $\target$ via \textbf{$x$-incr.\ actions}}\\ 
            \rowcolor{\colMaxLower!4}& $\vals(s) {>} 0 \implies \rank(s) {<} \infty$ & \textit{positive reachability necessitates finite distance} \\ 
            \bottomrule
        \end{tabular}
    \end{adjustbox}
\end{table}

\emph{Related work.}
Closest to our work are the previous proposals for certificates in MDP model checking:
\cite[Sec.~4]{DBLP:journals/jar/Holzl17} formally verifies a theory of certificates for reachability objectives, which is however limited to upper bounds on maximal and lower bounds on the minimal probabilities.
\cite{DBLP:conf/tacas/FunkeJB20} presents so-called \enquote{Farkas certificates} for reachability; however, it does not offer a formally verified certificate checker, is limited to MDPs without end components (ECs), and does not address certificate generation explicitly.
%and heavily relies on linear programming, a comparably inefficient technique for MDP model checking~\cite{TACAS23}.
With similar limitations, \cite{DBLP:conf/qestformats/BaierCK24} provides Farkas certificates for \emph{multiple} reachability or mean payoff objectives, which can be computed via linear programming.
\cite{DBLP:phd/dnb/Jantsch22} suggests lifting the EC assumptions from~\cite{DBLP:conf/tacas/FunkeJB20} by certifying the full maximal EC decomposition.
In contrast, our certificates are more concise as they handle ECs using at most one ranking function.
Further,~\cite{DBLP:phd/dnb/Jantsch22} proposes certificates for expected rewards, but they require the target to be reached almost surely, an assumption we do not have to make.
%A key contribution of \Cref{sec:ExpRew} is to solve the problems that arise from the target states not being reached.%, and the consequential potentially infinite rewards.

\emph{Witnessing subsystems}~\cite{WJAKB14,DBLP:conf/tacas/FunkeJB20,SWITSS,DBLP:conf/qestformats/BaierCK24} are an alternative certification paradigm.
However, their verification requires more computational effort than the simple, state-wise operations needed for checking Farkas or fixed point certificates.
Still, they utilize similar ideas:
The backward reachable states in~\cite[Sec. 3.3.3]{WJAKB14} essentially use ranking functions, as do the constraints
in~\cite[Sec. 5.2.2]{DBLP:journals/jcss/JungesK0W21}.

The term \enquote{certifying algorithms} was coined in~\cite{DBLP:journals/siamcomp/KratschMMS06}.
Previous work on certificates for other verification problems includes~\cite{DBLP:conf/cav/Namjoshi01,DBLP:conf/fsttcs/PeledPZ01,DBLP:journals/tcs/KupfermanV05,DBLP:journals/informaticaSI/Debbi18,DBLP:conf/fossacs/KupfermanS21}.
Further, certificates were recently investigated in hardware verification~\cite{DBLP:conf/cav/YuBH20,DBLP:conf/fmcad/YuFBH22,DBLP:conf/fmcad/YuFBH23,DBLP:conf/ijcar/FroleyksYBH24} and approximate model counting~\cite{DBLP:conf/cav/TanYSMM24}.
Finally, we mention that \emph{Optimistic Value Iteration} (OVI)~\cite{DBLP:conf/cav/HartmannsK20,DBLP:conf/atva/AzeemEKSW22}, the supermartingales in~\cite{DBLP:conf/aaai/LechnerZCH22,DBLP:conf/cav/AbateGR24,DBLP:conf/cav/TakisakaZWL24}, the certificates for probabilistic pushdown systems from~\cite{DBLP:conf/tacas/WinklerK23,DBLP:conf/lics/WinklerK23}, and a recent strategy synthesis method for infinite MDPs~\cite{DBLP:journals/pacmpl/BatzBKW24} follow fixed point induction principles similar to GP\ref{leitmotif:knasterTarski}.
%Finally, the usage of supermartingales in the control of infinite stochastic processes~\cite{DBLP:conf/aaai/LechnerZCH22,DBLP:conf/cav/AbateGR24} or verification of program termination~\cite{DBLP:conf/cav/TakisakaZWL24} is using the same fixed point idea underlying our certificates (Guiding Principle 1).

%- certificates sind super populär in der hardware model checking community, die haben (neuerdings) sogar einen Track dafür in ihrer competition
%- standardisierte Formate sind wichtig
%- weitere Motivation für certificates: closed source model checker
%- Sie zertifizieren teilweise ein kleineres model was das original model "simuliert".
%- Sie benutzen eine Technik namens fuzzing um den model checker mithilfe des certificate checkers zu testen. Basically massenhafte random tests, wenn ich das richtig verstanden habe.
%- delta-debugging, eine Technik um automatisch einen human understandable input zu bekommen wo der model checker ein invalides certificate liefert (?)

\emph{Paper outline.}
After the background on MDPs and fixed point theory (\Cref{sec:prelims}), we introduce ranking functions for qualitative reachability (\Cref{sec:qualreachandsafe}).
Building on this, we discuss quantitative reachability (\Cref{sec:certificates}, \Cref{fig:certoverview}) and expected rewards (\Cref{sec:ExpRew}, \ifarxivelse{\Cref{fig:certoverviewrewards}}{\cite[Tab. 2]{arxivversion}}).
We explain how to compute certificates (\Cref{sec:computing}) and report on experiments (\Cref{sec:expeval}).
Omitted pen-and-paper proofs are in \ifarxivelse{the appendix}{\cite[App.~C--E]{arxivversion}}.
\emph{All proofs regarding the soundness of the certificates, even standard results from the literature, are formalized in \isabelle.}

\section{Preliminaries}
\label{sec:prelims}

A \emph{Markov decision processes} (MDP) is a tuple $\mdp = (\states, \act, \prmdp)$ where $\states$ is a finite set of \emph{states}, $\act$ is a finite set of \emph{actions}, and $\prmdp \colon \states \times \act \times \states \to [0,1]$ is a \emph{transition probability function} with the property that $\sum_{s' \in \states} \prmdp(s,a,s') \in \{0,1\}$ for all $s \in \states$ and $a \in \act$.
For every $s \in \states$, the set of \emph{enabled} actions $a\in \act$ for which the above sum equals $1$ is written $\act(s)$.
It is required that $\act(s) \neq \emptyset$ for all $s \in \states$.
For $s \in \states$ and $a \in \act(s)$, we define the \emph{$a$-successors} of $s$ as $\post(s,a) = \{s' \in \states \mid \prmdp(s,a,s') > 0\}$.
Notice that our MDPs do not have a distinguished initial state.
See \Cref{fig:exampleReach} for an example MDP.

A (finite-state, discrete-time) \emph{Markov chain} (DTMC) is the special case of an MDP with $|\act(s)| =1$ for all $s \in \states$.
A (memoryless and deterministic) \emph{strategy}%
\footnote{%
    Aka.\ scheduler or policy.
    We do not define more general strategies as memoryless deterministic suffice for optimizing reachability probabilities and expected rewards.}
for an MDP $\mdp = (\states,\act,\prmdp)$ is a function $\strat \colon \states \to \act$ such that for all $s \in \states$ we have $\strat(s) \in \act(s)$.
We may apply $\strat$ to $\mdp$ to obtain the \emph{induced DTMC} $\mdp^\strat = (\states, \act, \prmdp^\strat)$ which, intuitively, only retains the actions chosen by $\strat$.
Formally, for all $s, s' \in \states$ we define $\prmdp^\strat(s,\sigma(s),s') = \prmdp(s,\strat(s),s')$, and $\prmdp^\strat(s,a,s') = 0$ for all $a \neq \strat(s)$.

\paragraph{Reachability and Expected Rewards.}%
\label{sec:exprewPrelims}%
Fix a DTMC $(\states,\act,\prmdp)$, a \emph{target} set $\target \subseteq\states$, and a \emph{reward function} $\rew \colon \states \to \RgeqZero$.
We define two random variables $\reach \target$ and $\rew^{\reach\target}$ taking values in $\{0,1\}$ and $\RgeqZeroInf$, respectively:
For $s_0s_1 ... \in \states^\omega$ an infinite path, we set $\reach \target(s_0s_1\ldots) = 1$ if and only if (iff) $\exists i \in \nats \colon s_i \in \target$.
Moreover:
\begin{align*}
    \rew^{\reach\target}(s_0s_1\ldots)
    ~=~
    \begin{cases}
        \sum_{k = 0}^{\min \{i \mid s_i \in \target\}} \rew(s_k) & \text{if } \exists i \in \nats \colon s_i \in \target \\
        * & \text{else}
    \end{cases}
\end{align*}
%\begin{align*}
%    &\reach \target(s_0s_1\ldots)   &\quad&\rew^{\reach\target}(s_0s_1\ldots)  \\
%    &\quad=\, \begin{cases}
%        1 & \text{if } \exists i \in \nats \colon s_i \in \target \\
%        0 & \text{else}
%    \end{cases}
%    &&\quad=\,\begin{cases}
%        \sum_{k = 0}^{\min \{i \mid s_i \in \target\}} \rew(s_k) & \text{if } \exists i \in \nats \colon s_i \in \target \\
%        * & \text{else}
%    \end{cases}
%\end{align*}
We consider both options $* = \infty$ and $* = \sum_{k =0}^\infty\rew(s_k)$~\cite{DBLP:journals/fmsd/ChenFKPS13}.
We focus on the former in the main body, as it is standard in the literature~\cite[Def.~10.71]{principles} and tool competitions~\cite{qcomp19,qcomp20}; we treat the latter in \ifarxivelse{\Cref{app:sec:exprewStarRho}}{\cite[App.~F]{arxivversion}}.
Intuitively, with $* = \infty$, $\rew^{\reach\target}$ assigns $\infty$ to paths that never reach $\target$.
The other paths receive the sum of rewards collected until reaching $\target$ for the first time.
Given a state $s \in \states$, we define the \emph{reachability probability} $\pr_{s}(\reach\target)$ from $s$ to $\target$ as the expected value (Lebesgue integral) of $\reach\target$ w.r.t.\ the probability measure $\pr_s$ on infinite paths of the DTMC with initial state fixed to $s$, see~\cite[Ch.~10]{principles} for the construction of $\pr_s$.
Similarly, the \emph{expected reward} $\E_s(\rew^{\reach\target})$ from $s$ to $\target$ is the expected value of $\rew^{\reach\target}$.
When $\rew$ is clear from context, we write $\E_s(\reach\target)$ instead of $\E_s(\rew^{\reach\target})$.

Finally, given an MDP $\mdp = (\states, \act, \prmdp)$, a state $s \in \states$, a target set $\target\subseteq\states$, a reward function $\rew \colon \states \to \RgeqZero$, and $\opt \in \{\min, \max\}$ we define the \emph{optimal reachability probability} $\propt_{s}(\reach \target) = \opt_{\strat} \prind{\strat}_{s}(\reach \target)$ and the \emph{optimal expected reward} $\Eopt_{s}(\rew^{\reach \target}) = \opt_{\strat} \Estrat_{s}(\rew^{\reach \target})$, where $\prind{\strat}_s(\reach \target)$ and $\Estrat_{s}(\rew^{\reach \target})$ are the reachability probabilities and expected rewards in the induced DTMC $\mdp^\strat$.

\paragraph{Fixed Point Theory.}%
\label{sec:prelims:fixedpoint}%
A \emph{partial order} on a set $\set$ is a binary relation $\po$ that is reflexive, transitive, and antisymmetric;
in this case, the tuple $(\set,\po)$ is called a \emph{poset}.
Given a poset $(\set, \po)$, we call $a \in \set$ an \emph{upper bound} on $\setalt \subseteq \set$ if for all $b \in \setalt$ we have $b \po a$.
If an upper bound $a$ on $\setalt$ is minimal among all upper bounds, it is the unique \emph{supremum} (or least upper bound) and denoted $\sup{\setalt}$.
%The \emph{least upper bound} (or \emph{supremum}) $\sup{\setalt}$ is an upper bound $a \in \set$ on $\setalt$ such that for all upper bounds $b \in \set$ on $\setalt$ we have $a \po b$.
%Suprema are unique if they exist.
\emph{Lower bounds} and \emph{infima} (or greatest lower bounds) are defined analogously.

The poset $(\set, \po)$ is a \emph{complete lattice} if $\sup \setalt$ and $\inf \setalt$ exist for every $\setalt \subseteq \set$.
Every complete lattice has a least and greatest element $\sup \emptyset$ and $\inf \emptyset$, respectively.
The following complete lattices are of interest in this paper:
\begin{compactitem}
    \item $(\exnats, \leq)$ where $\exnats = \nats \cup \{\infty\}$ are the \emph{extended natural numbers} and $\leq$ is the usual order on $\nats$ extended by $a \leq\infty$ for all $a \in \exnats$.
    Notice that for every $\setalt \subseteq \nats$, $\sup \setalt = \infty$  iff $\setalt$ is infinite.
    \item Similarly, $(\RgeqZeroInf, \leq)$, with $\RgeqZeroInf = \RgeqZero \cup \{\infty\}$ the \emph{extended non-negative reals}, is a complete lattice.
    For every $\setalt \subseteq \RgeqZero$, $\sup\setalt = \infty$ iff $\setalt$ is unbounded.
    \item $([0,1], \leq)$, the totally ordered set of real probabilities.
    \item If $(\set, \po)$ is an arbitrary complete lattice, then for all sets $\states$, $(\set^\states, \pwrel\po)$ is a complete lattice, where $\set^\states = \{f \mid f \colon \states \to \set\}$ is the set of functions from $\states$ to $\set$ and the partial order $\pwrel\po$ is defined as $f \pwrel\po g \iff \forall s \in \states \colon f(s) \po g(s)$.
    In the following, we overload notation and write $\po$ instead of $\pwrel\po$.
    For example, if $\states$ is the set of states of an MDP, then we can think of $([0,1]^\states, \leq)$ as the poset of \enquote{probability vectors} indexed by $\states$, partially ordered entry-wise.
\end{compactitem}

Let $(\set, \po)$ be a poset.
A function $\monFunGeneric \colon \set \to \set$ is called \emph{monotone} if $\forall a, b \in \set \colon a \po b \implies \monFunGeneric(a) \po \monFunGeneric(b)$.
The following is \emph{the} key tool of this paper:

\begin{theorem}[Knaster-Tarski]%
    \label{thm:knastertarski}%
    Let $(\set, \po)$ be a complete lattice and \mbox{$\monFunGeneric \colon \set \to \set$} be monotone.
    Then, the set of \emph{fixed points} $(\{a \in \set \mid \monFunGeneric(a) = a\}, \po)$ is also a complete lattice.
    In particular, $\monFunGeneric$ has a \emph{least} and a \emph{greatest} fixed point given by $\lfp{\monFunGeneric} = \glb{\{a \in \set \mid \monFunGeneric(a) \po a\}}$ and, dually, $\gfp{\monFunGeneric} = \lub{\{a \in \set \mid a \po \monFunGeneric(a)\}}$.
    As a consequence, the following \emph{fixed point induction rules} are sound:
    $\forall a \in \set \colon$
    \begin{compactitem}
        \item $\monFunGeneric(a) \po a \implies \lfp{\monFunGeneric} \po a$ \hfill\textnormal{(fixed point induction)}
        \item $a \po \monFunGeneric(a) \implies a \po \gfp{\monFunGeneric}$ \hfill\textnormal{(fixed point co-induction)}
    \end{compactitem}
\end{theorem}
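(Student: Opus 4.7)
The plan is to follow the classical Knaster-Tarski strategy: first establish the explicit formula for $\lfp{\monFunGeneric}$, then obtain $\gfp{\monFunGeneric}$ by a dual argument, read off the induction rules as immediate corollaries, and finally address the complete-lattice structure on the set of fixed points.

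I would start by setting $P = \{a \in \set \mid \monFunGeneric(a) \po a\}$ (the \emph{prefixed points}) and $p := \glb{P}$, which exists by completeness of $(\set, \po)$. The core claim is that $p$ is itself a fixed point. For any $a \in P$, monotonicity together with $p \po a$ gives $\monFunGeneric(p) \po \monFunGeneric(a) \po a$, so $\monFunGeneric(p)$ is a lower bound of $P$ and hence $\monFunGeneric(p) \po p$. Applying $\monFunGeneric$ once more and using monotonicity yields $\monFunGeneric(\monFunGeneric(p)) \po \monFunGeneric(p)$, so $\monFunGeneric(p) \in P$ and $p \po \monFunGeneric(p)$; antisymmetry then forces $\monFunGeneric(p) = p$. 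Since every fixed point is trivially prefixed, $p$ lies below every fixed point, hence $p = \lfp{\monFunGeneric}$. Dualizing the entire argument (replacing $\po$ with its converse and $\glb$ with $\lub$) yields the formula for $\gfp{\monFunGeneric}$.

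The (co-)induction rules are then immediate corollaries of the characterizations: $\monFunGeneric(a) \po a$ means $a \in P$, so $\lfp{\monFunGeneric} = \glb{P} \po a$, and dually for co-induction. For the complete-lattice structure on the set of fixed points, given any subset $S$ of fixed points, I would consider the interval $[u, \top] \subseteq \set$, where $u = \lub{S}$ is taken in $\set$ and $\top = \glb{\emptyset}$ is the top element. For any $a$ in this interval, each $s \in S$ satisfies $s = \monFunGeneric(s) \po \monFunGeneric(a)$, hence $u \po \monFunGeneric(a)$, meaning $\monFunGeneric$ restricts to a monotone endofunction on $[u, \top]$. Since $[u, \top]$ is itself a complete lattice, applying the first part to this restriction produces a least fixed point of $\monFunGeneric$ on $[u, \top]$, which serves as the supremum of $S$ within the fixed point lattice. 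Infima are obtained dually by working in an interval of the form $[\bot, v]$.

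The main subtlety lies in this last step: suprema and infima taken \emph{within} the fixed point lattice need not coincide with those taken in the ambient $\set$, so one must genuinely re-apply the first part to a restricted problem rather than reason directly about $\lub{S}$ or $\glb{S}$. Everything else is routine interplay between monotonicity, the definitions of $\glb{}$ and $\lub{}$, and antisymmetry.
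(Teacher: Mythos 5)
Your proof is correct and is the classical Tarski argument in full: the prefixed-point infimum computation for $\lfp{\monFunGeneric}$, its dual for $\gfp{\monFunGeneric}$, the (co-)induction rules as immediate corollaries, and the interval-restriction trick for completeness of the fixed-point lattice (correctly flagging that suprema in the fixed-point poset need not agree with suprema in $\set$). The paper itself states \Cref{thm:knastertarski} as a standard preliminary without any pen-and-paper proof (its soundness is only re-established inside the \isabelle{} formalization), so there is no in-paper argument to compare against; your write-up supplies exactly the standard proof the paper implicitly relies on, and every step checks out.
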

Elements $a \in \set$ with $\monFunGeneric(a) \po a$ (or $a \po \monFunGeneric(a)$) are called \emph{\mbox{(co-)}inductive}.
%If for a monotone function $\monFunGeneric$ on a complete lattice we have $\lfp{\monFunGeneric} = \gfp{\monFunGeneric}$, then we say that $\monFunGeneric$ has a \emph{unique} fixed point $\fp{\monFunGeneric}$.

\begin{leitmotif}[Fixed Point Induction]
    \label{leitmotif:knasterTarski}
    We apply the theorem of Knaster-Tarski to monotone operators of the form $\monFunGeneric \colon \set^\states \to \set^\states$, where $\set$ is a complete lattice and $\states$ is finite, to certify upper bounds on $\lfp{\monFunGeneric}$ and lower bounds on $\gfp{\monFunGeneric}$.
    We call such $\monFunGeneric$ \emph{Bellman-type operators}.
\end{leitmotif}

Throughout the rest of the paper, we fix an MDP $\mdp = \mdptuple$, a set of target states $\target \subseteq \states$ and, for \Cref{sec:ExpRew}, a reward function $\rew\colon \states \rightarrow \RgeqZero$.
Moreover, we let $\opt\in \{\min, \max\}$ and write $\nmin = \max$ and $\nmax = \min$.

\section{Certifying Qualitative Reachability}
\label{sec:qualreachandsafe}

Most of the certificates presented in the forthcoming \Cref{sec:certificates,sec:ExpRew} enclose a certificate for a \emph{qualitative} reachability property, e.g., $\propt(\reach\target) > 0$ or $\propt(\reach\target) < 1$.
Our approach to this is summarized as follows:
\begin{leitmotif}[Ranking Functions]
    \label{leitmotif:rankingFunctions}
    To certify qualitative reachability properties, we rely on \emph{ranking functions} formalized via appropriate operators capturing certain \emph{distances} in the MDP when viewed as a graph.
\end{leitmotif}

\begin{definition}[Distance Operator]
    \label{def:distop}%
    Let $\mdptuple$, $\target$, and $\opt$ be the fixed MDP, target set and optimization direction, respectively. (We omit these quantifications in the rest of the paper.)
%    Let $\mdptuple$ be the fixed MDP, $\target \subseteq \states$, and $\opt \in \{\min, \max\}$.
    We define the following \emph{distance operator}:
    \begin{align*}
        \distop{\opt}{} \colon
        ~
        \exnats^\states \to \exnats^\states,
        ~
        \distop{\opt}{}(\rank)(s) =
        \begin{cases}
        	0 & \text{ if } s \in \target\\
            1 ~+~ \underset{a \in \act(s)}{\opt} ~ \underset{s' \in \post(s,a)}{\min}  \rank(s') & \text{ if } s \in \states\setminus\target
        \end{cases}
    \end{align*}
\end{definition}
$\distop{\opt}{}$ is a monotone Bellman-type operator on the complete lattice $(\exnats^\states, \leq)$ and thus has a least fixed point by \Cref{thm:knastertarski}.
In fact, we even have the following:
\begin{restatable}[Unique Fixed Point]{lemma}{distopuniquefp}%
    \label{lem:distopuniquefp}%
    $\distop{\opt}{}$ has a unique fixed point.
    \tw{proof: \Cref{app:distopuniquefp}}
\end{restatable}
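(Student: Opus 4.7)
The plan is to invoke Knaster-Tarski for existence and then prove uniqueness by a transfinite-flavoured induction on the least-fixed-point value. Concretely, let $\rank^{*} = \lfp{\distop{\opt}{}}$, which exists and is monotone-minimal among all fixed points of $\distop{\opt}{}$ on $(\exnats^{\states}, \leq)$ by \Cref{thm:knastertarski} (monotonicity of $\distop{\opt}{}$ is immediate from the definition, since $\opt$ and $\min$ over finite non-empty sets are monotone). For any other fixed point $\rank$ we already have $\rank^{*} \leq \rank$, so it suffices to show $\rank \leq \rank^{*}$ pointwise.

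The key step is to prove, by natural-number induction on $n$, the statement
\[
    \forall s \in \states \colon \rank^{*}(s) = n \implies \rank(s) \leq n.
\]
For $n = 0$ the assumption forces $s \in \target$ (otherwise $\distop{\opt}{}(\rank^{*})(s) \geq 1$), hence $\rank(s) = \distop{\opt}{}(\rank)(s) = 0$. For the inductive step with $\rank^{*}(s) = n+1$, observe that $s \notin \target$, so
\[
    n+1 \;=\; 1 + \underset{a \in \act(s)}{\opt} \underset{s' \in \post(s,a)}{\min} \rank^{*}(s').
\]
If $\opt = \min$, a witness action $a^{*}$ and successor $s^{*} \in \post(s,a^{*})$ with $\rank^{*}(s^{*}) = n$ give, via the induction hypothesis, $\rank(s) \leq 1 + \rank(s^{*}) \leq 1 + n$. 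If $\opt = \max$, then for \emph{every} $a \in \act(s)$ there is a successor $s_{a} \in \post(s,a)$ with $\rank^{*}(s_{a}) \leq n$; applying the induction hypothesis to each such $s_{a}$ bounds the inner minimum by $n$ uniformly, so $\rank(s) \leq 1 + n$ as well. Finally, states with $\rank^{*}(s) = \infty$ require no argument since $\rank(s) \leq \infty$ trivially.

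The main obstacle I foresee is the $\opt = \max$ case: one must reason simultaneously about every action and locate a ``short'' successor within each, rather than relying on a single witness. Aside from that, the argument is bookkeeping. When porting to \isabelle, the two cases will most likely be unified by writing the inductive step in terms of an auxiliary lemma stating ``if $\distop{\opt}{}(\rank^{*})(s) = n+1$, then $\opt_{a} \min_{s'} \rank(s') \leq n$ whenever the corresponding quantity for $\rank^{*}$ equals $n$,'' which isolates the $\min$/$\max$ interaction from the outer induction.
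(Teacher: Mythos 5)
Your proof is correct and follows essentially the same route as the paper's: the paper phrases the argument as a minimal-counterexample contradiction between $\lfp{\distop{\opt}{}}$ and $\gfp{\distop{\opt}{}}$ (picking a disagreeing state of minimal least-fixed-point value), which is exactly the contrapositive of your induction on $\rank^{*}(s)=n$, and your $\min$/$\max$ case split with the witness-action reasoning matches the paper's. One small nit: the $\opt=\max$ step applies the hypothesis to successors with $\rank^{*}(s_a)\le n$ rather than $=n$, so phrase the induction in strong form (or prove $\rank^{*}(s)\le n \implies \rank(s)\le\rank^{*}(s)$).
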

% [TW] manual \noindent because restatable causes indentation of next line
\noindent Intuitively, if $\rank = \fp{\distop{\min}{}}$, then $\rank(s)$ represents the length of a shortest path from every state $s \in \states$ to $\target$, or $\rank(s) = \infty$ if $\target$ is not reachable from $s$.
For $\rank = \fp{\distop{\max}{}}$, $\rank(s)$ can be seen as the shortest path in the DTMC induced by a strategy that aims to avoid $\target$ or reach it as late as possible.
%Intuitively, the fixed point $\rank = \fp{\distop{\opt}{}}$ characterizes the shortest path from every state to the target set, assuming the strategy aims to maximize (minimize) the length of this path.
%In particular, if the distance is $\infty$, reaching the target set can be avoided (is not possible).
%Note that maximizing (minimizing) the distance corresponds to trying to avoid (reach) the target set.
We formalize this intuition in \Cref{lem:distopequiv} (using the notation $\nmin = \max$ and $\nmax = \min$), and then in \Cref{prop:certQualReach} apply \Cref{leitmotif:knasterTarski} to certify positive reachability.
\begin{restatable}{lemma}{distopequiv}
	\label{lem:distopequiv}%
	Let $\rank = \fp{\distop{\opt}{}}$.
	Then for all $s \in \states$, $\rank(s) = \infty \iff \pr_s^{\nopt}(\lozenge \target) = 0$.
    \tw{proof: \Cref{app:rankandqualreach}}
\end{restatable}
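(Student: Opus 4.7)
The plan is to leverage the uniqueness of $\rank = \fp{\distop{\opt}{}}$ (\Cref{lem:distopuniquefp}) together with fixed point co-induction from \Cref{thm:knastertarski}. Both directions of the equivalence are handled uniformly in $\opt \in \{\min, \max\}$, with $\opt$-specific case splits at the end.

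For the ($\Rightarrow$) direction, I assume $\rank(s) = \infty$ and consider the set $R = \{s' \in \states \mid \rank(s') = \infty\}$. Since $\rank$ vanishes on $\target$, $R$ is disjoint from $\target$, and the fixed point equation forces $\opt_{a \in \act(s')} \min_{s'' \in \post(s',a)} \rank(s'') = \infty$ for every $s' \in R$. When $\opt = \min$, this means $\post(s', a) \subseteq R$ for \emph{every} $a \in \act(s')$; so no strategy can escape $R$, yielding $\pr_s^{\max}(\reach\target) = 0$. When $\opt = \max$, there is \emph{some} action $a_{s'}$ with $\post(s', a_{s'}) \subseteq R$, and the strategy $\strat(s') := a_{s'}$ keeps the induced DTMC trapped in $R$, so $\pr_s^\strat(\reach\target) = 0$ and hence $\pr_s^{\min}(\reach\target) = 0$.

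For the ($\Leftarrow$) direction, I suppose $\pr_s^{\nopt}(\reach\target) = 0$. A standard finite-MDP argument produces a set $R \ni s$ disjoint from $\target$ with a dual closure property: if $\opt = \min$, then $\post(s', a) \subseteq R$ for every $s' \in R$ and every $a \in \act(s')$; if $\opt = \max$, then each $s' \in R$ has some action $a_{s'} \in \act(s')$ with $\post(s', a_{s'}) \subseteq R$. I then define $\rank' \in \exnats^\states$ by $\rank'(s') = \infty$ on $R$ and $\rank'(s') = \rank(s')$ elsewhere, so $\rank' \geq \rank$ pointwise. The goal is to show $\rank'$ is \emph{co-inductive}, i.e., $\rank' \leq \distop{\opt}{}(\rank')$. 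By \Cref{thm:knastertarski} this gives $\rank' \leq \gfp{\distop{\opt}{}}$, and by uniqueness $\gfp{\distop{\opt}{}} = \rank$, whence $\rank(s') = \infty$ on all of $R$. Verifying co-inductivity splits into three routine cases: target states trivially satisfy equality at $0$; states in $R$ use the closure property to make the inner minimum equal $\infty$ for some (resp.\ every) action, matching $\opt$; remaining states $s'$ use $\rank'(s') = \rank(s')$ together with monotonicity of $\distop{\opt}{}$ applied to $\rank' \geq \rank$, giving $\distop{\opt}{}(\rank')(s') \geq \distop{\opt}{}(\rank)(s') = \rank(s')$.

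The subtle point, and the step I expect to be the main obstacle, is that $\rank'$ is generally \emph{not} a fixed point of $\distop{\opt}{}$: inflating values on $R$ to $\infty$ can also strictly raise $\distop{\opt}{}(\rank')(s')$ at some $s' \notin R$, so one cannot hope to reuse uniqueness directly. The right move is therefore to relax equality to the co-inductive inequality $\rank' \leq \distop{\opt}{}(\rank')$, which \emph{is} preserved under this inflation, and then let uniqueness of the fixed point close the gap by forcing $\rank' \leq \rank$, hence $\rank = \infty$ on $R$.
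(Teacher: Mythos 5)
Your proof is correct, but it takes a genuinely different route from the paper's. For the direction $\rank(s)=\infty \Rightarrow \pr_s^{\nopt}(\reach\target)=0$, the paper argues by contraposition: it first shows by induction on $n$ that positive step-bounded reachability $\prind{\strat}_s(\reach^{\leq n}\target)>0$ forces $\rank(s)\leq n$, and then uses finiteness of $\mdp$ to pick a suitable $n$ uniformly over strategies (for $\opt=\max$) or for a single witnessing strategy (for $\opt=\min$). You instead argue directly that the set $R$ of infinite-rank states is a trap (closed under all actions for $\opt=\min$, under some action for $\opt=\max$), which is cleaner and avoids step-bounded probabilities altogether. For the converse, the paper again works by contraposition, inducting on the finite value $\rank(s)=n$ to exhibit positive one-step progress and conclude $\pr_s^{\nopt}(\reach\target)>0$; you instead inflate $\rank$ to $\infty$ on the zero-probability region, verify that the result is co-inductive, and invoke fixed point co-induction (\Cref{thm:knastertarski}) together with uniqueness of the fixed point (\Cref{lem:distopuniquefp}) to force $\rank=\infty$ there. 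Your version is more in the spirit of the paper's own \Cref{leitmotif:knasterTarski} and correctly identifies the one subtlety---that the inflated function need only be co-inductive, not a fixed point---whereas the paper's appendix proof is a more elementary probabilistic induction that does not rely on \Cref{lem:distopuniquefp}. Both arguments are sound; yours trades the quantitative reachability inductions for order-theoretic reasoning plus the (already established) closure properties of the zero-probability states.
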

\begin{restatable}[Certificates for $\propt(\reach\target) \!>\! 0$]{proposition}{certQualReach}%
	\label{prop:certQualReach}%
%	Let $\mdptuple$ be an MDP, $\target \subseteq \states$, and $\opt \in \{\min,\max\}$.
    A function $\rank \in \exnats^\states$ is called a \emph{valid certificate for positive $\opt$-reachability} if $\distop{\nopt}{}(\rank) \!\leq\! \rank$.
	If $\rank$ is valid, then $\forall s \!\in\! \states \colon \rank(s) \!<\! \infty \!\implies\! \pr_s^{\opt}(\lozenge \target) \!>\! 0$.
    \tw{proof in \Cref{app:certQualReach}}
\end{restatable}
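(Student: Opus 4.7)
The plan is to obtain the statement as an almost immediate corollary of \Cref{thm:knastertarski}, \Cref{lem:distopuniquefp}, and \Cref{lem:distopequiv}, following the template laid out in \Cref{leitmotif:knasterTarski}: certify the bound by exhibiting an inductive element, then transfer the fixed-point bound into a reachability statement. Concretely, I would argue by contraposition on the implication $\rank(s) < \infty \implies \pr_s^{\opt}(\reach\target) > 0$, showing instead that any state in which $\opt$-reachability is zero must receive the value $\infty$ under any valid certificate.

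First, I would observe that $\distop{\nopt}{}$ is monotone on the complete lattice $(\exnats^\states, \leq)$ (this is a routine check from \Cref{def:distop}), so \Cref{thm:knastertarski} applies. By assumption $\distop{\nopt}{}(\rank) \leq \rank$, i.e., $\rank$ is inductive for $\distop{\nopt}{}$. The fixed point induction rule of \Cref{thm:knastertarski} then yields $\lfp{\distop{\nopt}{}} \leq \rank$ pointwise. \Cref{lem:distopuniquefp} tells us that $\distop{\nopt}{}$ has a unique fixed point, so this least fixed point coincides with $\fp{\distop{\nopt}{}}$, and hence $\fp{\distop{\nopt}{}}(s) \leq \rank(s)$ for every $s \in \states$.

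To finish, fix $s \in \states$ with $\rank(s) < \infty$. Then $\fp{\distop{\nopt}{}}(s) \leq \rank(s) < \infty$. I would apply \Cref{lem:distopequiv} with the roles of $\opt$ and $\nopt$ interchanged, noting $\overline{\overline{\opt}} = \opt$: that lemma, read for $\rank' := \fp{\distop{\nopt}{}}$, gives $\rank'(s) = \infty \iff \pr_s^{\opt}(\reach\target) = 0$. Taking the contrapositive, $\fp{\distop{\nopt}{}}(s) < \infty$ implies $\pr_s^{\opt}(\reach\target) > 0$, which is exactly the desired conclusion.

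I do not foresee any real obstacle here, as every ingredient has been established already; the proof is essentially a three-line chain \textbf{induction} + \textbf{uniqueness} + \textbf{semantic equivalence}. The only pitfall worth flagging is the dualization: a careless reader might be tempted to use $\distop{\opt}{}$ to certify $\opt$-reachability, whereas the correct pairing (made visible by \Cref{lem:distopequiv}) is that the $\nopt$-distance operator witnesses positive $\opt$-reachability. Keeping this dualization straight throughout is the one subtle point of the argument.
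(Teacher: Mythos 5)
Your proposal is correct and follows essentially the same route as the paper's own proof: apply Knaster--Tarski fixed point induction to the inductive $\rank$ to get $(\fp{\distop{\nopt}{}})(s) \leq \rank(s) < \infty$ (using \Cref{lem:distopuniquefp} to identify the least fixed point with the unique one), then conclude $\propt_s(\reach\target) > 0$ via \Cref{lem:distopequiv} instantiated at $\nopt$. Your remark on keeping the $\opt$/$\nopt$ dualization straight is exactly the one subtlety, and you handle it correctly.
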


\begin{example}%
    \label{ex:certPosReach}%
    Consider the MDP in \Cref{fig:exampleReach} on page~\pageref{fig:exampleReach}.
    The values on the bottom right of the states constitute a valid certificate $\rank$ for positive $\min$-reachability.
    To check that $\distop{\nmin}{}(\rank) = \distop{\max}{}(\rank)\leq \rank$ is indeed true, we verify the following:
    \[
        \distop{\max}{}(\rank)(s)
        ~=~
        1 + \max\Big\{\, \textcolor{gray}{\underbrace{\textcolor{black}{\min\{0,1,\infty\}}}_{\text{solid action}}} \,,\, \textcolor{gray}{\underbrace{\textcolor{black}{\phantom{\{}0\phantom{\}}}}_{\text{dashed action}}} \,\Big\}
        ~=~
        1 + 0
        ~\overset{\checkmark}{\leq}~
        1
        ~=~
        \rank(s) ~,
    \]
    and similar for $z$ and $t$.
    As $\rank(s), \rank(t) < \infty$, we conclude $\prmin_s(\lozenge \target), \prmin_t(\lozenge \target) > 0$.
\end{example}

\begin{remark}[Certificates for $\propt(\reach\target) \!=\! 0$]%
    \label{rem:certQualReachZero}%
    While we do not need this in our paper, it is instructive to
    notice that with \Cref{lem:distopuniquefp,lem:distopequiv} we can also certify \emph{zero reachability probability}:
    By Knaster-Tarski, any $\rank$ with $r \leq \distop{\nopt}{}(\rank)$ witnesses $\rank \leq \fp{\distop{\nopt}{}}$, hence if $r(s) = \infty$ for a state $s$, then $\propt_s(\reach\target)= 0$.
\end{remark}

\emph{Certificates for non-almost-sure (a.s.) reachability}, i.e., $\propt(\reach\target) \!<\! 1$, are needed in \Cref{sec:ExpRew}. Ranking function-based certificates for this property are---perhaps surprisingly---more involved.
In \Cref{def:distopmod} below we define a \emph{complementary distance operator} that captures (approximately) the distance to the states $Z$ from which $\target$ is avoided surely, i.e., $\propt_s(\reach\target) = 0$ for all $s\in Z$.
By \Cref{lem:distopequiv}, finite distance to $Z$ witnesses positive $\nopt$-reachability of $Z$ and thus non-a.s.\ $\opt$-reachability of $\target$.
A major complication is that $Z$ is not given explicitly.
We address this by (i) considering the \emph{least} fp, and (ii) letting the operator only increment the distance if two successors do not have the same rank. For this, we use \emph{Iverson bracket} notation: $[\varphi] = 1$ if $\varphi$ is true; $[\varphi] = 0$, else.
Together, (i) and (ii) ensure that $s \in \states$ has rank $0$ in the $\lfp{}$ if and only if $s \in Z$.
%We want to assign rank 0 to sink states, finite rank to all that do not a.s.\ reach the target set, and infinite rank to all targets and those that reach them a.s.
%However, there are two complications: 
%Firstly, the set of sink states is not given. We solve this by considering the least fixed point of the complementary distance operator, thus ensuring that their rank is 0.
%Secondly, when there are cycles in the MDP, a strategy maximizing the distance to the sink states can remain in a cycle to avoid reaching the sinks, essentially counting the rank to infinity.
%However, this strategy is suboptimal for reaching the target set, and under optimal strategies, the distance could be finite.
%To solve this, the complementary distance operator only increases the rank when it is not stuck in a cycle. Technically, we do this using an \emph{Iverson bracket} $[\varphi]$ that evaluates to $1$ if the enclosed Boolean condition $\varphi$ is satisfied, and to $0$ otherwise.
%In the following we let $1 + \infty = \infty$. 
% MW: This is not necessary. If there is a successor with finite rank, the rank stays finite. If all successors have infinite rank, the Iverson bracket evaluates to 0.
%
\begin{definition}[Complementary Distance Operator]%
	\label{def:distopmod}%
%	Let $\mdptuple$ be an MDP, $\target \subseteq \states$, and $\opt \in \{\min, \max\}$.
	We define the \emph{complementary distance operator} 
    $\distopmod{\opt} \colon \exnats^\states \to \exnats^\states$, with
	\begin{align*}
		\distopmod{\opt}(\rank)(s)
        ~=~
		\begin{cases}
			\infty & \hspace{-20mm} \text{ if } s \in \target\\[2pt]
			 \underset{a \in \act(s)}{\opt} \bigg( \underset{s' \in \post(s,a)}{\min}  \rank(s') \,+\, \big[\exists u,v \!\in\! \post(s,a)\colon \rank(u)\!\neq\!\rank(v)\big] \bigg)  & \\[-3pt]
             & \hspace{-20mm} \text{ if } s \in \states\setminus\target
		\end{cases}
	\end{align*}
\end{definition}
Note that unlike the distance operator $\distop{\opt}{}$ from \Cref{def:distop}, $\distopmod{\opt}$ \emph{does not have a unique $\fp{}$}: The constant $\rank = \vec\infty$ is always a trivial fixed point.
\begin{restatable}{lemma}{distopmodEquiv}%
	\label{lem:distopmodequiv}%
	Let $\rank = \lfp{\distopmod{\opt}}$.
	Then for all $s \in \states$, $\rank(s) = \infty \iff \pr_s^{\opt}(\lozenge \target) = 1$.
\end{restatable}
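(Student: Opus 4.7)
The plan is to prove both implications via Knaster-Tarski (\Cref{thm:knastertarski}), after first establishing that the rank-$0$ stratum of $\rank = \lfp{\distopmod{\opt}}$ coincides with $U_0 := \{s \mid \propt_s(\reach\target) = 0\}$. The inclusion $U_0 \subseteq \{\rank = 0\}$ follows by taking the pre-fp candidate $\rank'(s) = 0$ on $U_0$ and $\rank'(s) = \infty$ elsewhere; a short case split shows $\distopmod{\opt}(\rank') \leq \rank'$ (for $\opt = \min$ this uses a witness strategy for $\prmin_s = 0$; for $\opt = \max$ no action can leave $U_0$), so $\rank \leq \rank'$. The reverse inclusion is direct: $\rank(s) = 0$ forces an $\opt$-achieving action with \emph{all} successors at rank $0$ (since the Iverson bracket is $0$ iff the successor ranks agree), and iterating yields a strategy that keeps us in $\{\rank = 0\} \subseteq \states \setminus \target$ forever, implying $\propt_s(\reach\target) = 0$.

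For the direction $\propt_s(\reach\target) = 1 \implies \rank(s) = \infty$, I plan to argue by contradiction. Let $R := \{s \mid \propt_s(\reach\target) = 1\}$ and suppose $R \cap \{\rank < \infty\}$ is non-empty; pick $s^*$ minimizing $\rank$ within it and set $k := \rank(s^*) > 0$ (positivity holds since $R \cap U_0 = \emptyset$). Fix a strategy $\strat^*$ witnessing $\prind{\strat^*}_{s^*}(\reach\target) = 1$ (for $\opt = \max$, an almost-sure optimal strategy; for $\opt = \min$, any strategy extending an $\opt$-achieving choice at $s^*$, since every strategy is almost-surely reaching). A routine Markov-chain argument shows that every state reachable from $s^*$ in $\mdp^{\strat^*}$ still lies in $R$. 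Unfolding the fp equation at $s^*$ then gives two sub-cases: (b) the min-successor has rank $k-1$, which would produce an $R$-state of rank $<k$, contradicting minimality; or (a) all successors lie at rank $k$. Case (a) forces $\strat^*$ to keep us inside the slice $R \cap \{\rank = k\}$ forever, but this set is disjoint from $\target$ (which has rank $\infty$), contradicting $\prind{\strat^*}_{s^*}(\reach\target) = 1$.

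For the direction $\rank(s) < \infty \implies \propt_s(\reach\target) < 1$, I plan to exhibit a pre-fp $d$ of $\distopmod{\opt}$ that is finite precisely on $\{s \mid \propt_s(\reach\target) < 1\}$, so that Knaster-Tarski yields $\rank \leq d$ and hence $\rank(s) < \infty$ there. For $\opt = \min$, take $d(s)$ to be the length of a shortest MDP-graph path from $s$ to $U_0$, with $d = 0$ on $U_0$ and $d = \infty$ on $\target$; its finiteness on $\{\prmin < 1\}$ is the textbook qualitative characterization. For $\opt = \max$, take $d(s) = 1 + \max_a \min_{s' \in \post(s,a)} d(s')$ outside $U_0 \cup \target$ with the same boundary values; this captures that the max-player cannot avoid $U_0$ under any strategy from $s \in \{\prmax < 1\}$. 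In either case the pre-fp inequality $\distopmod{\opt}(d) \leq d$ reduces to the estimate $\min + [\text{diff}] \leq \min + 1 \leq d(s)$. The principal obstacle I anticipate is the max-case identity $\{d < \infty\} = \{\prmax < 1\}$, which is standard but must be imported or re-derived (potentially via a ranking argument in the spirit of \Cref{prop:certQualReach}) to keep the \isabelle formalization self-contained.
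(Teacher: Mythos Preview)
Your rank-$0$ characterization and the soundness direction ($\propt_s = 1 \Rightarrow \rank(s) = \infty$) are essentially correct, though presented differently from the paper's induction-on-rank argument. One imprecision: for $\opt = \max$, the case split should be taken at the action $\strat^*(s^*)$ (where the value is $\leq k$ since the $\max$ over all actions equals $k$), not at the $\max$-achieving action; otherwise the rank-$(k{-}1)$ successor in case (b) need not lie in $R$. With that fix, the argument goes through.

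The real gap is in the completeness direction for $\opt = \max$. Your proposed pre-fixed point $d(s) = 1 + \max_a \min_{s' \in \post(s,a)} d(s')$ (with $d = 0$ on $U_0$ and $d = \infty$ on $\target$) does \emph{not} satisfy $\{d < \infty\} = \{\prmax < 1\}$. Consider $\states = \{s,t,u\}$, $\target = \{t\}$, where $u$ self-loops (so $u \in U_0$), and $s$ has two actions: $a$ with $\post(s,a) = \{s\}$ and $b$ with $\post(s,b) = \{t,u\}$ uniformly. Then $\prmax_s(\reach\target) = \tfrac12 < 1$, but the recursion forces $d(s) = 1 + \max(d(s), 0) = \infty$, whereas the true $\lfp$ gives $\rank(s) = 1$. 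Your $d$ encodes ``the maximizer cannot avoid $U_0$,'' but that is strictly stronger than $\prmax(\reach\target) < 1$: the maximizer \emph{can} avoid $U_0$ (by self-looping) while still failing to reach $\target$ almost surely. So the identity you flag as ``standard but must be imported'' is in fact false for this $d$.

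This is precisely why the paper's completeness proof for $\opt = \max$ is so involved: it does not exhibit a single closed-form pre-fixed point, but instead argues by contradiction, locating a bottom MEC $Y$ (disjoint from $\target$) inside the set of rank-$\infty$ states reachable from $s$, and then constructing a strictly smaller pre-fixed point by setting all of $Y$ to a finite value $n{+}1$ determined by the ranks of $Y$'s exit successors. The Iverson bracket is exactly what makes this work---it caps the rank inside the end component rather than letting it count to $\infty$. Your approach would need a fundamentally different $d$ (or a similar MEC-based contradiction) to close this case.
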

\begin{restatable}[Certificates for $\propt(\reach\target) \!<\! 1$]{proposition}{certQualReachTwo}%
	\label{prop:certQualReachTwo}%
%	Let $\mdptuple$ be an MDP, $\target \subseteq \states$, and $\opt \in \{\min,\max\}$.
	A function $\rank \!\in\! \exnats^\states$ is called a \emph{valid certificate for non-a.s.\ $\opt$-reachability} if $\distopmod{\opt}(\rank) \!\leq\! \rank$.
    If $\rank$ is valid, then $\forall s \!\in\! \states \colon \rank(s) \!<\! \infty \!\implies\! \pr_s^{\opt}(\lozenge \target) \!<\! 1$.
    \tw{proof: \Cref{app:certQualReachTwo}}
\end{restatable}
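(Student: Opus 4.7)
The plan is a two-step argument that essentially routes the claim through the Knaster-Tarski induction principle (\Cref{leitmotif:knasterTarski}) and \Cref{lem:distopmodequiv}, which are the real workhorses.

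First, I would observe that $\distopmod{\opt}$ is a monotone operator on the complete lattice $(\exnats^\states, \leq)$: checking this reduces to verifying that both summands inside the $\opt$ over $a \in \act(s)$ are monotone in $\rank$. The term $\min_{s' \in \post(s,a)} \rank(s')$ is clearly monotone, and for the Iverson bracket $[\exists u,v \in \post(s,a) \colon \rank(u) \neq \rank(v)]$ one needs a slightly more careful case analysis: if the predecessor $\rank$ has all successor-values equal and $\rank'$ is pointwise larger, then either $\rank'$ also makes them equal (bracket stays $0$, but $\min_{s'} \rank'(s')$ is at least as large as before) or some successor increased more than another (bracket becomes $1$, and again the total only grows). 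In all cases $\distopmod{\opt}(\rank)(s) \leq \distopmod{\opt}(\rank')(s)$. Hence $\lfp{\distopmod{\opt}}$ exists by Knaster-Tarski.

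Second, I apply the fixed point induction rule: from the assumed inequality $\distopmod{\opt}(\rank) \leq \rank$ the theorem gives $\lfp{\distopmod{\opt}} \leq \rank$ pointwise. Now suppose $s \in \states$ satisfies $\rank(s) < \infty$. Then $\lfp{\distopmod{\opt}}(s) \leq \rank(s) < \infty$, so by the \enquote{$\Leftarrow$}-direction of the equivalence in \Cref{lem:distopmodequiv} (contrapositively: $\lfp{\distopmod{\opt}}(s) \neq \infty$ implies $\pr_s^{\opt}(\lozenge\target) \neq 1$), we conclude $\pr_s^{\opt}(\lozenge\target) < 1$ as desired.

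Thus the proof itself is short and mechanical once \Cref{lem:distopmodequiv} is in hand. The genuinely hard part sits in that lemma, not in this proposition: one must argue that the least fixed point of $\distopmod{\opt}$ correctly captures \enquote{distance to the set $Z$ of surely-avoiding states} even though $Z$ is not given a priori. The design of $\distopmod{\opt}$ achieves this via two interacting features—using the \emph{least} fixed point and only charging a $+1$ when successors under some action disagree in rank—so that rank $0$ is forced precisely on states where $\target$ is surely avoided. Any subtlety in the current proposition (e.g., worrying about the trivial fixed point $\vec\infty$) is handled for free: the induction principle compares against the least fixed point, not an arbitrary one, and the hypothesis $\rank(s) < \infty$ is exactly what is needed to rule out the trivial case.
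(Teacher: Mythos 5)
Your main argument is exactly the paper's: from $\distopmod{\opt}(\rank) \leq \rank$, Knaster--Tarski fixed point induction gives $\lfp{\distopmod{\opt}} \leq \rank$, so $\rank(s) < \infty$ forces $(\lfp{\distopmod{\opt}})(s) < \infty$, and \Cref{lem:distopmodequiv} then yields $\propt_s(\reach\target) < 1$. You also correctly locate the real difficulty in \Cref{lem:distopmodequiv} rather than in this proposition.

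The one weak spot is your monotonicity sketch. Your case analysis starts from ``the predecessor $\rank$ has all successor-values equal,'' i.e.\ the smaller function's Iverson bracket is $0$, and then branches on what the larger function $\rank'$ does. But the only case where monotonicity is actually at risk is the one you do not treat: the smaller function's bracket is $1$ while the larger function's bracket is $0$, so the comparison is $\min_{s' \in \post(s,a)} \rank(s') + 1$ against $\min_{s' \in \post(s,a)} \rank'(s') + 0$. Closing this case needs the integrality of ranks: if some two successors disagree under $\rank$, then the minimizing successor $t$ satisfies $\rank(t) + 1 \leq \rank(t')$ for a disagreeing $t'$, and since all $\rank'$-values of successors coincide, $\rank(t') \leq \rank'(t') = \min_{s' \in \post(s,a)} \rank'(s')$. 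This is precisely the step the paper isolates (its ``Case 2b''), and it is where the design choice of $\exnats$-valued ranks pays off; as written, your ``in all cases the total only grows'' is asserted but not established for this case. Since monotonicity is only needed to make $\lfp{\distopmod{\opt}}$ well defined and is proved separately in the paper, this is a repairable gap in an auxiliary claim rather than a flaw in your route through the proposition itself.
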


\begin{remark}[Certificates for $\propt(\reach\target) \!=\! 1$]%
    Since $\distopmod{\opt}$ does \emph{not} have a unique fp, we cannot use the trick from \Cref{rem:certQualReachZero} to certify $\propt(\reach\target) \!=\! 1$ with ranking functions.
    Sections \ref{sec:certreach:lowermin} and \ref{sec:certreach:lowermax} present certificates for general lower bounds.
\end{remark}

\section{Certificates for Quantitative Reachability}
\label{sec:certificates}

This section presents our certificates for bounds on minimal and maximal reachability probabilities (\Cref{fig:certoverview}).
They are characterized via a \emph{Bellman operator}:

\begin{definition}[Bellman Operator for Reachability]
    \label{def:bellmanReach}%
%    Let $\mdptuple$ be an MDP, $\target \subseteq \states$, and $\opt\in \{\min, \max\}$.
    We define the Bellman operator for reachability
    $\bellmanopt \colon [0,1]^\states \to [0,1]^\states$ as usual:
    \begin{align*}
        \bellmanopt(\vals)(s)
        ~=~
        \begin{cases}
            1 & \text{ if } s \in \target\\
            \underset{a \in \act(s)}{\opt} \sum\limits_{s' \in \post(s,a)} \prmdp(s, a, s') \cdot \vals(s') & \text{ if } s \in \states \setminus \target 
        \end{cases}
    \end{align*}
\end{definition}
Similar to $\distop{\opt}{}$ from \Cref{sec:qualreachandsafe}, $\bellmanopt$ is a monotone function on the complete lattice $([0,1]^\states, \leq)$.
Thus, $\bellmanopt$ has a least fixed point by \Cref{thm:knastertarski}.

\begin{theorem}[{\cite[Sec. 3.5]{DBLP:conf/spin/ChatterjeeH08}}]
    \label{thm:reachislfp}
    For all $s \in \states$, $(\lfp{\bellman{\opt}})(s) = \propt_s(\lozenge \target)$.
\end{theorem}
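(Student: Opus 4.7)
The plan is to split the equality $\lfp{\bellmanopt} = \propt(\reach\target)$ into its two inequalities and prove each using \Cref{thm:knastertarski}.

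For $\lfp{\bellmanopt} \leq \propt(\reach\target)$, I would verify that the function $s \mapsto \propt_s(\reach\target)$ is itself a fixed point of $\bellmanopt$; minimality of the least fixed point then yields the inequality. This fixed-point property is the classical Bellman optimality equation for reachability: the target case is immediate, and for $s \notin \target$ a one-step unfolding of $\prind{\strat}_s(\reach\target)$ followed by optimizing over $\strat$ (with $\opt$ attained on the finite set $\act(s)$) yields $\propt_s(\reach\target) = \opt_{a \in \act(s)} \sum_{s' \in \post(s,a)} \prmdp(s,a,s') \cdot \propt_{s'}(\reach\target)$.

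For the reverse inequality $\propt(\reach\target) \leq \lfp{\bellmanopt}$, I would prove the stronger statement that $\propt(\reach\target) \leq \vals$ for every inductive $\vals$ (i.e., $\bellmanopt(\vals) \leq \vals$); applying this with $\vals = \lfp{\bellmanopt}$ finishes the proof. The workhorse is the DTMC case, which I would establish directly: the bounded-horizon values $p_n(s) := \prind{\strat}_s(\reach^{\leq n}\target)$ satisfy the recursion $p_{n+1} = \bellmanind{\strat}(p_n)$, so a straightforward induction on $n$ using monotonicity of $\bellmanind{\strat}$ yields $p_n \leq \vals$, and monotone convergence $p_n \uparrow \prind{\strat}_s(\reach\target)$ then delivers $\prind{\strat}(\reach\target) \leq \vals$. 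To lift this to MDPs for $\opt = \max$, I would observe that for \emph{every} strategy $\strat$ the Bellman operator of the induced DTMC satisfies $\bellmanind{\strat}(\vals) \leq \bellmanmax(\vals) \leq \vals$, so $\vals$ remains inductive in $\mdp^\strat$; taking the supremum over $\strat$ of $\prind{\strat}(\reach\target) \leq \vals$ yields $\prmax(\reach\target) \leq \vals$.

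The main obstacle is the $\opt = \min$ case, where the strategy-by-strategy approach used for $\max$ breaks because $\prmin_s(\reach\target) = \inf_\strat \sup_n \prind{\strat}_s(\reach^{\leq n}\target)$ need not equal the naive bounded-horizon bound $\sup_n \inf_\strat \prind{\strat}_s(\reach^{\leq n}\target)$, so one cannot simply exchange $\inf_\strat$ with $\sup_n$. I would resolve this by exhibiting a single memoryless witness strategy $\strat(s) \in \argmin_{a \in \act(s)} \sum_{s' \in \post(s,a)} \prmdp(s,a,s') \vals(s')$, which is attained because $\act(s)$ is finite. For this particular $\strat$ one has $\bellmanind{\strat}(\vals) = \bellmanmin(\vals) \leq \vals$, so the DTMC case yields $\prind{\strat}(\reach\target) \leq \vals$, and hence $\prmin(\reach\target) \leq \prind{\strat}(\reach\target) \leq \vals$, concluding the proof.
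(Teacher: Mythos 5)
Your proof is correct, and it takes a genuinely different route from the paper, which in fact offers no pen-and-paper argument at all: it cites \cite{DBLP:conf/spin/ChatterjeeH08} and remarks in the appendix only that the \isabelle formalization was surprisingly complicated because it ``requires proving continuity of the Bellman operator''---i.e., the formal proof goes through the Kleene characterization $\lfp{\bellmanopt} = \sup_n (\bellmanopt)^n(\vec 0)$ and identifies the iterates with step-bounded optimal reachability probabilities, which is exactly where the $\inf_\strat$/$\sup_n$ exchange you worry about must be confronted. Your argument stays entirely within \Cref{thm:knastertarski} and avoids continuity altogether, and the direction $\propt(\reach\target) \leq \lfp{\bellmanopt}$ is fully self-contained and sound: the recursion $p_{n+1} = \bellmanind{\strat}(p_n)$, the induction against an inductive $\vals$ (note the base case $p_0 \leq \vals$ silently uses $\vals(s) = 1$ for $s \in \target$, which does follow from $\bellmanopt(\vals) \leq \vals$), and in particular the $\argmin$ witness strategy for the $\opt = \min$ case, which is precisely the paper's Guiding Principle~\ref{leitmotif:witnessStrat} and mirrors the technique in the proof of \Cref{thm:certlower-nostrat}. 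What your route costs you is the other direction: you delegate $\lfp{\bellmanopt} \leq \propt(\reach\target)$ to the ``classical'' Bellman optimality equation, and this is where the real subtleties of the theorem hide---showing $\propt(\reach\target)$ is a (pre-)fixed point by ``one-step unfolding and optimizing over $\strat$'' requires stitching together per-successor optimal strategies after a first action, which a priori yields a history-dependent strategy, so one needs memoryless-deterministic sufficiency (or a definition of $\propt$ over general strategies plus a transfer argument). Since the paper itself treats the whole theorem as a citation, invoking the optimality equation is a defensible division of labor, but be aware that in a fully rigorous (or formalized) development that step is not free, and it is essentially why the continuity-based route exists.
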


We stress that $\bellmanopt$ has multiple fixed points in general.
For instance, $\vals = \vec 1$ is always a trivial fixed point.
\Cref{thm:reachislfp} states that the reachability probabilities are characterized as the \emph{least} fixed point. % (as usual).
    
\subsection{Upper Bounds on Optimal Reachability Probabilities}
\label{sec:certreach:upper}

Following \Cref{leitmotif:knasterTarski}, we obtain the following by \Cref{thm:reachislfp}:

\begin{proposition}[Certificates for Upper Bounds on $\propt(\reach\target)$]
    \label{thm:certupper}%
%    \newline
%    Let $\mdptuple$ be an MDP,
%    $\target \subseteq \states$ a target set,
%    and $\opt \in \{\min, \max\}$.
    A probability vector $\vals \in [0,1]^\states$ satisfying $\bellmanopt(\vals) \leq \vals$ is a \emph{valid certificate for upper bounds on $\opt$-reachability}.
    If $\vals$ is valid, then $\forall s \in \states \colon \propt_s(\lozenge \target) \leq \vals(s)$.
\end{proposition}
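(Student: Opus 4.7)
The plan is to invoke the machinery set up in the preliminaries almost verbatim: this proposition is the cleanest instantiation of Guiding Principle~\ref{leitmotif:knasterTarski} in the paper. The essential ingredients are already available --- $\bellmanopt$ acts on the complete lattice $([0,1]^\states, \leq)$, and by Theorem~\ref{thm:reachislfp} its least fixed point coincides pointwise with $\propt(\reach\target)$ --- so the argument reduces to stringing them together via Knaster-Tarski's fixed point induction rule.

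First I would note (or briefly verify) that $\bellmanopt$ is monotone: for $s \in \target$ the value is the constant $1$, while for $s \in \states \setminus \target$ it is an $\opt$ over $a \in \act(s)$ of a non-negative linear combination of the $\vals(s')$, and both $\min/\max$ and such linear combinations preserve the pointwise order on $[0,1]^\states$. Second, I would apply the fixed point induction rule from Theorem~\ref{thm:knastertarski}: from the certificate condition $\bellmanopt(\vals) \leq \vals$ we immediately conclude $\lfp{\bellmanopt} \leq \vals$ in the pointwise order. Third, I would chain this with Theorem~\ref{thm:reachislfp} to get, for every $s \in \states$,
\[
    \propt_s(\reach\target) ~=~ (\lfp{\bellmanopt})(s) ~\leq~ \vals(s),
\]
which is exactly the claim.

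There is no genuine obstacle here. The entire content lies in quoting the two preceding results in the correct order, so I would keep the proof to a few lines, deliberately treating it as a template case that fixes the pattern used in the more elaborate statements of Sections~\ref{sec:qualreachandsafe} (for qualitative reachability via $\distop{\opt}{}$ and $\distopmod{\opt}$) and the forthcoming sections on lower bounds and expected rewards. The only bookkeeping point worth emphasizing is that the complete lattice in play is $([0,1]^\states, \leq)$ with the pointwise extension of the order on $[0,1]$, as fixed in the preliminaries.
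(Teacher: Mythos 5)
Your proposal is correct and matches the paper's argument exactly: the paper derives this proposition directly from monotonicity of $\bellmanopt$, the fixed point induction rule of Knaster--Tarski (\Cref{thm:knastertarski}), and the characterization $\lfp{\bellmanopt} = \propt(\reach\target)$ from \Cref{thm:reachislfp}, with no further proof given. Nothing is missing.
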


\begin{example}
    \label{ex:certMinUpper}
    We verify that the numbers $\vals$ above the states in \Cref{fig:exampleReach} on \cpageref{fig:exampleReach} are a valid certificate for upper bounds on $\min$-reachability:
    For $s$ we check
    \[
        \bellmanmin(\vals)(s)
        ~=~
        \min \left\{ \tfrac 1 3 \cdot \textcolor{\colMinUpper}{0} + \tfrac 1 3 \cdot \textcolor{\colMinUpper}{\tfrac 1 2} + \tfrac 1 3 \cdot \textcolor{\colMinUpper}{1} ~,~ 1 \cdot \textcolor{\colMinUpper}{1} \right\}
        ~=~
        \min \left\{ \tfrac 1 2 , 1 \right\}
        ~\overset{\checkmark}{\leq}~
        \tfrac 1 2 
        ~=~
        \vals(s)
        ~,
    \]
    and similar for $z$ and $t$.
    Thus \Cref{thm:certupper} yields $\prmin_{s}(\reach\target) \leq \frac 1 2$.
    This particular certificate remains valid when changing $x(s)$ to any probability in $[\frac 1 2, 1]$.
    In general, however, increasing individual values may break inductivity. % of the value vector $\vals$.
\end{example}

\subsection{Lower Bounds on Minimal Reachability Probabilities}
\label{sec:certreach:lowermin}

With \Cref{thm:knastertarski}, we can only certify lower bounds on \emph{greatest} fixed points.
Lower bounds on reachability probabilities---which constitute the \emph{least} fixed point of $\bellmanopt$---are thus more involved.\ifarxivelse{\footnote{Or, as the authors of \cite{DBLP:journals/pacmpl/HarkKGK20} put it: \enquote{Aiming low is harder.}}}{}
We propose to tackle this situation as follows:

\begin{leitmotif}[Modified Bellman Operators]
    \label{leitmotif:uniqueFixedPoints}
    We often \emph{modify} a basic Bellman-type operator to restrict its set of fixed points and enforce a certain extremal (i.e., least or greatest) fixed point of interest.
\end{leitmotif}
%
%Notice that $\bellmanopt$ really does not have a unique fixed point in general as $\gfp{\bellmanopt}$ is always the trivial all-ones vector.
%Intuitively, $\vals$ being a fixed point of $\bellmanopt$ only means that the value of each state is \enquote{consistent} with its successors' values, but the latter could be wrong:
%For instance, a sink state which only has itself as a successor (and is not in $\target$) can make itself believe that it has any value.

We now focus on $\min$-reachability first and modify $\bellmanmin$ as follows:
\begin{align*}
    \modbellmanmin \colon
    [0,1]^\states \to [0,1]^\states,
    ~
    \modbellmanmin(\vals)(s)
    =
    \begin{cases}
        \bellmanmin(\vals)(s) & \text{ if } \prmin_s(\lozenge\target) > 0 \\
        0 & \text{ if } \prmin_s(\lozenge\target) = 0 
    \end{cases}
\end{align*}

\begin{lemma}[{Unique Fixed Point~\cite[Thm.~10.109]{principles}}]
    \label{thm:uniquefp}
    $\modbellmanmin$ has a unique fixed point $\fp{\modbellmanmin} = \lfp{\bellmanmin}$.
\end{lemma}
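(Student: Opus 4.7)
The plan is to establish both (a) that $\lfp{\bellmanmin}$ is a fixed point of $\modbellmanmin$, and (b) that it is the only one. Part (a) is essentially by definition: by \Cref{thm:reachislfp}, $\lfp{\bellmanmin}(s) = \prmin_s(\reach\target)$. On any $s$ with $\prmin_s(\reach\target) > 0$, $\modbellmanmin$ agrees with $\bellmanmin$ and $\lfp{\bellmanmin}$ is fixed by the latter; on the complement, both $\modbellmanmin(\lfp{\bellmanmin})(s)$ and $\lfp{\bellmanmin}(s)$ equal zero by definition.

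For uniqueness, take any fixed point $\vals$ of $\modbellmanmin$ and set $Z = \{s : \prmin_s(\reach\target) = 0\}$. I would first show that $\vals$ is in fact a fixed point of $\bellmanmin$. The only nontrivial case is $s \in Z \setminus \target$: applying \Cref{lem:distopequiv} to $\distop{\max}{}$ shows that $s \in Z$ forces the existence of an action $a \in \act(s)$ with $\post(s,a) \subseteq Z$, so $\bellmanmin(\vals)(s) \leq \sum_{s' \in Z} \prmdp(s,a,s') \cdot 0 = 0 = \vals(s)$, and non-negativity of $\vals$ forces equality. Thus $\vals$ is a fixed point of $\bellmanmin$, whence $\lfp{\bellmanmin} \leq \vals$ by \Cref{thm:knastertarski}.

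The harder direction $\vals \leq \lfp{\bellmanmin}$ uses a witnessing strategy. I would fix a memoryless deterministic strategy $\strat^*$ attaining the pointwise minimum $\prmin(\reach\target)$, so that $\pr^{\strat^*}(\reach\target) = \lfp{\bellmanmin}$ in $\mdp^{\strat^*}$, and note $\vals = \bellmanmin(\vals) \leq \bellman{\strat^*}(\vals)$ holds by definition of $\min$. Partition $\states = \target \sqcup Z \sqcup \mathrm{Tr}$ with $\mathrm{Tr} = \states \setminus (\target \cup Z)$. Optimality of $\strat^*$ implies every BSCC of $\mdp^{\strat^*}$ that avoids $\target$ lies entirely in $Z$ (otherwise some state in $\mathrm{Tr}$ would have zero reachability under $\strat^*$, contradicting $\prmin = \pr^{\strat^*}$), so $\mathrm{Tr}$ is transient; hence $A := \prmdp^{\strat^*}|_{\mathrm{Tr}\times\mathrm{Tr}}$ has spectral radius $<1$, and $(I-A)^{-1} = \sum_{k \geq 0} A^k \geq 0$. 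Using $\vals|_\target = \vec 1$ and $\vals|_Z = \vec 0$ to collapse the absorbing contributions, the Bellman inequality restricts on $\mathrm{Tr}$ to $(I-A)\,\vals|_{\mathrm{Tr}} \leq b$ with $b(s) = \sum_{s' \in \target} \prmdp(s,\strat^*(s),s')$. Left-multiplying by the non-negative $(I-A)^{-1}$ yields $\vals|_{\mathrm{Tr}} \leq (I-A)^{-1}b = \lfp{\bellmanmin}|_{\mathrm{Tr}}$. Combined with the trivial agreement on $\target$ and $Z$, this closes the argument.

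The main obstacle is this last linear-algebra step, and specifically the interplay between $\strat^*$ and $Z$: it is crucial that $Z$ coincides with the union of BSCCs of $\mdp^{\strat^*}$ that avoid $\target$, which is what allows exploiting $\vals|_Z = 0$ to eliminate absorbing mass and reduce to an invertible transient system on $\mathrm{Tr}$. An alternative would be to invoke the corresponding statement in Baier--Katoen directly, but unfolding the argument this way keeps the reasoning internal to the fixed-point framework already developed in the paper.
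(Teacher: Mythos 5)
Your proof is correct. Note that the paper gives no pen-and-paper proof of this lemma at all: it is imported from the literature (Baier--Katoen, Thm.~10.109) and, per the authors' stated policy, re-proved only inside the \isabelle formalization. What you have written is essentially a self-contained reconstruction of that textbook argument, organized into the two standard halves: (i) every fixed point $\vals$ of $\modbellmanmin$ is also a fixed point of $\bellmanmin$ --- your use of \Cref{lem:distopequiv} to extract, for each $s \in Z$, an action $a$ with $\post(s,a) \subseteq Z$ is a nice way to keep this step internal to the paper's own machinery --- which yields $\lfp{\bellmanmin} \leq \vals$; and (ii) the reverse inequality via a uniformly min-optimal memoryless deterministic strategy $\strat^*$ together with the invertibility of $I-A$ on $\mathrm{Tr} = \states \setminus (\target \cup Z)$ and the non-negativity of $(I-A)^{-1}$. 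Both halves are sound, and the identification $(I-A)^{-1}b = \prmin(\reach\target)|_{\mathrm{Tr}}$ is the standard linear-system characterization for the induced DTMC, since $Z$ is exactly its set of zero-probability states. One small wording caveat in (ii): $\mathrm{Tr}$ is not literally the set of transient states of $\mdp^{\strat^*}$ --- it may intersect BSCCs that contain target states, whose states are recurrent in $\mdp^{\strat^*}$ proper. The correct justification is that no BSCC of $\mdp^{\strat^*}$ is \emph{contained} in $\mathrm{Tr}$ (BSCCs avoiding $\target$ lie in $Z$, as you argue, and all others meet $\target$), which is what gives $\rho(A) < 1$ once $\target$ is treated as absorbing. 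This is a presentational point only and does not affect correctness.
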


By \Cref{thm:uniquefp} and \Cref{thm:reachislfp}, any probability vector $\vals \leq \modbellmanmin(\vals)$ witnesses that $\vals(s) \leq \prmin_s(\lozenge \target)$ for all $s \in \states$.
However, evaluating $\modbellmanmin(\vals)$ is not straightforward as it requires determining, for each $s \in \states$ whether $\prmin_{s}(\reach\target) > 0$.
Hence, we include an additional certificate for positive reachability from \Cref{sec:qualreachandsafe}:

\begin{restatable}[Certificates for Lower Bounds on $\prmin(\reach\target)$]{proposition}{certlower}%
    \label{thm:certlower}%
    \newline
%    Let $\mdptuple$ be an MDP and $\target \subseteq \states$.
    A tuple of probability vector and ranking function $(\vals, \rank) \in [0,1]^\states \times \exnats^\states$ is a \emph{valid certificate for lower bounds on $\min$-reachability} if 
    \begin{align*}
        1)~\distop{\max}{}(\rank) \leq \rank,
        \quad
        2)~\vals \leq \bellmanmin(\vals),
        \quad
        3)~\forall s \in \states\setminus\target \colon \vals(s) > 0 \implies \rank(s) < \infty.
    \end{align*}
%    \begin{enumerate}[1)]
%        \item $\distop{\max}{}(\rank) \leq \rank$, 
%        \item $\vals \leq \bellmanmin(\vals)$, \hfill \textcolor{gray}{(Note: This is \underline{not} the modified Bellman operator!)}
%        \item  $\forall s \in \states\setminus\target \colon \vals(s) > 0 \implies \rank(s) < \infty$,. 
%    \end{enumerate}
    If $(\vals, \rank)$ is valid, then $\forall s \in \states \colon \prmin_s(\lozenge \target) \geq \vals(s)$.\\
\end{restatable}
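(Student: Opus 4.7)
The plan is to apply Guiding Principle~\ref{leitmotif:uniqueFixedPoints}: leverage the modified Bellman operator $\modbellmanmin$ (whose unique fixed point coincides with $\lfp{\bellmanmin}$ by Lemma~\ref{thm:uniquefp}) and prove that any valid certificate $(\vals, \rank)$ is \emph{co-inductive} with respect to $\modbellmanmin$, i.e., $\vals \leq \modbellmanmin(\vals)$. Once this is shown, the theorem of Knaster-Tarski (\Cref{thm:knastertarski}) yields $\vals \leq \gfp{\modbellmanmin}$; since $\modbellmanmin$ has a \emph{unique} fixed point equal to $\lfp{\bellmanmin}$, and Theorem~\ref{thm:reachislfp} identifies this fixed point with the minimal reachability probability, the conclusion $\vals(s) \leq \prmin_s(\reach\target)$ for all $s \in \states$ follows immediately.

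The main technical step, therefore, is the inductivity check $\vals \leq \modbellmanmin(\vals)$, which I would carry out by a case split on $s \in \states$. If $s \in \target$, then $\modbellmanmin(\vals)(s) = \bellmanmin(\vals)(s) = 1$, and condition 2) forces $\vals(s) \leq 1$. If $s \in \states \setminus \target$ with $\vals(s) = 0$, then the inequality $0 \leq \modbellmanmin(\vals)(s)$ is trivial. The interesting case is $s \in \states \setminus \target$ with $\vals(s) > 0$: here condition 3) gives $\rank(s) < \infty$, and combining condition 1) with Proposition~\ref{prop:certQualReach} (instantiated with $\opt = \min$, so that $\nopt = \max$ and $\distop{\nopt}{} = \distop{\max}{}$) yields $\prmin_s(\reach\target) > 0$. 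By definition of $\modbellmanmin$ on such states, $\modbellmanmin(\vals)(s) = \bellmanmin(\vals)(s)$, so condition 2) supplies the desired $\vals(s) \leq \bellmanmin(\vals)(s) = \modbellmanmin(\vals)(s)$.

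The one subtlety I would take care to spell out is that the qualitative certificate from Proposition~\ref{prop:certQualReach} is exactly what bridges the gap between condition 2) (an inductive inequality for the \emph{unmodified} operator $\bellmanmin$) and the desired inequality for the \emph{modified} operator $\modbellmanmin$. Without the ranking function, inductivity under $\bellmanmin$ could be spoiled by end components where $\prmin(\reach \target) = 0$ but a trivial positive probability vector solves $\vals \leq \bellmanmin(\vals)$; the ranking function $\rank$ certifies precisely that any state with $\vals(s) > 0$ falls in the region where $\modbellmanmin$ and $\bellmanmin$ agree, ruling out this pitfall. Beyond this conceptual point, all remaining work is a routine composition of the already-established Proposition~\ref{prop:certQualReach}, Lemma~\ref{thm:uniquefp}, Theorem~\ref{thm:reachislfp}, and fixed point co-induction from Theorem~\ref{thm:knastertarski}.
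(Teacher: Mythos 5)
Your proposal is correct and follows essentially the same route as the paper's own proof: establish $\vals \leq \modbellmanmin(\vals)$ by a case split on whether $s \in \target$ and whether $\vals(s) > 0$, use conditions 1) and 3) together with \Cref{prop:certQualReach} to show that $\modbellmanmin$ and $\bellmanmin$ agree on the states where $\vals(s) > 0$, and then conclude via Knaster--Tarski co-induction, \Cref{thm:uniquefp}, and \Cref{thm:reachislfp}. No gaps.
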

%\tw{Proof in \Cref{app:certlower}}
%
\begin{example}
    We apply \Cref{thm:certlower} to the MDP in \Cref{fig:exampleReach}.
    The pairs $\vals(v) \mid \rank(v)$ below each state $v$ constitute a valid certificate $(\vals, \rank)$ for lower bounds on $\min$-reachability.
    Indeed, we have shown in \Cref{ex:certPosReach} that it satisfies Condition 1) $\distop{\max}{}(\rank) \leq \rank$.
    Condition 2)~$\vals \leq \bellmanmin(\vals)$ holds as well; in fact, we even have $\vals = \bellmanmin(\vals)$, see \Cref{ex:certMinUpper}.
    For the additional Condition 3), notice that $s$ is the only state in $\states \setminus \target$ with $\vals(s) > 0$, and that $\rank(s) < \infty$ holds as required.
    We conclude that $\prmin_{s}(\reach\target) \geq \textcolor{\colMinLower}{\frac{1}{2}}$.
\end{example}

\subsection{Lower Bounds on Maximal Reachability Probabilities}
\label{sec:certreach:lowermax}

Our approach for lower bounds on $\prmin(\reach\target)$ from \Cref{sec:certreach:lowermin} does not immediately extend to $\max$-reachability because $\modbellmanmax$ (a modification of $\bellmanmax$ analogous to $\modbellmanmin$) does \emph{not} have a unique fixed point in general, see \ifarxivelse{\Cref{app:remarkbellmanmax}}{\cite[App.~D.3]{arxivversion}} for a concrete counterexample.
This problem is caused by end components~\cite[Def.~3.13]{dealfaro97}.
Towards a solution, we observe that, essentially by definition,
\begin{align*}
    \forall s\in \states \colon\quad \prmax_s(\reach\target) \geq \vals(s) \iff \exists \text{ Strategy } \strat \colon \prind{\strat}_s(\reach\target) \geq \vals(s)
    ~.
\end{align*}
In words, a lower bound on a $\max$-reachability probability is always witnessed by some strategy.%
\footnote{Dually, an upper bound on a $\min$-reachability probability is also witnessed by a strategy, but our corresponding certificates from \Cref{thm:certupper} do not rely on this.}
Hence we adopt the following:

\begin{leitmotif}[Witness Strategies]
    \label{leitmotif:witnessStrat}
    In some cases, especially when progress towards a target is required, it is helpful to certify a witness strategy.
\end{leitmotif}

\subsubsection{Certificates with an Explicit Witness Strategy.}

Recall from \Cref{sec:prelims} that given a strategy $\strat \colon \states \to \act$ for MDP $\mdp$, we can consider the induced DTMC $\mdp^\strat$.
We write $\bellmanind{\strat}$ for the Bellman operator associated with $\mdp^\strat$ (notice that a DTMC is just a special case of an MDP).
Further, we let $\modbellmanind{\strat}$ be the corresponding modified Bellman operator.
By \Cref{thm:reachislfp} and \Cref{thm:uniquefp}:

\begin{lemma}
    $\modbellmanind{\strat}$ has a unique fixed point $(\fp{\modbellmanind{\strat}})(s) = \prind{\strat}_s(\reach \target)$ for all $s \in \states$.
\end{lemma}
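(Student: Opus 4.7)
The plan is to obtain the lemma as a direct corollary of the two preceding results, \Cref{thm:reachislfp} and \Cref{thm:uniquefp}, by specializing them to the induced DTMC $\mdp^\strat$. The key observation is that a DTMC is just an MDP in which every state has exactly one enabled action, so for such a model the $\min$-Bellman operator and the $\max$-Bellman operator coincide with ``the'' Bellman operator. In particular, the constructions from \Cref{sec:certreach:upper} and \Cref{sec:certreach:lowermin} apply verbatim to $\mdp^\strat$: $\bellmanind{\strat}$ is simultaneously the $\min$- and $\max$-Bellman operator of $\mdp^\strat$, and $\modbellmanind{\strat}$ is its modification in which states $s$ with $\prind{\strat}_s(\reach\target) = 0$ are pinned to $0$.

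First, I would apply \Cref{thm:reachislfp} to $\mdp^\strat$, which yields $(\lfp{\bellmanind{\strat}})(s) = \prind{\strat}_s(\reach\target)$ for all $s \in \states$. Second, I would apply \Cref{thm:uniquefp} to $\mdp^\strat$: since the modification is exactly the one defined there (with $\prind{\strat}_s(\reach\target) = \prmin_s(\reach\target)$ in the DTMC $\mdp^\strat$), the operator $\modbellmanind{\strat}$ has a unique fixed point, and this fixed point coincides with $\lfp{\bellmanind{\strat}}$. Chaining the two equalities gives $(\fp{\modbellmanind{\strat}})(s) = \lfp{\bellmanind{\strat}}(s) = \prind{\strat}_s(\reach\target)$, as desired.

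The only mildly non-trivial point is the formal identification of a DTMC as an instance of an MDP for which the $\min$- and $\max$-variants of both $\bellman{\opt}$ and $\modbellman{\opt}$ coincide; once this is spelled out (e.g., by noting that $\opt_{a \in \act(s)}$ is a trivial operation on a singleton $\act(s) = \{\strat(s)\}$), both cited theorems transfer without modification. I do not expect a real obstacle here; the lemma is essentially a bookkeeping step whose purpose is to prepare the ground for the forthcoming witness-strategy-based certificates promised by \Cref{leitmotif:witnessStrat}.
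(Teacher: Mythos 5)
Your proposal is correct and matches the paper's own justification exactly: the lemma is stated immediately after the sentence ``By \Cref{thm:reachislfp} and \Cref{thm:uniquefp}:'', i.e., the paper likewise obtains it by specializing both results to the induced DTMC $\mdp^\strat$, where the $\min$- and $\max$-operators trivially coincide. No gap here.
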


Thus, we can certify lower bounds similar to \Cref{thm:certlower} (we write $\distop{\strat}{}$ for the distance operator $\distop{\opt}{}$ in the DTMC induced by $\strat$):
%; notice that $\distop{\opt}{}$ is actually independent of $\opt$ since a DTMC has no non-determinism):

\begin{restatable}[Certificates for Lower Bounds on $\prmax(\reach\target)$ \!+\! Strategy]{proposition}{certlowerstratwitness}%
    \label{thm:certlowerstratwitness}%
%    Let $\mdptuple$ be an MDP and $\target \subseteq \states$.
    A triple $(\vals, \rank, \strat) \in [0,1]^\states \times \exnats^\states \times \act^\states$ is a \emph{valid certificate for lower bounds on $\max$-reachability with witness strategy} if
    \begin{align*}
        1)~\distop{\strat}{}(\rank) \leq \rank,
        \quad
        2)~\vals \leq \bellman{\strat}(\vals),
        \quad
        3)~\forall s \in \states\setminus\target \colon \vals(s) > 0 \implies \rank(s) < \infty.
    \end{align*}
%    \begin{enumerate}[1)]
%        \item $\distop{\strat}{}(\rank) \leq \rank$,  
%        \item $\vals \leq \bellman{\strat}(\vals)$, \hfill \textcolor{gray}{(Note: This is \underline{not} the modified Bellman operator!)}
%        \item  $\forall s \in \states\setminus\target \colon \vals(s) > 0 \implies \rank(s) < \infty$.
%    \end{enumerate}
    If $(\vals, \rank, \strat)$ is valid, then $\forall s \in \states \colon \prmax_s(\reach\target) \geq \prind{\strat}_s(\reach\target)  \geq \vals(s)$.
\end{restatable}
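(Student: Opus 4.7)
The plan is to adapt the proof of \Cref{thm:certlower} to the DTMC $\mdp^\strat$ induced by the witness strategy $\strat$. First, $\prmax_s(\reach\target) \geq \prind{\strat}_s(\reach\target)$ holds immediately by definition of $\prmax$ as the optimum over all strategies, so the substance of the proof is to establish $\prind{\strat}_s(\reach\target) \geq \vals(s)$ for every $s \in \states$. For this, I would apply the co-inductive direction of \Cref{thm:knastertarski} to $\modbellmanind{\strat}$ on the complete lattice $([0,1]^\states, \leq)$. The (unnumbered) lemma immediately preceding the proposition states that $\modbellmanind{\strat}$ has a unique fixed point equal to $\prind{\strat}(\reach\target)$; in particular, $\gfp{\modbellmanind{\strat}} = \prind{\strat}(\reach\target)$. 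Thus, it would suffice to verify the co-induction hypothesis $\vals \leq \modbellmanind{\strat}(\vals)$.

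The key step is to use the ranking function to rule out \enquote{spurious} zero-probability states. Since $\mdp^\strat$ is a DTMC, the outer $\opt$ in \Cref{def:distop} is trivial, so condition~1) is exactly the hypothesis of \Cref{prop:certQualReach} applied to $\mdp^\strat$. Consequently, $\rank(s) < \infty$ implies $\prind{\strat}_s(\reach\target) > 0$; combined with condition~3) this yields the contrapositive implication $\prind{\strat}_s(\reach\target) = 0 \implies \vals(s) = 0$ for all $s \in \states \setminus \target$. Using this, I would verify $\vals \leq \modbellmanind{\strat}(\vals)$ state-wise: on $s \in \target$ the right-hand side equals $1 \geq \vals(s)$; on $s \notin \target$ with $\prind{\strat}_s(\reach\target) = 0$ both sides vanish; and on the remaining states $\modbellmanind{\strat}$ agrees with $\bellmanind{\strat}$, where the inequality is exactly condition~2). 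Co-induction then gives $\vals \leq \gfp{\modbellmanind{\strat}} = \prind{\strat}(\reach\target)$, as desired.

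I do not anticipate a significant obstacle: the argument is a direct transposition of the proof of \Cref{thm:certlower} from the $\min$-MDP setting to the DTMC $\mdp^\strat$, where every state has a unique enabled action under $\strat$. The only subtlety worth spelling out is that in $\mdp^\strat$ the distance operator $\distop{\strat}{}$ collapses the $\min$/$\max$ distinction of \Cref{def:distop}, so a single ranking-function certificate simultaneously plays the role that $\distop{\max}{}(\rank) \leq \rank$ played in \Cref{thm:certlower}, certifying positive reachability in $\mdp^\strat$ from every state with finite rank.
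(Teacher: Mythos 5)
Your proposal is correct and follows essentially the same route as the paper: the paper's proof is exactly ``apply \Cref{thm:certlower} to the induced DTMC $\mdp^\strat$ and use the trivial fact $\prind{\strat}_s(\reach\target) \leq \prmax_s(\reach\target)$'', and your argument simply unfolds that application by re-running the proof of \Cref{thm:certlower} (co-induction for $\modbellmanind{\strat}$ via its unique fixed point, with the ranking function ruling out spurious zero-probability states) inside $\mdp^\strat$. The observation that $\distop{\strat}{}$ collapses the $\min$/$\max$ distinction in the DTMC is accurate and is implicitly what makes the paper's one-line reduction work.
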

\tw{proof in \Cref{app:certlowerstratwitness}}

\subsubsection{Certificates without a Witness Strategy.}

%If one is merely interested in lower bounds on $\prmax(\reach\target)$, it may seem wasteful to include a strategy in the certificate.
%Fortunately, this can be avoided.
%The key idea is to \enquote{read off} an implicit strategy from the certifying probability vector $\vals \in [0,1]^\states$.
Increasing the size of the certificate by including the strategy can be avoided, as it can be \enquote{read off} from the certifying probability vector $\vals \in [0,1]^\states$.
To this end, we define the \emph{$\vals$-increasing} actions of state $s \in \states$:
$\indact{\vals}(s)=\{a\in\act(s) \mid \vals(s) {\leq} \sum_{s' \in \post{(s,a)}} \prmdp(s, a, s') {\cdot} \vals(s') \}$.
%\begin{align*}
%    \indact{\vals}(s)
%    ~=~
%    \Big\{a\in\act(s) \mid \vals(s) \leq \sum\limits_{s' \in \post{(s,a)}} \prmdp(s, a, s') \cdot \vals(s') \Big\}
%    ~.
%\end{align*}
If $\vals \leq \bellmanmax(\vals)$, then $\indact{\vals}(s)$ contains at least one action.
Next, we define a variant of the distance operator which only considers $\vals$-increasing actions:
\begin{align*}
    \distopvals{\min}{} \colon
%    ~
    \exnats^\states \to \exnats^\states,
    ~
    \distopvals{\min}{}(\rank)(s)
    =
    \begin{cases}
        0 & \text{ if } s \in \target \\
        1 ~+~ \min\limits_{a \in \indact{\vals}(s)} ~ \min\limits_{s' \in \post(s,a)}  \rank(s') & \text{ if } s \in \states \setminus \target
    \end{cases}
\end{align*}
%With these ingredients we can derive certificates without a witness strategy:

\begin{restatable}[Certificates for Lower Bounds on $\prmax(\reach\target)$]{proposition}{certlowernostrat}%
    \label{thm:certlower-nostrat}%
%    \newline
%    Let $\mdptuple$ be an MDP and $\target \subseteq \states$.
    A tuple $(\vals,\rank) \in [0,1]^\states \times \exnats^\states$ is a \emph{valid certificate for lower bounds on $\max$-reachability}~if
    \begin{align*}
        1)~\distopvals{\min}{}(\rank) \leq \rank,
        \quad
        2)~\vals \leq \bellmanmax(\vals),
        \quad
        3)~\forall s \in \states \setminus \target \colon \vals(s) > 0 \implies \rank(s) < \infty.
    \end{align*}
%    \begin{enumerate}[1)]
%        \item $\distopvals{\min}{}(\rank) \leq \rank$, 
%        \item $\vals \leq \bellmanmax(\vals)$, \hfill \textcolor{gray}{(Note: This is \underline{not} the modified Bellman operator!)}
%        \item $\forall s \in \states \setminus \target \colon \vals(s) > 0 \implies \rank(s) < \infty$.
%    \end{enumerate}
    If $(\vals, \rank)$ is valid, then $\forall s \in \states \colon \prmax_s(\reach\target) \geq \vals(s)$.
\end{restatable}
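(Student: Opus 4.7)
The plan is to reduce Proposition~\ref{thm:certlower-nostrat} to Proposition~\ref{thm:certlowerstratwitness} by \emph{extracting} a witness strategy from the probability vector $\vals$ itself, along the lines of Guiding Principle~\ref{leitmotif:witnessStrat}. Concretely, given a valid $(\vals,\rank)$ satisfying conditions 1)--3) of Proposition~\ref{thm:certlower-nostrat}, I will construct a memoryless deterministic strategy $\strat$ and show that the triple $(\vals,\rank,\strat)$ is a valid certificate in the sense of Proposition~\ref{thm:certlowerstratwitness}; the conclusion then follows directly.

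First I would observe that condition 2), $\vals \leq \bellmanmax(\vals)$, implies $\indact{\vals}(s) \neq \emptyset$ for every $s \in \states\setminus\target$: some action must attain (or exceed) the maximum defining $\bellmanmax(\vals)(s) \geq \vals(s)$. Using this non-emptiness, I define $\strat$ by, for each $s \in \states\setminus\target$, selecting
\[
    \strat(s) \;\in\; \argmin_{a \in \indact{\vals}(s)} \;\min_{s' \in \post(s,a)} \rank(s')
    ~,
\]
and choosing $\strat(s) \in \act(s)$ arbitrarily for $s \in \target$. This choice is well-defined because the minimum ranges over a non-empty finite set.

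Next I verify the three conditions of Proposition~\ref{thm:certlowerstratwitness} for $(\vals,\rank,\strat)$. Condition 3) is literally the same assumption. Condition 2), $\vals \leq \bellman{\strat}(\vals)$, holds because $\strat(s) \in \indact{\vals}(s)$, which by definition of $\vals$-increasing actions means $\sum_{s'} \prmdp(s,\strat(s),s')\vals(s') \geq \vals(s)$; on $\target$, both sides are $1$. For condition 1), $\distop{\strat}{}(\rank) \leq \rank$, note that in the induced DTMC $\mdp^\strat$ only the single action $\strat(s)$ is enabled, so
\[
    \distop{\strat}{}(\rank)(s) \;=\; 1 + \min_{s' \in \post(s,\strat(s))} \rank(s')
    \quad \text{for } s \in \states\setminus\target,
\]
and by our minimizing choice this equals $\distopvals{\min}{}(\rank)(s)$, which is $\leq \rank(s)$ by condition 1) of the hypothesis. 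On $\target$ both operators yield $0 \leq \rank(s)$.

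Having established that $(\vals,\rank,\strat)$ satisfies the hypotheses of Proposition~\ref{thm:certlowerstratwitness}, I invoke it to conclude $\prmax_s(\reach\target) \geq \prind{\strat}_s(\reach\target) \geq \vals(s)$ for every $s \in \states$, which is the desired statement. The only delicate point is the argument that the extracted $\strat$ simultaneously inherits the $\vals$-progress from $\indact{\vals}$ \emph{and} the finite-rank progress from $\distopvals{\min}{}$; this is exactly why the definition of $\distopvals{\min}{}$ restricts the minimisation to $\indact{\vals}$, and it is what makes the two conditions compatible with a single memoryless choice.
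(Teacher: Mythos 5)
Your proposal is correct and follows essentially the same route as the paper's own proof: both extract a witness strategy $\strat$ via $\strat(s) \in \argmin_{a\in \indact{\vals}(s)} \min_{s' \in \post(s,a)} \rank(s')$ (well-defined since $\vals \leq \bellmanmax(\vals)$ forces $\indact{\vals}(s) \neq \emptyset$) and then verify the three hypotheses of \Cref{thm:certlowerstratwitness} exactly as you do. The only cosmetic difference is that you treat target states separately when defining $\strat$, which is immaterial.
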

%\begin{proof}[Sketch]
%    As outlined, we consider a strategy $\strat$ such that for all $s \in\states$,
%    $\strat(s) \in \argmin_{a\in \indact{\vals}(s)} ~ \min_{s' \in \post(s,a)} \rank(s')$.
%    We can then apply \Cref{thm:certlowerstratwitness} to $(\vals, \rank, \strat)$.
%    See \Cref{app:certlower-nostrat} for details.
%    \tw{only keep sketch if space permits}
%\end{proof}

\section{Certificates for Expected Rewards}
\label{sec:ExpRew}

We present certificates for bounds on expected rewards (\Cref{fig:certoverviewrewards}) in the  \enquote{$\starinf$} semantics that assigns infinite reward to paths not reaching $\target$% (cf.\ \Cref{sec:prelims:mdp})
, with the other case in \ifarxivelse{\Cref{app:sec:exprewStarRho}}{\cite[App.~F]{arxivversion}}.
%\tw{decide in the end if we have time to actually submit this extension in the appendix to TACAS}
We employ the reward variant of the Bellman operator:

\begin{definition}[Bellman Operator for Expected Rewards]
    \label{defn:bellmanopRewards}%
%    Let $\mdptuple$ be an MDP, $\target \subseteq \states$ a target set, $\rew\colon \states \rightarrow \RgeqZero$ a reward function, and $\opt\in \{\min, \max\}$.
    We define the Bellman operator for expected rewards
    $\bellmanropt \colon \RgeqZeroInfStates \to \RgeqZeroInfStates$ as follows:
    \begin{align*}
        \bellmanropt(\vals)(s)
        ~=~
        \begin{cases}
        0 & \text{ if } s \in \target \\
        \rew(s) + \underset{a \in \act(s)}{\opt} \sum\limits_{s' \in \post(s,a)} \prmdp(s, a, s') \cdot \vals(s') & \text{ if } s \in \states \setminus \target 
        \end{cases}
    \end{align*}
\end{definition}
The above definition assumes that multiplication by $\infty$ \emph{absorbs} positive numbers, i.e., $p \cdot \infty = \infty$ for all $p > 0$, and $a + \infty = \infty + a = \infty$ for all $a \in \RgeqZeroInf$.
%Notice that the so-defined addition on $\RgeqZeroInf$ is commutative and associative as expected.

Again, $\bellmanropt$ is a monotone function on the complete lattice $(\RgeqZeroInfStates, \leq)$
and thus has a least and a greatest fixed point by \Cref{thm:knastertarski}.
Unfortunately, as it turns out, the sought-after expected rewards $\Eopt_s(\reach\target)$, $s \in \states$, are \emph{neither of these two fixed points}.
Indeed, $\lfp{\bellmanropt}$ corresponds to the expected rewards in the semantics considered in \ifarxivelse{\Cref{app:sec:exprewStarRho}}{\cite[App.~F]{arxivversion}}, and $\gfp{\bellmanropt}$ is a trivial upper bound assigning $\infty$ to all states, see the example in \Cref{sec:exprew:lowerbounds}.

\begin{remark}[Asymmetry and Duality]
	In \Cref{sec:certificates}, an asymmetry between upper and lower bounds arose as the reachability probabilities are a \emph{least} fixed point. 
	Further, for the case of maximizing reachability, spurious fixed points occurred and we required a witness strategy to \enquote{make progress} towards the targets (the fact that this case requires special treatment of end components is well established in literature, e.g.,~\cite{DBLP:conf/cav/HartmannsK20}).
	For \emph{safety} objectives, where the goal is to avoid a set of bad states, the situation is dual:
    The safety probabilities are a \emph{greatest} fixed point, so the lower bound case is simple, and when minimizing the upper bound, we require a witness strategy.
	The $\starinf$ semantics for expected rewards share some similarities with a safety objective, since the value is maximized (i.e., is infinite) when the target set is avoided.
	This section thus differs from \Cref{sec:certificates} in two ways:
    (i) Everything is dual, as $\starinf$ is \enquote{safety-like}, and
    (ii) additional complications arise from the trivial greatest fixed point $\gfp{\bellmanropt} = \vec\infty$, see below.
\end{remark}

\subsection{Lower Bounds on Optimal Expected Rewards}\label{sec:exprew:lowerbounds}
\begin{wrapfigure}[6]{r}{0.215\textwidth}%
	\vspace{-0.9cm}
    \centering
    \begin{tikzpicture}[every state/.style={minimum size=0},thick,>=stealth]
        \node[state,label=-90:$\infty$] (s) {$s$};
        \node[state,right=of s,accepting,label=-90:$0$] (t) {$t$};
        \draw[->] (s) edge[loop above] node[left] {$\tfrac 1 2$} (s);
        \draw[->] (s) edge node[auto] {$\tfrac 1 2$}(t);
        \draw[->] (t) edge[loop above] node[left] {$1$}(t);
    \end{tikzpicture}
    \caption{A DTMC.}
    \label{fig:smallCounterExample}
\end{wrapfigure}
%
%The problem with $\gfp{\bellmanropt}$ is that, due to the absorptive property of multiplication by $\infty$, it may assign $\infty$ to states that actually have finite value:
Due to the absorptive property of multiplication by $\infty$, $\gfp{\bellmanropt}$ may assign $\infty$ to states that actually have finite value:
For instance, in the DTMC in \Cref{fig:smallCounterExample}, the gfp assigns $\infty$ to $s$ because $\infty = \rew(s) + \frac 1 2 \cdot \infty + \frac 1 2 \cdot 0$, while in fact $\E_s(\reach\target) = 2 \cdot \rew(s) < \infty$.
%Because of \Cref{problem4}, checking $\vals \leq \bellmanropt(\vals)$ does not suffice alone to ensure that $\vals$ is a lower bound on the $\opt$-expected reward.
To address this, we force the values of states that a.s. reach the target to be finite as follows:
%However, the spurious greatest fixed point can be bypassed by ensuring a finite value for all states $s$ with $\pr^{\nopt}_s(\lozenge\target) = 1$.

\begin{restatable}{lemma}{infLowerGfp}%
    \label{lem:infLowerGfp}%
    Let $\vals \in \RgeqZeroInfStates$ be such that 1) $\vals \leq \bellmanropt(\vals)$ and 
    2) for all $s \in\states$: $\pr^{\nopt}_s(\lozenge\target) = 1 \implies \vals(s) < \infty$ .
    Then it holds for all $s \in \states$ that $\vals(s) \leq \Eopt_s(\lozenge \target)$.
    \tw{Proof in \Cref{app:infLowerGfp}}
\end{restatable}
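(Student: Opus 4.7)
The plan is to reduce the statement to a standard fact about expected rewards in DTMCs, handling the $\opt \in \{\min, \max\}$ asymmetry via a witness strategy.

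Let $\valsb(s) = \Eopt_s(\reach\target)$ and $S_{<\infty} = \{s \in \states \mid \valsb(s) < \infty\}$. As noted just before the lemma, $\valsb(s) < \infty$ exactly when $\pr^{\nopt}_s(\reach\target) = 1$, so $S_{<\infty} = \{s \in \states \mid \pr^{\nopt}_s(\reach\target) = 1\}$. Outside $S_{<\infty}$ the claim $\vals(s) \leq \valsb(s) = \infty$ is trivial, so the core task is to establish $\vals(s) \leq \valsb(s)$ for $s \in S_{<\infty}$, where condition 2) supplies $\vals(s) < \infty$.

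The next step is to pick a strategy $\strat$ whose induced DTMC inherits the key properties. For $\opt = \max$, I select at each state an action attaining the maximum in $\bellmanrmax(\vals)(s)$; the resulting $\strat$ satisfies $\vals \leq \bellmanr{\strat}(\vals)$. For $\opt = \min$ the inequality $\vals \leq \bellmanrmin(\vals)$ already yields $\vals \leq \bellmanr{\strat}(\vals)$ for \emph{every} strategy $\strat$, so I take an arbitrary one and later take an infimum over $\strat$. The crucial closure observation is that the relevant ``almost surely reaches $\target$'' set is $\strat$-invariant and keeps $\vals$ finite: for $\opt = \max$, $S_{<\infty} = \{s \mid \prmin_s(\reach\target) = 1\}$ is closed under \emph{all} transitions (any successor of a state that a.s.\ reaches $\target$ under every strategy must itself have this property), hence under $\strat$; for $\opt = \min$, I instead use $S^\strat = \{s \mid \prind{\strat}_s(\reach\target) = 1\} \subseteq S_{<\infty}$ (the inclusion uses $\prmax \geq \prind{\strat}$), which is closed under $\strat$-successors by the same argument and on which $\vals$ is finite. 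In either case, the restriction of $\mdp^\strat$ to the chosen set is a closed sub-DTMC in which $\target$ is reached almost surely from every state, $\vals$ is finite, and $\vals \leq \bellmanr{\strat}(\vals)$ holds.

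What then remains is a standard DTMC lemma: if in a DTMC with $\pr_s(\reach\target) = 1$ for all $s$ the vector $\vals$ is finite and satisfies $\vals \leq \bellmanr{\strat}(\vals)$, then $\vals \leq \E(\reach\target)$. I would prove this by iteration, observing that $\vals \leq (\bellmanr{\strat})^n(\vals)$ for all $n$ and that the $n$-th iterate decomposes into the expected reward accumulated over the first $n$ steps (clipped at $\target$) plus an $\vals$-weighted probability of not yet having reached $\target$. The latter is bounded by $(\max_{s} \vals(s)) \cdot \pr_s(\text{not in }\target\text{ within } n \text{ steps})$ and tends to $0$ by almost-sure reachability and finiteness of $\vals$ on the finite state space; the former converges to $\E_s(\reach\target)$ by monotone convergence. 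Combining this with the previous paragraph gives $\vals \leq \Estrat(\reach\target)$ on the invariant set; for $\opt = \max$ this directly yields $\vals \leq \Emax$ (trivially outside $S_{<\infty}$), and for $\opt = \min$ taking the infimum over $\strat$ produces $\vals \leq \Emin$. The main obstacle is the $\min$/$\max$ asymmetry in how the witness strategy arises: for $\max$ one must explicitly construct $\strat$ but benefits from $S_{<\infty}$ being closed under all transitions, whereas for $\min$ any strategy works but its almost-sure-reaching set may be strictly smaller than $S_{<\infty}$, so the invariant subsystem must be chosen more carefully.
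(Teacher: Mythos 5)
Your proposal is correct, but it takes a genuinely different route from the paper's own proof. The paper works directly with the excess $z_s = \max\{0, \vals(s) - \Eopt_s(\reach\target)\}$ and shows, by induction on the length of a shortest path to $\target$ in the DTMC induced by an optimal strategy, that $z_s > 0$ forces a reachable state $u$ with $z_u > z_s$; choosing $s$ with maximal excess over the finite state space then gives a contradiction, so no limit or convergence argument is needed. You instead reduce everything to a single induced DTMC via a witness strategy ($\vals$-greedy for $\opt = \max$, arbitrary for $\opt = \min$), carve out a closed subsystem on which $\target$ is reached almost surely and $\vals$ is finite, and run the textbook unrolling of $(\bellmanr{\strat})^n(\vals)$ into the expected reward truncated at step $n$ plus a tail term weighted by $\vals$ at the current state, with the tail vanishing by almost-sure reachability. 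Notably, your choice of the $\vals$-greedy strategy in the $\max$ case sidesteps exactly the pitfall the paper flags in its own write-up---that a strategy optimal for the expected reward need not satisfy the one-step inequality for $\vals$---which the paper instead repairs by establishing the claim for \emph{every} strategy. What the paper's argument buys is the absence of any convergence reasoning; what yours buys is modularity, since the lemma collapses to one standard fact about DTMCs with almost-sure reachability, at the cost of the monotone-convergence step and the more delicate choice of the invariant subsystem in the $\min$ case (where the almost-surely-reaching set of the chosen strategy may be strictly smaller than $\{s \mid \pr^{\nopt}_s(\reach\target) = 1\}$, a point you correctly identify and handle).
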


Intuitively, \Cref{lem:infLowerGfp} requires that a lower bound on $\Emin_s(\reach\target)$ can only be infinite if $\target$ cannot be reached a.s., i.e.\ $\prmax_s(\reach\target) < 1$ (dually for $\Emax$).
Combining \Cref{lem:infLowerGfp} and a certificate for \emph{non-a.s. reachability} (\Cref{sec:qualreachandsafe}) yields:

\begin{restatable}[Certificates for Lower Bounds on $\Eopt(\reach\target)$]{proposition}{certslowerexprewinf}%
    \label{prop:certsInfLower}%
%    \newline
%    Let $\mdptuple$ be an MDP and $T \subseteq S$.
    A tuple $(\vals, \rank) \in \RgeqZeroInfStates \times \exnats^\states$ is a \emph{valid certificate for lower bounds on $\opt$-exp.\ rewards} if
    \begin{align*}
        1)~\distopmod{\nopt}(\rank) \leq \rank,
        \quad
        2)~\vals \leq \bellmanropt(\vals),
        \quad
        3)~\forall s \in \states \colon \vals(s) = \infty \implies \rank(s) < \infty.
    \end{align*}
%    \begin{enumerate}[1)]
%        \item $\distopmod{\nopt}(\rank) \leq \rank$,
%        \item $\vals \leq \bellmanropt(\vals)$,
%        \item $\forall s \in \states \colon \vals(s) = \infty \implies \rank(s) < \infty$.
%    \end{enumerate}
    If $(\vals,\rank)$ is valid, then $\forall s\in\states \colon \Eopt_s(\reach\target) \geq \vals(s)$.
    \tw{Proof in \Cref{app:certsInfLower}}
\end{restatable}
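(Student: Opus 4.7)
The plan is to derive the proposition directly from Lemma~\ref{lem:infLowerGfp} by supplying its two hypotheses from the three conditions of the certificate, using Proposition~\ref{prop:certQualReachTwo} as a bridge between the ranking function condition and the finiteness-at-a.s.-reaching-states condition required by the lemma.

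First I would fix a valid certificate $(\vals, \rank)$ and aim to invoke Lemma~\ref{lem:infLowerGfp}. Hypothesis 1) of the lemma, namely $\vals \leq \bellmanropt(\vals)$, is immediate from Condition 2) of the certificate. The work therefore concentrates on establishing Hypothesis 2) of the lemma: for every $s \in \states$, $\pr^{\nopt}_s(\lozenge \target) = 1$ implies $\vals(s) < \infty$. I would prove this by contraposition: assume $\vals(s) = \infty$ and show $\pr^{\nopt}_s(\lozenge \target) < 1$.

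The contrapositive goes as follows. Condition 3) of the certificate yields $\rank(s) < \infty$. Condition 1) says $\distopmod{\nopt}(\rank) \leq \rank$, which is precisely the hypothesis of Proposition~\ref{prop:certQualReachTwo} instantiated with $\opt$ replaced by $\nopt$ (recall $\nnopt = \opt$, so this application is well-formed). Proposition~\ref{prop:certQualReachTwo} then gives $\pr^{\nopt}_s(\lozenge \target) < 1$, as required. Combining this with Condition 2) through Lemma~\ref{lem:infLowerGfp} yields $\vals(s) \leq \Eopt_s(\lozenge \target)$ for every $s \in \states$, which is the claimed conclusion.

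The proof is essentially mechanical once the correct lemma is identified; the only subtlety is book-keeping the optimization direction. The non-trivial mathematical content is concentrated in Proposition~\ref{prop:certQualReachTwo} (the surprisingly involved ranking-function certificate for non-almost-sure reachability) and in Lemma~\ref{lem:infLowerGfp} (which handles the trivial greatest fixed point issue caused by $p \cdot \infty = \infty$), both of which are already established. So I expect no real obstacle in writing the proof itself; the main point to double-check is that the dual pairing ($\distopmod{\nopt}$ certifies non-a.s.\ $\nopt$-reachability, which is exactly what the lemma needs to witness the opposing optimization direction $\Eopt$) is applied consistently throughout.
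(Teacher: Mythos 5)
Your proposal is correct and follows essentially the same route as the paper: both reduce the claim to supplying hypothesis 2) of \Cref{lem:infLowerGfp}, and both obtain $\pr^{\nopt}_s(\lozenge\target) < 1$ from conditions 1) and 3) via Knaster--Tarski and \Cref{lem:distopmodequiv} (the paper inlines this step by contradiction, whereas you invoke the already-packaged \Cref{prop:certQualReachTwo} by contraposition, which is the same argument).
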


\subsection{Upper Bounds on Maximal Expected Rewards}
\label{sec:certrew:uppermax}
%
%For certifying upper bounds, \Cref{problem4} does not occur anymore. This is because we can modify the Bellman operator such that its \emph{least} fixed point equals the expected rewards of the MDP and then we do not care about the (spurious) greatest fixed point anymore. Thus, this case is similar to the minimizing lower bounds case for reachability, in the sense that we (only) have to deal with \Cref{problem1} and \Cref{problem2} again. Consider the following modified Bellman operator that assigns the value $\infty$ to states $s$ with $\prmin_s(\lozenge\target) < 1$. It would suffice to assign $\infty$ only to states $s$ with $\prmin_s(\lozenge\target) = 0$, since then value iteration would result in assigning $\infty$ to states $s$ with $0 < \prmin_s(\lozenge\target) < 1$ anyways. However, our chosen modification leads to a faster convergence rate and is not harder to certify with our techniques from \Cref{sec:qualreachandsafe}.
%
Next we focus on upper bounds on maximal expected rewards.
Using \Cref{leitmotif:uniqueFixedPoints} as for \emph{lower} bounds on \emph{minimal} reachability probabilities (\Cref{sec:certreach:lowermin}), we obtain such certificates via a modified Bellman operator:
\begin{align*}
    \modbellmanrmax \colon
%    ~
    \RgeqZeroInfStates \to \RgeqZeroInfStates,
    \quad
    \modbellmanrmax(\vals)(s)
    ~=~
    \begin{cases}
        \bellmanrmax(\vals)(s) & \text{ if } \prmin_s(\lozenge\target) > 0 \\
        \infty & \text{ if } \prmin_s(\lozenge\target) = 0 
    \end{cases}
\end{align*}
\begin{restatable}{lemma}{infMaxUpperLfp}
    \label{lem:infMaxUpperLfp}
    For all $s \in \states$, $(\lfp{\modbellmanrmax})(s) = \mathbb{E}^{\max}_s(\lozenge \target)$.
    \tw{Proof in \Cref{app:infMaxUpperLfp}}
\end{restatable}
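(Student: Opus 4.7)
The plan is to establish the two inequalities $\lfp{\modbellmanrmax} \leq \Emax(\reach\target)$ and $\lfp{\modbellmanrmax} \geq \Emax(\reach\target)$ separately. The first is an immediate consequence of fixed point induction (\Cref{thm:knastertarski}, \Cref{leitmotif:knasterTarski}); the second requires a per-strategy argument that reduces, in spirit, to the DTMC analogue of the lemma.

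For $\lfp{\modbellmanrmax} \leq \Emax(\reach\target)$, I would verify that $\Emax(\reach\target)$ is itself a fixed point of $\modbellmanrmax$ and conclude via fixed point induction. This verification splits into three cases: (i) at target states, both sides equal $0$; (ii) at a non-target state $s$ with $\prmin_s(\reach\target) = 0$, some strategy $\strat$ witnesses $\prind{\strat}_s(\reach\target) = 0$, so under the $\starinf$ semantics $\Estrat_s(\rew^{\reach\target}) = \infty$, hence $\Emax_s(\reach\target) = \infty$, which matches the operator; (iii) at a non-target state with $\prmin_s(\reach\target) > 0$, the standard MDP Bellman equation yields $\Emax_s(\reach\target) = \rew(s) + \max_{a \in \act(s)} \sum_{s'} \prmdp(s,a,s') \cdot \Emax_{s'}(\reach\target)$, agreeing with $\modbellmanrmax(\Emax)(s)$.

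For $\lfp{\modbellmanrmax} \geq \Emax(\reach\target)$, I would argue per strategy: since memoryless deterministic strategies suffice to attain $\Emax$, it is enough to show that $V := \lfp{\modbellmanrmax}$ satisfies $V \geq \Estrat(\reach\target)$ for every such $\strat$. Writing $\modbellmanrind{\strat}$ for the analogue of $\modbellmanrmax$ on the induced DTMC $\mdp^\strat$ (forcing $\infty$ at states where $\prind{\strat}_s(\reach\target) = 0$), I would prove (a) $V \geq \modbellmanrind{\strat}(V)$, by a case analysis that uses $\prmin_s(\reach\target) \leq \prind{\strat}_s(\reach\target)$ to rule out the would-be problematic case of $\prmin_s(\reach\target) > 0$ together with $\prind{\strat}_s(\reach\target) = 0$; and (b) the DTMC specialization $\lfp{\modbellmanrind{\strat}} = \Estrat(\reach\target)$. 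Fixed point induction applied to (a) gives $V \geq \lfp{\modbellmanrind{\strat}}$, which combined with (b) closes the per-strategy argument; taking the supremum over $\strat$ yields $V \geq \Emax(\reach\target)$.

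The main obstacle is the DTMC specialization (b). I would argue it by Kleene iteration of $\modbellmanrind{\strat}$ starting from $\vec 0$. The $\infty$ values initially forced on states with $\prind{\strat}_s(\reach\target) = 0$ propagate backward along positive-probability transitions, and---using that in a finite DTMC any state with positive probability of avoiding $\target$ almost surely reaches a bottom SCC disjoint from $\target$, every state of which has $\prind{\strat} = 0$---end up infinite on exactly $\{s \mid \prind{\strat}_s(\reach\target) < 1\}$. On the complementary set of almost-surely-reaching states, the iterates converge to the finite expected reward accumulated until hitting $\target$, matching $\Estrat_s(\reach\target)$ under the $\starinf$ semantics.
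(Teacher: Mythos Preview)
Your proposal is correct but follows a different decomposition from the paper. The paper argues directly via Kleene iteration of $\modbellmanrmax$ from $\vec 0$ on the MDP level: it first shows that for states with $\prmin_s(\reach\target) < 1$ both sides equal $\infty$ (the forced $\infty$-values at states with $\prmin_s = 0$ propagate within finitely many iterations to all states that can reach them), and then shows by induction that for states with $\prmin_s(\reach\target) = 1$ the $n$-th iterate equals the step-bounded expectation $\Emax_s(\reach^{=n}\target)$, whose supremum is $\Emax_s(\reach\target)$. Your route instead establishes the two inequalities separately: the $\leq$ direction by verifying that $\Emax(\reach\target)$ is itself a fixed point of $\modbellmanrmax$ and invoking Knaster--Tarski, and the $\geq$ direction by a per-strategy reduction to the DTMC analogue $\lfp{\modbellmanrind{\strat}} = \Estrat(\reach\target)$, which you then prove by the same Kleene-style propagation argument. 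Both approaches end up doing the same $\infty$-propagation work; yours packages the DTMC case as a reusable sub-lemma (which indeed the paper later invokes as \Cref{lem:infMinUpperLfp}), while the paper's is a single direct computation. One small wording slip: a state with $\prind{\strat}_s(\reach\target) < 1$ reaches a bottom SCC disjoint from $\target$ with \emph{positive probability}, not almost surely---but positive probability is exactly what the backward propagation of $\infty$ needs, so your argument goes through.
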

We stress that unlike $\modbellmanmin$ from \Cref{sec:certreach:lowermin}, $\modbellmanrmax$ does not have a \emph{unique} fixed point, see \Cref{fig:smallCounterExample}.
Nonetheless, with \Cref{lem:infMaxUpperLfp}, \Cref{leitmotif:knasterTarski}, and the certificates for positive reachability from \Cref{prop:certQualReach}, we obtain:

\begin{restatable}[Certificates for Upper Bounds on $\Emax(\reach\target)$]{proposition}{certsInfMaxUpper}%
    \label{prop:certsInfMaxUpper}%
%    \newline
%    Let $\mdptuple$ be an MDP and $T \subseteq S$.
    A tuple $(\vals, \rank) \in \RgeqZeroInfStates \times \exnats^\states$ is a \emph{valid certificate for upper bounds on  $\max$-exp.\ rewards} if 
    \begin{align*}
        1)~\distop{\max}{}(\rank) \leq \rank,
        \quad
        2)~\bellmanrmax(\vals) \leq \vals,
        \quad
        3)~\forall s \in \states \colon \vals(s) < \infty \implies \rank(s) < \infty.
    \end{align*}
%    \begin{enumerate}[1)]
%        \item $\distop{\max}{}(\rank) \leq \rank$
%        \item $\bellmanrmax(\vals) \leq \vals$, \hfill \textcolor{gray}{(Note: This is \underline{not} the modified Bellman operator!)}
%        \item $\forall s \in \states \colon \vals(s) < \infty \implies \rank(s) < \infty$.
%    \end{enumerate}
    If $(\vals,\rank)$ is valid, then $\forall s\in\states \colon \Emax_s(\reach\target) \leq \vals(s)$.
\end{restatable}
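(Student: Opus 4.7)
The plan is to apply fixed point induction (\Cref{leitmotif:knasterTarski}) to the modified operator $\modbellmanrmax$, using \Cref{lem:infMaxUpperLfp} to identify its least fixed point with $\Emax(\reach\target)$. Concretely, it suffices to show that a valid certificate $(\vals, \rank)$ satisfies $\modbellmanrmax(\vals) \leq \vals$; then Knaster-Tarski yields $\Emax_s(\reach\target) = (\lfp{\modbellmanrmax})(s) \leq \vals(s)$ for every $s \in \states$.

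To establish $\modbellmanrmax(\vals) \leq \vals$, I would distinguish two cases based on the definition of $\modbellmanrmax$. For states $s$ with $\prmin_s(\reach\target) > 0$, the operator agrees with $\bellmanrmax$, so the inequality is immediate from condition 2). The interesting case is $s$ with $\prmin_s(\reach\target) = 0$, where $\modbellmanrmax(\vals)(s) = \infty$; here I must argue $\vals(s) = \infty$. This is exactly the contrapositive of the implication $\vals(s) < \infty \Rightarrow \prmin_s(\reach\target) > 0$, which I would derive by chaining condition 3) with the qualitative certificate: condition 1) is $\distop{\max}{}(\rank) \leq \rank$, i.e., $\distop{\nmin}{}(\rank) \leq \rank$, so \Cref{prop:certQualReach} (instantiated with $\opt = \min$) gives $\rank(s) < \infty \Rightarrow \prmin_s(\reach\target) > 0$. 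Combined with $\vals(s) < \infty \Rightarrow \rank(s) < \infty$ from condition 3), the desired chain follows.

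No step should be seriously difficult, because the heavy lifting is done by the earlier results: \Cref{lem:infMaxUpperLfp} characterizes $\Emax(\reach\target)$ as $\lfp{\modbellmanrmax}$, \Cref{prop:certQualReach} turns the ranking function condition into a positivity statement on $\prmin_s(\reach\target)$, and Knaster-Tarski closes the argument. The only mildly subtle point is ensuring that the ranking function $\rank$ is invoked in the correct direction: we need positivity of the \emph{minimizing} reachability probability (because that is what occurs in the case split of $\modbellmanrmax$), and this is exactly why condition 1) uses $\distop{\max}{}$ rather than $\distop{\min}{}$. Once this alignment is noted, the proof reduces to the two-line fixed point induction sketched above.
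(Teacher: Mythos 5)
Your proposal is correct and follows essentially the same route as the paper's proof: reduce to showing $\modbellmanrmax(\vals) \leq \vals$ via \Cref{lem:infMaxUpperLfp} and Knaster-Tarski, handle the case $\prmin_s(\lozenge\target) > 0$ by condition 2), and handle the case $\prmin_s(\lozenge\target) = 0$ by combining condition 3) with \Cref{prop:certQualReach} (for $\opt = \min$, so that the hypothesis is $\distop{\max}{}(\rank) \leq \rank$). The only cosmetic difference is that the paper argues the second case by contradiction whereas you state the contrapositive directly.
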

\tw{Proof in \Cref{app:certsInfMaxUpper}}

\subsection{Upper Bounds on Minimal Expected Rewards}
\label{sec:InfUpperMin}
Our approach for this case parallels the one for \emph{lower} bounds on \emph{maximal} reachability probabilities from \Cref{sec:certreach:lowermax}.
The modified Bellman operator $\modbellmanrmin$ (defined analogous to $\modbellmanrmax$ from above) does \emph{not} characterize the minimal expected rewards as its least fixed point.
The problem are, again, end components, see \ifarxivelse{\Cref{app:remarkbellmanminrewards}}{\cite[App.~E.5]{arxivversion}} for a counter-example.
Following \Cref{leitmotif:witnessStrat} and \Cref{sec:certreach:lowermax}, we can, however, certify upper bounds on $\Emin(\reach\target)$ by including a witness strategy (see \ifarxivelse{\Cref{app:EminUpperWithWitness}}{\cite[App.~E.6]{arxivversion}}).

As with lower bounds on $\max$-reachability, it is also possible to avoid this explicit witness strategy:
%
%\Cref{app:remarkbellmanminrewards}). Notice that this is because of \Cref{problem3}, which also occured for the lower bounds for $\max$-reachability. We deal with the problem by including a strategy in the certificate that ensures progress to the target. Then, we show again that the strategy can be encoded in the value- and rank vector.
%
%
%%\subsubsection{Without a Witness Strategy.}
%
%We want to encode a strategy in $\vals$ and $\rank$, since we anyway need both of these for the certificate.
%This strategy can be used in the algorithm, and it can be computed on the fly from the information available in the certificate.
%
%For an MDP $\mdptuple$ with reward function $\rew$, a state $s \in \states$ and a vector $\vals \in \RgeqZeroInfStates$, 
We define the \emph{$\vals$-decreasing actions} of $s$ as $\coindactrew{\vals}(s) = \{a\in\act(s) \mid \vals(s) \geq \rew(s) + \sum_{s' \in \post(s,a)} \prmdp(s, a, s') \cdot \vals(s') \}$.
%
%\begin{align*}
%    \coindactrew{\vals}(s)
%    ~=~
%    \Big\{a\in\act(s) \mid \vals(s) \geq \rew(s) + \sum\limits_{s' \in \post(s,a)} \prmdp(s, a, s') \cdot \vals(s') \Big\}
%    ~.
%\end{align*}
%
If $\bellmanrmin(\vals) \leq \vals$, then $\coindactrew{\vals}(s) \neq \emptyset$.
We define a distance operator with $\distoprvals{\min}{}$
that only considers $\vals$-decreasing actions completely analogous to $\distopvals{\min}{}$ from \Cref{sec:certreach:lowermax}.
%Unlike the family of operators $\distop{\optone, \opttwo}{\target}$ from \Cref{def:distop}, $\distopr{\vals,\min,\max}{\target}$ now depends on a value vector $\vals$:
%\begin{align*}
%%    \distoprvals{\min}{} \colon
%%    ~
%%    \exnats^\states \to \exnats^\states,
%%    ~
%    \distoprvals{\min}{}(\rank)(s)
%    ~=~
%    \begin{cases}
%        0 & \text{ if } s \in \target\\
%        1 ~+~ \min\limits_{a \in \coindactrew{\vals}(s)} ~ \min\limits_{s' \in \post(s,a)}  \rank(s') & \text{ if } s \in \states\setminus\target
%    \end{cases}
%\end{align*}
%
\begin{restatable}[Certificates for Upper Bounds on $\Emin(\lozenge\target)$]{proposition}{certsInfMinUpperNoWitness}%
    \label{prop:certsInfMinUpperNoWitness}%
%    \newline
%    Let $\mdptuple$ be an MDP and
%    $\target \subseteq \states$.
    A tuple $(\vals, \rank) \in \RgeqZeroInfStates \times \exnats^\states$ is a \emph{valid certificate for upper bounds on $\min$-exp.\ rewards} if 
    \begin{align*}
        1)~\distoprvals{\min}{}(\rank) \leq \rank,
        \quad
        2)~\bellmanrmin(\vals) \leq \vals,
        \quad
        3)~\forall s \in \states \colon \vals(s) < \infty \implies \rank(s) < \infty.
    \end{align*}
%    \begin{enumerate}[1)]
%        \item $\distoprvals{\min}{}(\rank) \leq \rank$,
%        \item $\bellmanrmin(\vals) \leq \vals$, \hfill \textcolor{gray}{(Note: This is \underline{not} the modified Bellman operator!)}
%        \item $\forall s \in \states \colon \vals(s) < \infty \implies \rank(s) < \infty$.
%    \end{enumerate}
    If $(\vals, \rank)$ is valid, then $\forall s \in \states \colon \Emin_s(\reach\target) \leq \vals(s)$.
\end{restatable}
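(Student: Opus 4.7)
The plan is to mirror the derivation of Proposition~\ref{thm:certlower-nostrat}: I extract a memoryless deterministic witness strategy $\strat$ from the pair $(\vals, \rank)$ and reduce to the witness-strategy version of the proposition (see \ifarxivelse{\Cref{app:EminUpperWithWitness}}{\cite[App.~E.6]{arxivversion}}). Concretely, for every $s \in \states \setminus \target$ I set
\[
    \strat(s) \in \argmin_{a \in \coindactrew{\vals}(s)} \min_{s' \in \post(s,a)} \rank(s'),
\]
and choose $\strat(s)$ arbitrarily on $\target$. Condition~2) guarantees that the action realizing $\bellmanrmin(\vals)(s)$ lies in $\coindactrew{\vals}(s)$ whenever $\vals(s) < \infty$, and when $\vals(s) = \infty$ the inclusion $\coindactrew{\vals}(s) = \act(s)$ is immediate from the defining inequality $\infty \geq \,\cdot\,$; in either case the set is non-empty and finite, so $\strat$ is well-defined.

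Next I would verify that the triple $(\vals, \rank, \strat)$ fulfils the three hypotheses of the witness-strategy version: (i)~$\distop{\strat}{}(\rank) \leq \rank$, (ii)~$\bellmanr{\strat}(\vals) \leq \vals$, and (iii)~$\vals(s) < \infty \Rightarrow \rank(s) < \infty$. For~(i), at non-target $s$ the choice of $\strat(s)$ makes $\distop{\strat}{}(\rank)(s)$ coincide with $\distoprvals{\min}{}(\rank)(s)$, which is bounded by $\rank(s)$ by hypothesis~1); at target states the inequality is trivial. For~(ii), the membership $\strat(s) \in \coindactrew{\vals}(s)$ is by definition the required inequality $\bellmanr{\strat}(\vals)(s) \leq \vals(s)$ at non-target states, while the target case reduces to $0 \leq \vals(s)$. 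Condition~(iii) transfers verbatim from the given certificate. Applying the witness-strategy result then yields $\Emin_s(\reach\target) \leq \Estrat_s(\reach\target) \leq \vals(s)$ for every $s \in \states$, as required.

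The main obstacle is conceptual rather than technical: a strategy selected purely to be $\vals$-decreasing need not make progress towards $\target$, while a strategy selected purely to minimize $\rank$-distances need not preserve the reward inductivity. Both demands must be met by a \emph{single} memoryless strategy. The point of tailoring condition~1) to the restricted action set $\coindactrew{\vals}$ is exactly to guarantee this compatibility: the argmin above inhabits both worlds simultaneously, so the distance bound is preserved precisely on the subset of actions that are also reward-contracting. Once this observation is unpacked, the verification is essentially a rewriting exercise, fully dual to the reasoning behind Proposition~\ref{thm:certlower-nostrat}.
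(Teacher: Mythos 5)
Your proposal is correct and follows essentially the same route as the paper's own proof: it extracts the witness strategy $\strat(s) \in \argmin_{a\in \coindactrew{\vals}(s)} \min_{s' \in \post(s,a)} \rank(s')$, checks that the resulting triple $(\vals,\rank,\strat)$ satisfies the three conditions of the witness-strategy variant of the proposition, and concludes by reduction to that result. The only (harmless) difference is that you spell out the non-emptiness of $\coindactrew{\vals}(s)$ by a case split on whether $\vals(s)$ is finite, where the paper simply cites condition~2).
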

\tw{Proof in \Cref{app:certsInfMinUpperNoWitness}}

%A concrete algorithm to check if a given certificate for lower bounds on min-reachability is valid is \Cref{alg:checkMinUpperExpRewInfNoWitness} shown in \Cref{sec:algs:checkingCerts}.

%\subsection{How to compute the certificates}
%\begin{itemize}
%    \item $\rank$ can be computed as the fixed point of $\distop{\min, \max}{\target}$ (\Cref{alg:rankfpcompute})
%\end{itemize}
%\input{6_complexity}
\section{Computing Certificates}
\label{sec:computing}

In \Cref{sec:qualreachandsafe,sec:certificates,sec:ExpRew} we described \emph{what} certificates are and discussed their verification conditions.
We now elaborate on \emph{how to compute} certificates.
To this end, we first discuss computation of (co-)inductive value vectors $\vals$ and then focus on the ranking functions $\rank$ required by some certificates (see \Cref{fig:certoverview}).
We stress that a sound certificate checker detects any wrong results produced by buggy implementations of the methods discussed in this section.
Indeed, during implementation of the certificate computation algorithms in \storm, checking the certificates helped finding and resolving implementation bugs.

As we enter the realm of numeric computation, some remarks are in order.
For computational purposes we assume that the transitions probabilities are \emph{rational numbers}, i.e., fractions of integers.
Moreover:
\begin{tightcenter}
	\emph{Our goal is to compute a certificate with a \underline{rational} value vector $\vals$ and to check it with \underline{exact}, \underline{arbitrar}y\underline{ }p\underline{recision rational number arithmetic}.}
\end{tightcenter}

\paragraph{Certificates via Exact Algorithms.}
The conceptually easiest certifying MDP model checking algorithm is to compute the rational reachability probabilities or expected rewards \emph{exactly}.
The resulting value vector is both inductive and co-inductive.
Thus, exact algorithms yield a certificate essentially as a by-product.
We refer to~\cite{TACAS23,practitionersJournalPreprint} for an in-depth comparison of exact algorithms based on \emph{Policy Iteration} (PI), \emph{Rational Search} (RS), and \emph{Linear Programming} (LP).
The practically most efficient algorithm is PI with exact LU decomposition as linear equation solver; see~\cite[Secs.~2.2~and~4.2]{practitionersJournalPreprint} for a description of the algorithm.

\paragraph{Certificates via Approximate Algorithms.}
In practice, most probabilistic model checkers use algorithms that are not exact but approximate: 
They employ \emph{approximate, fixed-precision floating point arithmetic} and use a variant of VI that only returns an approximate result, namely for each state an interval $[\ell,u]$ containing the exact value, such that $|\ell - u| \leq \varepsilon$ for a given $\varepsilon$ (typically $10^{-6}$).
They do this because (i)~when using exact arithmetic, fractions can grow very large, hindering scalability, (ii)~VI-based algorithms often outperform PI, albeit not as dramatically as folklore claimed~\cite{TACAS23,practitionersJournalPreprint}, and (iii)~approximate results usually suffice.
We now exemplify with the VI-variant \emph{Interval Iteration} (II)~\cite{DBLP:conf/cav/Baier0L0W17,HM18} how to make an approximate, floating point-based algorithm certifying, leaving other variants such as \emph{optimistic VI}~\cite{DBLP:conf/cav/HartmannsK20} and \emph{Sound VI}~\cite{DBLP:conf/cav/QuatmannK18} for future work.

II for reachability\footnote{For expected rewards, II additionally requires computing an upper bound, see~\cite{DBLP:conf/cav/Baier0L0W17}.} works by first \emph{collapsing end components}~\cite{atva14,HM18} of the MDP to ensure that $\bellmanopt$ has a \emph{unique} fixed point.
II then runs two instances of VI in parallel, starting from $\vals^{(0)} = \vec 0$ and $\valsb^{(0)} = \vec 1$:
%This leads to the following situation:
\begin{align*}
	\vec 0 = \vals^{(0)} \leq \bellmanopt(\vals^{(0)}) = \vals^{(1)} \leq 
	~\ldots~
	\fp{\bellmanopt}
	~\ldots~
	\leq \valsb^{(1)} = \bellmanopt(\valsb^{(0)}) \leq \valsb^{(0)} = \vec 1
\end{align*}
Both sequences contain (co-)inductive vectors only and converge to the fixed point.
The iteration can be stopped when the difference is as small as desired.
%Thus, by certifying that the final $\vals$ is a lower bound and the final $\valsb$ an upper bound, we certify the correctness of the resulting interval.

%running it with exact rational arithmetic is typically prohibitively expensive due to fractions growing larger and larger.
However, as we demonstrate experimentally (\Cref{sec:expeval}), \emph{inexact floating point arithmetic usually breaks \mbox{(co-)}inductivity of the elements in the II sequences}, as was already reported in~\cite{DBLP:conf/tacas/WinklerK23} in a similar setting.
More precisely, let $\bellmanopt_\floats$ be a \enquote{floating point variant} of $\bellmanopt$, i.e., 
%it performs all required additions and multiplications with fixed precision floating point semantics.\footnote{Technically, this means that 
	the (exact) result of each operation is rounded \emph{to a nearest float}.
    This the default rounding mode in IEEE 754.
	%}
Let $\vals_{\floats}^{(i)}$ be the $i$-th element, $i > 0$, in the lower VI sequence of $\bellmanopt_\floats$ starting from $\vec 0$.
Then, due to rounding errors, $\vals_{\floats}^{(i)} \leq \bellmanopt(\vals_{\floats}^{(i)})$ does \emph{not} hold in general, i.e., $\vals_{\floats}^{(i)}$ might not be co-inductive.
%neither in theory nor in praxis.
%This is caused by a phenomenon that we call the \emph{floating point equality issue}:
%For $a,b,c \in \reals$ we may have, say, $\floats(a+ b) = \floats(c)$, where $\floats(\cdot)$ denotes rounding,
%while it is actually true that $a+b \neq c$.
%For us, this means that if $\vals_{\floats}^{(i)}(s) = \bellmanopt_\floats(\vals_{\floats}^{(i)})(s)$ for some $s \in \states$, then $\vals_{\floats}^{(i)}$ is likely not (co-)inductive when evaluated with exact arithmetic.
%This is because if for some $s \in \states$, we have $\vals_{\floats}^{(i)}(s) = \bellmanopt_\floats(\vals_{\floats}^{(i)})(s)$---this situation is common, e.g., in acyclic models---then there is a $\approx 50\%$ chance that $\vals_{\floats}^{(i)}(s) > \bellmanopt(\vals_{\floats}^{(i)})(s)$, i.e., $\vals_{\floats}^{(i)}$ is \emph{not} co-inductive.
%We call this the \emph{floating point equality problem}.
%
We propose two ways to mitigate this problem: \emph{Safe rounding}~\cite{arndFP} and \emph{Smooth~II}.

First, safe rounding amounts to configuring the IEEE754 rounding mode so that results of floating point computations are always rounded towards $0$ when iterating from below, and towards $\infty$ when iterating from above.
While safe rounding provably yields sound bounds~\cite{arndFP}, it may slow down or even prevent convergence of II.
%With this kind of rounding, the resulting sequence need not converge to the result anymore.
Nonetheless, in practice,
%rounding does not guarantee (co-)inductivity of the value vectors, 
II with safe rounding finds significantly more certificates than II with default rounding (\Cref{sec:expeval}).

Second, for Smooth II we define the \emph{$\gamma$-smooth Bellman~operator} ($\gamma \in [0,1)$)
\[
\bellmanopt_\gamma(\vals) ~=~ \gamma \cdot \vals + (1-\gamma) \cdot \bellmanopt(\vals)
~,
\]
where scalar multiplication and addition are component-wise.
%Alternatively, $\bellmanopt_\gamma$ can be seen as the Bellman operator of an MDP where each action has an additional self loop with probability $\gamma < 1$, and the original probabilities are scaled accordingly.
$\bellmanopt_\gamma$ and $\bellmanopt$ have the same fixed points, and every \mbox{(co-)}inductive value vector w.r.t.\ $\bellmanopt_\gamma$ is also \mbox{(co-)}inductive w.r.t.\ $\bellmanopt$\ifarxivelse{ (see \Cref{app:smoothII})}{~\cite[App.~G.1]{arxivversion}}.
The key property of $\bellmanopt_\gamma$ compared to $\bellmanopt$ is that the former enforces \emph{ultimately strictly monotonic VI sequences}.
%For the lower II sequence, for instance, this means that there exists $i \in \nats$ such that for all $j \geq i$, $\vals^{(j+1)}$ is component-wise \emph{strictly greater} than $\vals^{(j)}$.
This mitigates the floating point rounding issues.
Notice, however, that smoothing slows down convergence.
%of $\bellmanopt_\gamma$'s VI sequences.
Smoothing and safe rounding may be combined.
%, depending on $\gamma$.

\paragraph{Computing Ranking Functions.}

We briefly outline how to obtain the unique and least fixed points of $\distop{\opt}{}$ and $\distopmod{\opt}$, respectively (see \Cref{def:distop,def:distopmod}).\ifarxivelse{\footnote{The restricted variants $\distopvals{\min}{}$ and $\distoprvals{\min}{}$ from \Cref{sec:certreach:lowermax,sec:InfUpperMin} are the same as $\distop{\min}{}$ in a sub-MDP restricted to increasing/decreasing actions.}}{}
%Recall that our certificates do not necessarily need the \emph{exact} fixed points---a sufficiently small inductive ranking function suffices.

First, $\fp{\distop{\opt}{}}$ can be computed via VI from $\rank^{(0)} = \vec\infty$.
This iteration converges in finitely many steps.
Second, to compute $\lfp{\distopmod{\opt}{}}$ we propose to perform VI from $\rank^{(0)}$ with $\rank^{(0)}(s) = [\pr^{\nopt}_s(\reach\target) = 1] \cdot \infty$ for all $s \in \states$.
The condition in the Iverson bracket can be evaluated using standard graph analysis~\cite[Section~10.6.1]{principles}.
This iteration converges in finitely many steps as well, see \ifarxivelse{\Cref{app:rankingFunctionVIDetails,app:algsForComputing}}{\cite[Apps.~G.2~and~G.3]{arxivversion}} for details and a practically more efficient algorithm.

\section{Experimental Evaluation}
\label{sec:expeval}

\newcommand{\pix}{PI$^X$\xspace}
\newcommand{\iiraw}{II}
\newcommand{\ii}{\iiraw\xspace}
\newcommand{\sii}[1]{\iiraw$^{\circlearrowright #1}$\xspace}
\newcommand{\rii}[1][]{\iiraw$^{\ifstrempty{#1}{}{\circlearrowright #1}}_{\mathrm{rnd}}$\xspace}
\newcommand{\datatriple}[3]{~\textcolor{gray}{[\textcolor{plotdarkgreen}{#1}|\textcolor{plotdarkred}{#2}|\textcolor{black}{#3}]}}
\newcommand{\numbenchmarks}{447\xspace}
\newcommand{\numSuccessIIAndPITogether}{396\xspace}

\newcommand{\pixdata}{\pix\datatriple{351}{0}{96}} % logs.Storm.topoexpi
\newcommand{\iidata}{\ii\datatriple{171}{225}{51}} % logs.Storm.topofpii
\newcommand{\siionedata}{\sii{0.1}\datatriple{243}{138}{66}} % logs.Storm.topofpsmoothii10
\newcommand{\siifivedata}{\sii{0.5}\datatriple{235}{146}{66}} % logs.Storm.topofpsmoothii50
\newcommand{\siieightdata}{\sii{0.8}\datatriple{303}{77}{67}} % logs.Storm.topofpsmoothii80
\newcommand{\siininedata}{\sii{0.9}\datatriple{340}{35}{72}} % logs.Storm.topofpsmoothii90
\newcommand{\siininefivedata}{\sii{0.95}\datatriple{335}{33}{79}} % logs.Storm.topofpsmoothii95
\newcommand{\siinineninedata}{\sii{0.99}\datatriple{323}{30}{94}} % logs.Storm.topofpsmoothii99
\newcommand{\riidata}{\rii\datatriple{249}{147}{51}} % logs.Storm.topofproundii00
\newcommand{\riizerofivedata}{\rii[0.05]\datatriple{338}{44}{65}} % logs.Storm.topofproundii05
\newcommand{\riionedata}{\rii[0.1]\datatriple{337}{45}{65}} % logs.Storm.topofproundii10
\newcommand{\riitwodata}{\rii[0.2]\datatriple{333}{49}{65}} % logs.Storm.topofproundii20
\newcommand{\riifivedata}{\rii[0.5]\datatriple{334}{47}{66}} % logs.Storm.topofproundii50
\newcommand{\riieightdata}{\rii[0.8]\datatriple{342}{38}{67}} % logs.Storm.topofproundii80

\paragraph{Implementation.}
We implemented certificate computation as discussed in \Cref{sec:computing} in \storm~\cite{storm}.
Given a higher-level model description (\prism~\cite{prism} or \tool{Jani}~\cite{DBLP:conf/tacas/BuddeDHHJT17}), and a reachability probability or expected reward query, our implementation proceeds in three steps:
First, \storm builds an explicit MDP from the description. 
Second, it computes a certificate for both lower \emph{and} upper bounds, such that the relative difference between the two values is at most $\varepsilon = 10^{-6}$ for each MDP state.
Finally, \storm checks the validity of the certificate.

Following the discussion in \Cref{sec:computing}, we consider the following algorithms:
Regarding exact computation, we use PI with \emph{exact} LU decomposition, called \pix.
For approximate computation with floating point arithmetic, we employ \ii.
Further, to investigate the impact of the rounding error mitigation techniques from \Cref{sec:computing}, we complement \ii with either safe rounding (denoted \rii), smoothing with parameter $\gamma$ (denoted \sii{\gamma}; we consider $\gamma \in \{0.05,0.8,0.9,0.95\}$), or a combination of both (denoted \rii[\gamma]).
Overall, we compare \pix and seven variants of \ii.
We employ additional standard modifications of the algorithms, namely:
We eliminate end components whenever possible, apply topological optimizations for \pix and \ii, and apply Gauß-Seidl Bellman updates for \ii~\cite{DBLP:conf/cav/Baier0L0W17,TACAS23}.

In all three steps of the implementation, we represent numbers as arbitrary precision rationals implemented in GMP~\cite{gmp}---except when running \ii in the second step (in which case we convert rationals to their nearest floats, potentially yielding invalid certificates).
We thus certify reachability probabilities and expected rewards with respect to the \emph{exact} MDP without rounding errors.

The MDP and the certificate computed with \storm can be exported and checked by an independent \emph{formally verified certificate checker}.
To construct the latter, we verified the correctness of the certificate checking algorithms in the interactive proof assistant \isabelle~\cite{DBLP:books/sp/NipkowPW02}, extending previous work on MDPs~\cite{DBLP:journals/jar/Holzl17,DBLP:conf/aaai/SchaffelerA23} by total rewards and qualitative reachability properties.
Based on this library, we proved correct the soundness of the certificates described in \Cref{sec:qualreachandsafe,sec:certificates,sec:ExpRew}.
We used \isabelle{}'s code export mechanism~\cite{DBLP:conf/itp/HaftmannKKN13} to obtain a verified, executable Standard ML implementation that employs exact rational arithmetic.
The construction of the MDP from a \prism or \tool{Jani} model as well as export and parsing of MDPs and certificates are currently not verified.

\paragraph{Benchmarks and Setup.}
We use all 366 benchmark instances from the quantitative verification benchmark set (QVBS)~\cite{QVBS} that (i) consider an MDP with a reachability or reward objective and (ii) for which \storm can build an explicit representation within 5 minutes.
Additionally, since the QVBS contains no models exhibiting non-trivial ECs, we include 71 structurally diverse models from various sources detailed in~\cite[Sec.~5.3]{practitionersJournalPreprint}.
Overall, we consider the complete \emph{alljani} set from \cite{practitionersJournalPreprint}.
%
%\paragraph{Setup.}
We invoke \storm for each combination of benchmark instance and certificate algorithm and report the overall runtime (walltime).
All experiments ran on Intel Xeon 8468 Sapphire 2.1 Ghz systems.
We used \tool{Slurm}
%the \tool{Slurm} batch system
to limit the individual executions to 4 CPU cores and \SI{16}{\giga\byte} of RAM, with a time limit of \SI{900}{\second}.
Next, we discuss our findings by answering three research questions (RQs).

%\paragraph{Research Questions.}
%We address the following research questions:
%\begin{compactitem}
%	\item RQ1: What is the best algorithm for certificate generation?
%	\item RQ2: What is the runtime overhead of certificate generation?
%	\item RQ3: What is the scalability of the formally verified certificate checker?
%\end{compactitem}

\begin{figure}[t]
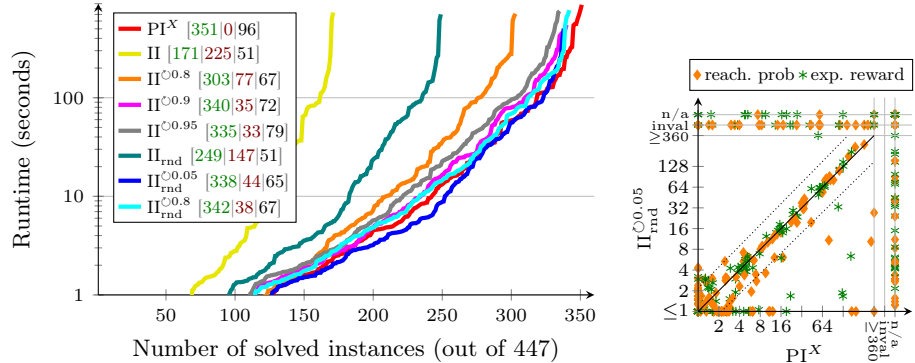

	\setlength{\quantileplotwidth}{0.67\linewidth}
	\centering
	\quantileplot{plotdata/quantile.csv}{
		logs.Storm.topoexpi/plotred,
		logs.Storm.topofpii/plotyellow,
		logs.Storm.topofpsmoothii80/plotorange,
		logs.Storm.topofpsmoothii90/plotpink,
		logs.Storm.topofpsmoothii95/plotdarkgray,
		logs.Storm.topofproundii00/plotteal,
		logs.Storm.topofproundii05/plotblue,
		logs.Storm.topofproundii80/plotcyan
	}{
		\pixdata,
		\iidata,
		\siieightdata,
		\siininedata,
		\siininefivedata,
		\riidata,
		\riizerofivedata,
		\riieightdata
	}{1}{360}{1}{900}{north west}
    \hfill
	\scatterplotAlg{plotdata/scatter.csv}{\thisrow{logs.Storm.topoexpi} < 1 ? 1 : \thisrow{logs.Storm.topoexpi}}{\pix}{\thisrow{logs.Storm.topofproundii05} < 1 ? 1 : \thisrow{logs.Storm.topofproundii05}}{\rii[0.05]}{true}
	\caption{RQ1: Runtime for computing certificates of \pix and several combinations of mitigation techniques with \ii (left); and detailed comparison of \pix and \rii[0.05] (right).}
	\label{fig:RQ1}
\end{figure}

\paragraph{RQ1: Best algorithm for certificate generation?}
\Cref{fig:RQ1} (left) compares the runtimes of \pix and our seven \ii variants.
A point $(x,y)$ for algorithm $A$ indicates that there are $x$ instances
%in the benchmark set
for which $A$ computes a \emph{valid} certificate within $y$ seconds (including time for model construction but excluding time for exporting the certificate files).
The triples \!\!\datatriple{$v$}{$w$}{$u$} in the legend indicate that the algorithm produced a total of $v$ valid and $w$ invalid certificates (with invalidity likely due to floating point issues), while for the remaining $u$ instances no result was found within the resource limits.
As expected, all certificates produced by the exact \pix{} are valid, while standard \ii{} produces many invalid certificates.
Safe rounding and smoothing improve the number of valid certificates.
Notably, \sii{\gamma} (only smoothing) performs best for $\gamma$ values close to 1, while the performance of \rii[\gamma] (smoothing \emph{and} safe rounding) is less sensitive towards $\gamma$; see \ifarxivelse{\Cref{app:experiments}}{\cite[App.~H]{arxivversion}} for more details.
Among all II variants, \rii[0.05] shows the best overall performance.
%Overall, however, \pix{} produces the largest number of valid certificates.
%In particular, no invalid result is produced; which is to be expected as \pix does not consider floating point arithmethic.

The scatter plot in \Cref{fig:RQ1} (right) further compares \pix and \rii[0.05].
A data point $(x,y)$ corresponds to one benchmark instance, where $x$ and $y$ are runtimes of \pix and \rii[0.05]. 
A point at $\ge 300$ indicates a runtime between 300 and 900 seconds, \emph{inval} means an invalid certificate, and \emph{n/a} denotes an aborted computation due to time/memory limits.
Many instances that \pix cannot solve are solved by \rii[0.05] and vice versa.
This is already the case without computing certificates, as the structure of a benchmark affects the performance of the algorithms differently~\cite{TACAS23,practitionersJournalPreprint}.
Thus, as in the case without computing certificates, there is no \enquote{best algorithm}, and both \pix and variants of \ii can be considered. 
Overall, \numSuccessIIAndPITogether out of \numbenchmarks instances are correctly solved and certified by \pix or \rii[0.05] (or both).
We highlight that \pix is not only a complete certifying algorithm, but also practically efficient, even though it uses exact arithmetic.

\begin{figure}[t]
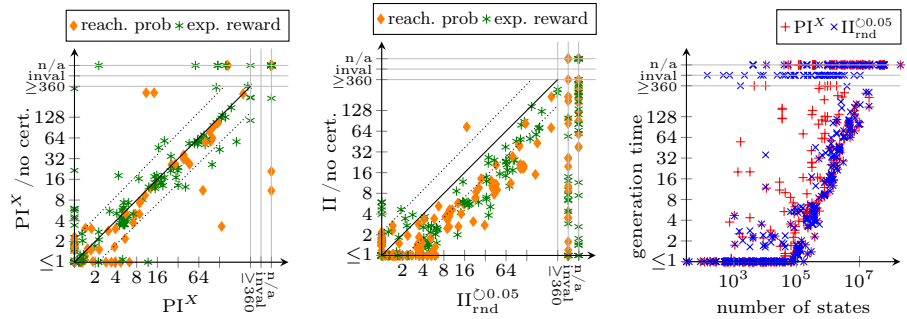

	\scatterplotAlg{plotdata/scatter.csv}{%
		\thisrow{logs.Storm.exprt-topoexpi} < 1 ? 1 : \thisrow{logs.Storm.exprt-topoexpi}
	}{\pix}{\thisrow{logs.Storm.expinocert} < 1 ? 1 :  \thisrow{logs.Storm.expinocert}}{\pix\!/no cert.}{true}
    \hfill
	\scatterplotAlg{plotdata/scatter.csv}{%
		\thisrow{logs.Storm.exprt-topofproundii05} < 1 ? 1 : \thisrow{logs.Storm.exprt-topofproundii05}
	}{\rii[0.05]}{\thisrow{logs.Storm.fpiinocert} < 1 ? 1 :  \thisrow{logs.Storm.fpiinocert}}{\ii\!/no cert.}{true}
    \hfill
	\scatterplotStatesStorm{exprt-topoexpi}{\pix}{exprt-topofproundii05}{\rii[0.05]}
	\caption{
		RQ2: Runtime overhead of certified MDP model checking for \pix (left) and \ii (middle), and scalability of both with respect to the number of states (right).
	}
	\label{fig:RQ2}
\end{figure}

\begin{figure}[t]
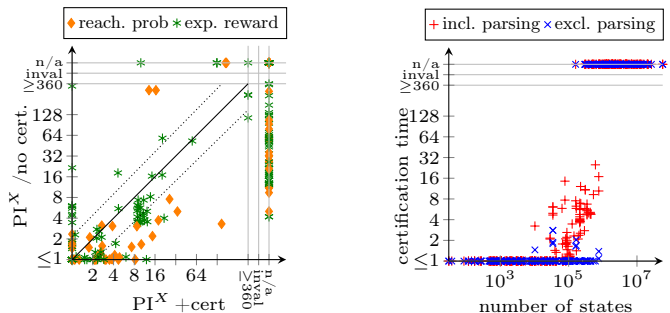

	\centering
	\scatterplotAlg{plotdata/scatter.csv}{%
		\thisrow{logs.Storm.exprt-topoexpi} < 360 ? (\thisrow{logs.cert.topoexpi}< 360 ? (\thisrow{logs.Storm.exprt-topoexpi} + \thisrow{logs.cert.topoexpi} < 360 ? (\thisrow{logs.Storm.exprt-topoexpi} + \thisrow{logs.cert.topoexpi} < 1 ? 1 : \thisrow{logs.Storm.exprt-topoexpi} + \thisrow{logs.cert.topoexpi}) : 360) :\thisrow{logs.cert.topoexpi}) : \thisrow{logs.Storm.exprt-topoexpi}
	}{\pix\!+cert}{\thisrow{logs.Storm.expinocert} < 1 ? 1 :  \thisrow{logs.Storm.expinocert}}{\pix\!/no cert.}{true}
    \hspace{1cm}
	\scatterplotStatesCert{topoexpi}
	\caption{
		RQ3: Runtime overhead of the certified pipeline/Runtime of certificate checking
	}
	\label{fig:RQ3}
\end{figure}

\paragraph{RQ2: Runtime overhead of certificate generation?}
\Cref{fig:RQ2} (left/middle) reports the runtime overhead of generating a certificate for \pix and \rii[0.05].
For \pix, the overhead is typically within a factor of 2, often significantly less. It is sometimes faster due to implementation differences in the certifying variant of \pix.
For \rii[0.05], the overhead is slightly larger, typically around 1.5 to 4.
This is partly due to the slower convergence caused by smoothing.
\Cref{fig:RQ2} (right) investigates the scalability of certificate generation with respect to the number of states.
%A point $(x,y)$ corresponds to one benchmark instance with $x$ states, where the generation of the certificate took $y$ seconds. \emph{inval} and \emph{n/a} are as in the scatter plot.
For MDPs with up to $10^5$ states, certificate generation usually completes within a minute (often much less); for more than $10^7$ states, it usually times out.
%We conjecture that a large factor in this is writing the large textual representation of the certificate.

\paragraph{RQ3: Scalability of the formally verified certificate checker?}
\Cref{fig:RQ3} (left) compares the runtime of the full pipeline including certificate generation and verification using our formally verified checker (\pix\!+cert) with plain, uncertified MDP model checking based on \pix.
Compared to \Cref{fig:RQ2} (left), the added verification of the certificates causes additional time/memory outs, and roughly doubles the runtime of the other instances.
\Cref{fig:RQ3} (right) reveals that parsing is currently a major bottleneck in the verified checker.
Nonetheless, the checker completes within a few seconds on MDPs with up to $\approx 10^5$ states, and usually within \SI{30}{\second} for instances with up to $\approx 10^6$ states.

\section{Conclusion and Future Work}
%As always, our work is awesome!

We proposed \emph{fixed point certificates} as a new standard for certified model checking of reachability and expected reward properties in MDPs.
The soundness of these certificates was formalized in \isabelle, increasing their trustworthiness and enabling us to generate a formally verified certificate checker, applicable to non-trivial practically relevant instances.
Our certificates can be generated with moderate overhead via minor, yet careful, modifications of established algorithms like II or PI.
This allows tool developers and competitions~\cite{qcomp20}---for which our certificates provide formally verified reference results---to adopt our proposal with relatively low effort.
%is pthat enables us to provide an efficient and formally verified certificate checker.
%Moreover, the generation of certificates is easy to implement, as it only requires the result (which can even be computed using floating-point arithmetic, provided the guidelines of \Cref{sec:computing} are followed) and ranking functions for qualitative information that is obtained by standard pre-computations~\cite{FKNP11}.
%Thus, every model checker can give formally verified results, and tool competitions such as~\cite{qcomp20} can adopt our certificates for their reference results.
%We highlight that the alternative to certificates is hardly feasible or maintainable, namely continually formally verifying the whole code base with all heuristics.
%
Future work is to develop a more efficient certificate format.
Further, we plan to extend our theory to other quantitative verification settings~\cite{qcomp23}, e.g., stochastic games and $\omega$-regular properties, and make it amenable to techniques such as \emph{symbolic model checking} and \emph{partial exploration}.%g in order to certify even larger MDPs.

%Future work: 
%- certify other fixpoint problems like bisimulation
%- discounted as easy case
%- SG, PE, cont time or space
%- symbolic, since parsing MDP and cert is the main bottleneck, and FV might not even be able to handle MDPs which are too large
%\tw{TACAS page limit: 16 (excluding refs, appendix)}

\paragraph{Data availability statement.}
The models, tools, and scripts to reproduce our experimental evaluation are available at DOI \href{https://doi.org/10.5281/zenodo.14626585}{10.5281/zenodo.14626585}~\cite{artifact}.

\bibliographystyle{splncs04}
\bibliography{references}

\ifarxivelse{%
\newpage
\appendix
\section*{\centering Appendix}

\crefalias{section}{appendix} % Cref calls it Section otherwise
\crefalias{subsection}{appendix} % Cref calls it Section otherwise

\section{Certificates for Expected Rewards: Overview}
\begin{table}[h]
    \caption{
        Our certificates for \emph{expected rewards} with $\starinf$ semantics.
%        These certificates apply to arbitrary finite MDPs with states $\states$, target set $\target \subseteq \states$, reward function $\rew\colon \states \rightarrow \RgeqZero$  and associated Bellman operators $\bellmanrmin$ and $\bellmanrmax$.
    }
    \label{fig:certoverviewrewards}
    \begin{adjustbox}{max width=\linewidth}
        \setlength{\tabcolsep}{4pt}%
        \renewcommand{\arraystretch}{1.05}%
        \begin{tabular}{l l l}        
            \toprule 
            \rowcolor{black!4}\quad\textbf{Certificate} & \textbf{Conditions} & \textit{Explanation} \\ \midrule
            \rowcolor{\colMinUpper!4}\multicolumn{3}{l}{\textcolor{black}{\textbf{Upper bounds} on \textbf{minimal} expected rewards:} ~ $\forall s \in \states \colon \Emin_{s}(\reach\target) \leq \vals(s)$ \hspace*{1cm} \textcolor{gray}{[\Cref{prop:certsInfMinUpperNoWitness}]}} \\
            \rowcolor{\colMinUpper!4}\quad$\vals \in \RgeqZeroInfStates$ & $\bellmanrmin(\vals) \leq \vals$ & \textit{\textbf{min}-Bellman operator \textbf{decreases} value of {all} states}\\
            \rowcolor{\colMinUpper!4}\quad$\rank \in \exnats^\states$ & $\distoprvals{\min}{}(\rank) \leq \rank$ & \textit{$\rank$ upper bounds \textbf{minimal} distances to $\target$ via \textbf{$x$-decr.\ actions}}\\ 
            \rowcolor{\colMinUpper!4}& $\vals(s) {<} \infty \implies \rank(s) {<} \infty$ & \textit{\textbf{finite} expected reward necessitates finite distance} ($\forall s \in \states$)\\[3pt] 
            \rowcolor{\colMaxUpper!4}\multicolumn{3}{l}{\textcolor{black}{\textbf{Upper bounds} on \textbf{maximal} expected rewards:} ~  $\forall s \in \states \colon \Emax_{s}(\reach\target) \leq \vals(s)$ \hspace*{1cm} \textcolor{gray}{[\Cref{prop:certsInfMaxUpper}]}} \\
            \rowcolor{\colMaxUpper!4}\quad$\vals \in \RgeqZeroInfStates$ & $\bellmanrmax(\vals) \leq \vals$ & \textit{\textbf{max}-Bellman operator \textbf{decreases} value of {all} states}\\
            \rowcolor{\colMaxUpper!4}\quad$\rank \in \exnats^\states$ & $\distop{\max}{}(\rank) \leq \rank$ & \textit{$\rank$ upper bounds \textbf{maximal} distances to $\target$} \\ 
            \rowcolor{\colMaxUpper!4}& $\vals(s) {<} \infty \implies \rank(s) {<} \infty$ & \textit{\textbf{finite} expected reward necessitates finite distance} ($\forall s \in \states$)\\[3pt]
            \rowcolor{\colMinLower!4}\multicolumn{3}{l}{\textcolor{black}{\textbf{Lower bounds} on \textbf{minimal} expected rewards:} ~ $\forall s \in \states \colon \Emin_{s}(\reach\target) \geq \vals(s)$ \hspace*{1cm} \textcolor{gray}{[\Cref{prop:certsInfLower}]} } \\
            \rowcolor{\colMinLower!4}\quad$\vals \in \RgeqZeroInfStates$ & $\bellmanrmin(\vals) \geq \vals$ & \textit{\textbf{min}-Bellman operator \textbf{increases} value of {all} states}\\
            \rowcolor{\colMinLower!4}\quad$\rank \in \exnats^\states$ & $\distopmod{\max}(\rank) \leq \rank$ & \textit{$\rank$ upper bounds \textbf{maximal} distances to \textbf{end components}}\\
            \rowcolor{\colMinLower!4}\quad & $\vals(s) {=} \infty \implies \rank(s) {<} \infty$ & \textit{\textbf{infinite} expected reward necessitates finite distance} ($\forall s \in \states$)\\[3pt] 
            \rowcolor{\colMaxLower!4}\multicolumn{3}{l}{\textcolor{black}{\textbf{Lower bounds} on \textbf{maximal} expected rewards:} ~ $\forall s \in \states \colon \Emax_{s}(\reach\target) \geq \vals(s)$ \hspace*{1cm} \textcolor{gray}{[\Cref{prop:certsInfLower}]}} \\
            \rowcolor{\colMaxLower!4}\quad$\vals \in \RgeqZeroInfStates$ & $\bellmanrmax(\vals) \geq \vals$ & \textit{\textbf{max}-Bellman operator \textbf{increases} value of {all} states}\\
            \rowcolor{\colMaxLower!4}\quad$\rank \in \exnats^\states$ & $\distopmod{\min}(\rank) \leq \rank$ & \textit{$\rank$ upper bounds \textbf{minimal} distances to \textbf{end components}}\\
            \rowcolor{\colMaxLower!4}\quad & $\vals(s) {=} \infty \implies \rank(s) {<} \infty$ & \textit{\textbf{infinite} expected reward necessitates finite distance} ($\forall s \in \states$) \\ 
            \bottomrule
        \end{tabular}
    \end{adjustbox}
\end{table}

\section{More Details on Preliminaries}

%\mwin{Need to define step bounded reachability, we use it in lots of proofs. Maybe just define it in a prelim appendix with more details?}

So far, (in \Cref{sec:prelims}) we only defined reachability probabilities and expected rewards w.r.t. probability measures on infinite paths. For some proofs however, we also use reachability probabilities w.r.t. all paths of length $\leq n$ for a fixed $n \in \nats$ and expected rewards w.r.t. all paths of length exactly $n$.
To be precise, these so called step-bounded reachability probabilities are used in the proof of \Cref{lem:distopequiv} as well as in the proof of \Cref{lem:infMaxUpperLfp}, and the step-bounded cumulative rewards are used in the proof of \Cref{lem:infMaxUpperLfp} as well as in the proof of \Cref{thm:rhoUpperLfp}.

%These concepts define reachability probabilities and, respectively, expected rewards for path of a certain length.

Fix a DTMC $(\states,\act,\prmdp)$ and a \emph{target} (or goal) set $\target \in\states$. Let $n \in \nats$ and consider a finite path $s_0s_1 \dots s_{k}$ of length $k \leq n$. We define the random variable
\[
    \lozenge^{\leq n} \target(s_0s_1 \dots s_{k}) =
    \begin{cases}
        1 & \text{if } \exists i \colon s_i \in \target\\
        0 & \text{else.}
    \end{cases}
\]
For a state $s \in \states$, we define the step-bounded reachability probabilities $\mathbb{P}_s(\lozenge^{\leq n}\target)$ from $s$ towards $\target$ as the expected value (Lebesgue integral) of $\lozenge^{\leq n} \target$.

Now fix a \emph{reward function} $\rew \colon \states \to \RgeqZero$ and a finite path $s_0s_1 \dots s_{n-1} \in \states^{n}$. We define the random variable
\[
    \rew^{\lozenge^{= n}\target}(s_0s_1 \dots s_{n-1}) =
    \begin{cases}
        \sum_{i = 0}^{\min\{j\mid s_j\in \target\}} \rew(s_i) & \text{if } \exists j \colon s_j \in \target\\
        \sum_{i = 0}^{n-1} \rew(s_i) & \text{else,}
    \end{cases}
\]
for both, the \enquote{$\starinf$} and the \enquote{$\starrho$} semantics of expected rewards. For a state $s \in \states$, we define the step-bounded cumulative rewards $\E_s(\rew^{\lozenge^{= n}\target})$ from $s$ towards $\target$ as the expected value of $\rew^{\lozenge^{= n}\target}$. We often write $\E_s(\lozenge^{= n}\target)$ instead of $\E_s(\rew^{\lozenge^{= n}\target})$ when $\rew$ is clear from the context.

Finally, given an MDP $\mdp = (\states, \act, \prmdp)$, a state $s \in \states$, a target set $\target\subseteq\states$, a reward function $\rew \colon \states \to \RgeqZero$, and $\opt \in \{\min, \max\}$ we define the \emph{optimal step-bounded reachability probability} $\propt_{s}(\reach^{\leq n} \target) = \opt_{\strat} \prind{\strat}_{s}(\reach^{\leq n} \target)$ and the \emph{optimal step-bounded cumulative expected reward} $\Eopt_{s}(\rew^{\reach^{= n} \target}) = \opt_{\strat} \Estrat_{s}(\rew^{\reach^{= n} \target})$, where $\prind{\strat}_s(\reach^{\leq n} \target)$ and $\Estrat_{s}(\rew^{\reach^{= n} \target})$ are the step-bounded reachability probabilities and step-bounded cumulative expected rewards in the induced DTMC $\mdp^\strat$.

With the \enquote{$\starrho$} semantics of expected rewards, we have for all strategies $\sigma$ that $\Estrat_s(\lozenge \target) = \sup_{n \in \nats}\Estrat_s(\lozenge^{= n} \target)$. With the \enquote{$\starinf$} semantics, the same holds only if $\prind{\strat}_s(\lozenge\target) = 1$ holds for state $s$, i.e. only if all infinite paths starting from $s$ reach the target in the induced DTMC $\mdp^\strat$.
\section{Proofs of \Cref{sec:qualreachandsafe}}

All proofs from \Cref{sec:qualreachandsafe} required for soundness were formally verified in \isabelle.
We include additional conventional pen-and-paper proofs in this appendix.
The only proof not yet formally verified is the completeness-proof of the certificates for almost-sure reachability (\Cref{lem:distopmod-completeness}).
This does not affect the correctness of the certificate checker we propose.

\subsection{Proof of \Cref{lem:distopuniquefp}}
\label{app:distopuniquefp}

\distopuniquefp*
\begin{proof}
    Towards a contradiction, assume that $\rank_1 = \lfp{\distop{\opt}{}} \neq \gfp{\distop{\opt}{}} = \rank_2$.
    Pick $s \in \states$ such that $\rank_1(s) < \rank_2(s)$ and $\rank_1(s)$ is minimal (a minimum exists because $\states$ is finite by assumption), i.e.,
    \begin{align}
        \forall t \in \states\colon~\rank_1(t) < \rank_1(s) \text{ implies } \rank_1(t) = \rank_2(t). \label{eq:distopdemonuniquefpminimalrank}
    \end{align}
    It follows that $s \in \states \setminus \target$ as otherwise $\rank_1(s) = \rank_2(s) = 0$.
    We proceed by a case distinction for choosing $\opt \in \{\min, \max\}$.
    
    \begin{itemize}
        \item Case $\opt = \max$:\\
        Let $b \in \act(s)$ and $t \in \post(s,b)$ with 
        \[
            \rank_2(s) ~=~ 1 + \min\limits_{s' \in \post(s,b)} \rank_2(s')
            \quad\text{and}\quad
            \rank_1(t) ~=~ \min\limits_{s' \in \post(s,b)} \rank_1(s').
        \]
        In particular, such $b$ exists since $r_2(s)$ is a fixed point of $\distop{\max}{}$.
        Our assumption that $r_1(s) < r_2(s)$ is contradicted since
        \begin{align*}
            \rank_1(s) 
            ~\geq~ 1 + \min\limits_{s' \in \post(s,b)} \rank_1(s')
            ~=~ 1 + \rank_1(t)
            & ~\overset{(*)}{=}~  1 + \rank_2(t)\\
            & ~\ge~ 1 + \min\limits_{s' \in \post(s,b)} \rank_2(s')\\
            & ~=~  \rank_2(s).
        \end{align*}
        The equality $(*)$ follows by applying $\eqref{eq:distopdemonuniquefpminimalrank}$ and the fact that $\rank_1(s) \ge 1 + \rank_1(t)$ implies $\rank_1(t) < \rank_1(s)$.
    
        \item Case $\opt = \min$:\\
        Let $b \in \act(s)$ and $t \in \post(s,b)$ with 
        \[
            \rank_1(s) ~=~ 1 + \min\limits_{s' \in \post(s,b)} \rank_1(s')
            \quad\text{and}\quad
            \rank_1(t) ~=~ \min\limits_{s' \in \post(s,b)} \rank_1(s').
        \]
        In particular, such $b$ exists since $r_1(s)$ is a fixed point of $\distop{\min}{}$.
        Our assumption that $r_1(s) < r_2(s)$ is contradicted since
        \begin{align*}
            \rank_1(s) 
            ~=~ 1 + \min\limits_{s' \in \post(s,b)} \rank_1(s')
            ~=~ 1 + \rank_1(t)
            & ~\overset{(*)}{=}~  1 + \rank_2(t)\\
            & ~\geq~ 1 + \min\limits_{s' \in \post(s,b)} \rank_2(s')\\
            & ~\geq~  \rank_2(s).
        \end{align*}
        The equality $(*)$ follows by applying $\eqref{eq:distopdemonuniquefpminimalrank}$ and the fact that $\rank_1(s) = 1 + \rank_1(t)$ implies $\rank_1(t) < \rank_1(s)$.
    \end{itemize}\qed
\end{proof}

\subsection{Proof of \Cref{lem:distopequiv}}
\label{app:rankandqualreach}

\distopequiv*
\begin{proof}
We show both directions seperately.
\begin{itemize}
    \item ``$\Rightarrow$'': We first prove an auxiliary statement. Let $\strat$ be a strategy, $s \in \states$ and let $\rank'$ be the unique fixed point of $\distop{\strat}{}$ (the distance operator $\distop{\min}{} = \distop{\max}{}$ in the induced DTMC $\mdp^\strat$). Show by induction that $\forall n \in \mathbb{N}$:
    \[
        \prind{\strat}_s(\lozenge^{\leq n}\target) > 0 \implies \rank'(s) \leq n
    \]
    If $n = 0$ then $s \in \target$ and thus $\rank'(s) = 0$.
    If $n > 0$ then $\prind{\strat}_s(\lozenge^{\leq n}\target) > 0$ implies that there is an $s' \in \post(s, \strat(s))$ with $\prind{\strat}_{s'}(\lozenge^{\leq n-1}\target) > 0$ and by induction hypothesis $\rank'(s') \leq n-1$. This means however that
    \[
        \rank'(s) \leq 1 + \rank'(s') \leq n.
    \]

    Now, we show the direction ``$\Rightarrow$'' by contraposition. If $\opt = \max$, then, since $\prmin_s(\lozenge \target) > 0$ and $\mdp$ is finite, there exists an $n \in \mathbb{N}$ such that for all strategies $\strat$, it holds that $\prind{\strat}_s(\lozenge^{\leq n}\target) > 0$. The previous induction implies that for all strategies $\strat$, it holds that $\fp{\distop{\strat}{}} \leq n$ which implies that $\rank(s) \leq n < \infty$.

    If $\opt = \min$, then, since $\prmax_s(\lozenge \target) > 0$ and $\mdp$ is finite, there exists an $n \in \mathbb{N}$ and a strategy $\strat$ such that $\prind{\strat}_s(\lozenge^{\leq n}\target) > 0$. The previous induction implies that $\fp{\distop{\strat}{}} \leq n$ which implies that  $\rank(s) \leq n < \infty$.
    
    \item ``$\Rightarrow$'': If $\opt = \max$, the proof is similar to \cite[Lemma 10.110]{principles}.
    We show by induction on $n \in \nats$ that for all $s \in \states$, $\rank(s) = n$ implies $\prmin_s(\reach\target) > 0$.
    
    For $n=0$ note that $\rank(s) = n = 0$ implies $s \in \target$, hence $\prmin_s(\reach\target) = 1 > 0$.
    
    Now let $n \geq 0$ be arbitrary.
    We show the claim for $n+1$.
    Let $s \in \states$ be such that $\rank(s) = n+1$ (this implies $s \notin \target$).
    Consider an arbitrary MD strategy $\strat$.
    Since $\rank$ is a fixed point of $\distop{\max}{}$, 
    \[
        \exists s' \in \states \colon~
        \prmdp(s,\strat(s),s') > 0
        ~\land~
        \rank(s) \geq 1 + \rank(s')
        ~.
    \]
    Thus $\rank(s') \leq n$ and by the induction hypothesis, $\prmin_{s'}(\reach\target) > 0$.
    But
    \begin{align*}
        \prind{\strat}_s(\reach\target)
        =
        \sum_{s''} P(s,\strat(s),s'') \cdot  \prind{\strat}_{s''}(\reach\target)
        & \geq
        \prmdp(s,\strat(s),s') \cdot \prind{\strat}_{s'}(\reach\target)\\
        & \geq
        \prmdp(s,\strat(s),s') \cdot \prmin_{s'}(\reach\target) > 0
        ~.
    \end{align*}
    It follows that $\prmin_s(\reach\target) > 0$ since $\strat$ was arbitrary.

    If $\opt = \min$, we show by induction on $n \in \nats$ that for all $s \in \states$, $\rank(s) = n$ implies $\prmax_s(\reach\target) > 0$.
    
    For $n=0$ note that $\rank(s) = n = 0$ implies $s \in \target$, hence $\prmax_s(\reach\target) = 1 > 0$.
    
    Now let $n \geq 0$ be arbitrary.
    We show the claim for $n+1$.
    Let $s \in \states$ be such that $\rank(s) = n+1$ (this implies $s \notin \target$).
    Since $\rank$ is a fixed point of $\distop{\min}{}$, there exists an MD strat $\strat$, such that
    \[
        \exists s' \in \states \colon~
        \prmdp(s,\strat(s),s') > 0
        ~\land~
        \rank(s) = 1 + \rank(s')
        ~.
    \]
    Thus $\rank(s') = n$ and by the induction hypothesis, $\prmax_{s'}(\reach\target) > 0$.
    But
    \begin{align*}
        \prind{\strat}_s(\reach\target)
        =
        \sum_{s''} P(s,\strat(s),s'') \cdot  \prind{\strat}_{s''}(\reach\target)
        & \geq
        \prmdp(s,\strat(s),s') \cdot \prind{\strat}_{s'}(\reach\target)\\
        & \geq
        \prmdp(s,\strat(s),s') \cdot \prmax_{s'}(\reach\target) > 0
        ~.
    \end{align*}
    It follows that $\prmax_s(\reach\target) > 0$ since we found at least one strategy $\strat$ such that $\prind{\strat}_s(\reach\target) > 0$.
    \end{itemize}\qed

\end{proof}

\subsection{Proof of \Cref{prop:certQualReach}}
\label{app:certQualReach}

\certQualReach*
\begin{proof}
    Suppose that $\rank$ is valid and $\rank(s) < \infty$.
	Then by \Cref{thm:knastertarski} (Knaster-Tarski) we have $(\fp{\distop{\nopt}{}})(s) \leq \rank(s) < \infty$.
	We conclude that $\propt_s(\reach\target) > 0$ by \Cref{lem:distopequiv}.\qed
\end{proof}

\subsection{Proof of \Cref{lem:distopmodequiv}}

Before we prove the claim, we need to show that the least fixed point of the operator is well-defined.
To this end, we show that the operator is monotonous, which together with \Cref{thm:knastertarski} implies that the least fixed point exists.

\begin{lemma}[Monotonicity of Complementary Distance Operator]\label{lem:distopmodMonotone}
	Let $\rank_1, \rank_2 \colon 	~ \exnats^\states \to \exnats^\states$ be two ranking functions with $\rank_1 \leq \rank_2$.
	Then $\distopmod{\opt}(\rank_1) \leq \distopmod{\opt}(\rank_2)$.
\end{lemma}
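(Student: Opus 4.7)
The plan is to reduce monotonicity of $\distopmod{\opt}$ to monotonicity of the per-action expression, and then to a clean algebraic reformulation that dispenses with the awkward Iverson bracket. Concretely, for $s \in \target$ both sides equal $\infty$, so that case is immediate. For $s \notin \target$, since $\opt \in \{\min, \max\}$ preserves pointwise monotonicity when applied over a fixed action set $\act(s)$, it suffices to fix an arbitrary $a \in \act(s)$ and show that
\[
g_{s,a}(\rank) ~=~ \min_{s' \in \post(s,a)} \rank(s') ~+~ \bigl[\exists u,v \in \post(s,a)\colon \rank(u) \neq \rank(v)\bigr]
\]
is monotone in $\rank$.

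The main obstacle is that the Iverson bracket is \emph{not} monotone on its own: raising $\rank$ on a single successor can flip the bracket from $0$ to $1$ or from $1$ to $0$. I plan to sidestep this by proving the algebraic identity
\[
g_{s,a}(\rank) ~=~ \min\!\left(\ \min_{s' \in \post(s,a)} \rank(s') + 1,\ \ \max_{s' \in \post(s,a)} \rank(s')\ \right),
\]
verified by a two-case analysis in $\exnats$: if $\max = \min$, both sides equal $\min$; if $\max > \min$, then $\max \geq \min + 1$ in $\exnats$ (including the case $\min = \infty$, which forces $\max = \min$ and is vacuous), so the right-hand side equals $\min + 1$, matching the bracketed expression.

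Given this identity, monotonicity of $g_{s,a}$ is straightforward: $\min_{s'} \rank(s')$ and $\max_{s'} \rank(s')$ are monotone in $\rank$ (componentwise), adding $1$ in $\exnats$ is monotone, and the binary $\min$ of two monotone functions is monotone. Combining with the trivial $\target$-case and with monotonicity of $\opt$ over $\act(s)$ yields $\distopmod{\opt}(\rank_1) \leq \distopmod{\opt}(\rank_2)$ pointwise, completing the proof. As a by-product, by \Cref{thm:knastertarski} this justifies the existence of $\lfp{\distopmod{\opt}}$ used throughout \Cref{sec:qualreachandsafe}.
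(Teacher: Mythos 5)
Your proposal is correct, and it takes a genuinely different route from the paper's proof. The paper attacks the per-action inequality directly via a three-way case analysis on the values of the two Iverson brackets (both brackets $0$; the $\rank_2$-bracket $1$; the $\rank_2$-bracket $0$ but the $\rank_1$-bracket $1$), with the last case requiring the observation that ranks are extended naturals so that $\rank_1(t) \neq \rank_1(t')$ with $t$ a minimizer forces $\rank_1(t)+1 \leq \rank_1(t')$. Your identity
$g_{s,a}(\rank) = \min\bigl(\min_{s' \in \post(s,a)} \rank(s') + 1,\ \max_{s' \in \post(s,a)} \rank(s')\bigr)$
absorbs exactly that arithmetic observation into a single algebraic reformulation, after which monotonicity is immediate because every ingredient ($\min$ and $\max$ over a fixed successor set, $+1$ in $\exnats$, binary $\min$, and $\opt$ over the fixed action set $\act(s)$) is monotone. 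The edge cases you flag ($\min = \infty$ forcing $\max = \min$, and $\max = \infty$ when $\min$ is finite) are handled correctly. What your approach buys is a shorter, more conceptual proof that also explains \emph{why} the operator is monotone despite the non-monotone bracket; what the paper's direct case analysis buys is that it stays syntactically close to \Cref{def:distopmod} as written, which matters for the subsequent proofs (e.g., the completeness argument in the appendix) that reason about the bracket's value explicitly. Both are valid; yours would be a clean drop-in replacement.
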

\begin{proof}
	This proof is a mostly straightforward application of definitions, mainly \Cref{def:distopmod}.	
	Recall that the comparison $\rank_1 \leq \rank_2$ is point-wise, so we have to prove for all $s\in\states$ that $\distopmod{\opt}(\rank_1)(s) \leq \distopmod{\opt}(\rank_2)(s)$.
	For $s\in\target$, we have $\distopmod{\opt}(\rank_1) = \distopmod{\opt}(\rank_2) = \infty$ by \Cref{def:distopmod}.
	
	For $s\in\states \setminus \target$, the proof is more involved.
	We prove below that for all $a\in\act(s)$ we have 
	\begin{align}
		&\underset{s' \in \post(s,a)}{\min}  \rank_1(s')
		~+~ [\exists s',s'' \in \post(s,a): \rank_1(s')\neq\rank_1(s'')] \notag\\
		\leq &\underset{s' \in \post(s,a)}{\min}  \rank_2(s')
		~+~ [\exists s',s'' \in \post(s,a): \rank_2(s')\neq\rank_2(s'')] \label{eq:distopmodproof}
	\end{align}
	
	Using this, we prove the claim of monotonicity as follows (only giving the case of $\opt=\max$, as the other case is analogous):
	\begin{align*}
		\distopmod{\opt}(\rank_1) &= 
		\underset{a \in \act(s)}{\max} ~ \underset{s' \in \post(s,a)}{\min}  \rank_1(s')
		~+~ [\exists s',s'' \in \post(s,a): \rank_1(s')\neq\rank_1(s'')] \tag{By \Cref{def:distopmod}.}\\
		&= 
		\underset{s' \in \post(s,a_1)}{\min}  \rank_1(s')
		~+~ [\exists s',s'' \in \post(s,a_1): \rank_1(s')\neq\rank_1(s'')] 
		\tag{By picking an arbitrary optimal action $a_1$}\\
		&\leq
		\underset{s' \in \post(s,a_1)}{\min}  \rank_2(s')
		~+~ [\exists s',s'' \in \post(s,a_1): \rank_2(s')\neq\rank_2(s'')]
		\tag{By \Cref{eq:distopmodproof}}\\
		&\leq
		\underset{a \in \act(s)}{\max} ~ \underset{s' \in \post(s,a)}{\min}  \rank_2(s')
		~+~ [\exists s',s'' \in \post(s,a): \rank_2(s')\neq\rank_2(s'')]
		\tag{By definition of $\max$}\\
		&=
		\distopmod{\opt}(\rank_2) \tag{By \Cref{def:distopmod}}			
	\end{align*}
	
	It remains to prove \Cref{eq:distopmodproof}.
	For this, we let $a\in\act(s)$ be an arbitrary action and proceed by case distinctions about the Iverson brackets.
	\begin{itemize}
		\item \textbf{Case 1} ($\rank_2$ Iverson evaluates to 1): $\exists s',s'' \in \post(s,a): \rank_2(s')\neq\rank_2(s'')$.
		The following chain of equations proves our goal:
		\begin{align*}
			&\phantom{\leq} \underset{s' \in \post(s,a)}{\min}  \rank_1(s')
			~+~ [\exists s',s'' \in \post(s,a): \rank_1(s')\neq\rank_1(s'')]\\
			&\leq	
			\underset{s' \in \post(s,a)}{\min}  \rank_1(s')
			~+~ 1 
			\tag{Since Iverson bracket is at most 1}\\
			&\leq 
			\underset{s' \in \post(s,a)}{\min}  \rank_2(s')
			~+~ 1 
			\tag{By assumption: $\rank_1(s')\leq\rank_2(s')$}\\
			&=
			\underset{s' \in \post(s,a)}{\min}  \rank_2(s')
			~+~ [\exists s',s'' \in \post(s,a): \rank_2(s')\neq\rank_2(s'')]
			\tag{By Case 1 assumption: Iverson bracket evaluates to 1}\\
		\end{align*}
		\item \textbf{Case 2} ($\rank_2$ Iverson evaluates to 0): 
		$\forall s',s'' \in \post(s,a): \rank_2(s')=\rank_2(s'')$.
		\begin{itemize}
			\item \textbf{Case 2a} ($\rank_1$ Iverson evaluates to 0):
			$\forall s',s'' \in \post(s,a): \rank_1(s')=\rank_1(s'')$.
			\begin{align*}
				&\phantom{\leq} \underset{s' \in \post(s,a)}{\min}  \rank_1(s')
				~+~ [\exists s',s'' \in \post(s,a): \rank_1(s')\neq\rank_1(s'')]\\
				&=	
				\underset{s' \in \post(s,a)}{\min}  \rank_1(s') ~+~ 0
				\tag{By Case 2a assumption: Iverson bracket evaluates to 0}\\
				&\leq 
				\underset{s' \in \post(s,a)}{\min}  \rank_2(s') ~+~ 0
				\tag{By assumption: $\rank_1(s')\leq\rank_2(s')$}\\
				&=
				\underset{s' \in \post(s,a)}{\min}  \rank_2(s')
				~+~ [\exists s',s'' \in \post(s,a): \rank_2(s')\neq\rank_2(s'')]
				\tag{By Case 2 assumption: Iverson bracket evaluates to 0}\\
			\end{align*}
			\item \textbf{Case 2b} ($\rank_2$ Iverson evaluates to 1): 
			$\exists s',s'' \in \post(s,a_1): \rank_1(s')\neq\rank_1(s'')$.
			
			Let $t \in \argmin_{s' \in \post(s,a)} r_1(t)$. 
			By Case 2b assumption, there exists a $t'$ with $\rank_1(t)\neq\rank_1(t')$; since ranks are natural numbers and $t$ was picked as the $\argmin$, we have $\rank_1(t) +1 \leq \rank_1(t')$.
			\begin{align*}
				&\phantom{\leq} \underset{s' \in \post(s,a)}{\min}  \rank_1(s')
				~+~ [\exists s',s'' \in \post(s,a): \rank_1(s')\neq\rank_1(s'')]\\
				&=	
				\rank_1(t) +1 
				\tag{By choice of $t$ and Case 2b assumption}\\
				&\leq
				\rank_1(t')
				\tag{By argument above}\\
				&\leq 
				\rank_2(t') 
				\tag{By assumption: $\rank_1(s')\leq\rank_2(s')$}\\
				&=
				\underset{s' \in \post(s,a)}{\min}  \rank_2(s')
				\tag{By Case 2 assumption, all successors have the same $\rank_2$}\\
				&=
				\underset{s' \in \post(s,a)}{\min}  \rank_2(s')
				~+~ [\exists s',s'' \in \post(s,a): \rank_2(s')\neq\rank_2(s'')]
				\tag{By Case 2 assumption: Iverson bracket evaluates to 0}\\
			\end{align*}
		\end{itemize}
	\end{itemize}
\qed
\end{proof}

Moreover, the following claim is useful and instructive.
Define  the set of sink states that cannot reach the target set under optimal strategies.
\[Z := \{s \mid \pr_s^{\opt}(\lozenge \target) = 0\}\] 
(Note that when $\opt=\min$, these states might have a path to the target set, but also a way to avoid reaching it.)
Then we have:
\begin{lemma}\label{lem:ZiffRankZero}
	Let $\opt \in \{\min, \max\}$ and let $\rank = \lfp{\distopmod{\opt}}$.
	It holds for all $s \in \states$ that $\rank(s) = 0 \iff s\in Z$.
\end{lemma}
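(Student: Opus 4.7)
The plan is to prove the two implications separately. The ``$\Leftarrow$'' direction (from $Z$ to rank $0$) is obtained by fixed point induction via \Cref{thm:knastertarski}, and the ``$\Rightarrow$'' direction (from rank $0$ to $Z$) is obtained by exploiting that $\rank$ is itself a fixed point together with the Iverson-bracket trick in \Cref{def:distopmod}. In both directions I would use the fact, which follows from \Cref{thm:reachislfp} and the Bellman equation for $\bellmanopt$, that $Z$ enjoys the following closure property: for $s \in Z \setminus \target$, if $\opt = \max$ then $\post(s,a) \subseteq Z$ for \emph{every} $a \in \act(s)$, and if $\opt = \min$ then $\post(s,a) \subseteq Z$ for \emph{some} $a \in \act(s)$.

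For $s \in Z \Rightarrow \rank(s) = 0$, I would introduce the auxiliary function $\tilde r \in \exnats^\states$ with $\tilde r(s) = 0$ for $s \in Z$ and $\tilde r(s) = \infty$ otherwise, and verify that $\distopmod{\opt}(\tilde r) \leq \tilde r$. For $s \in \target$ or $s \notin Z$ this is trivial since $\tilde r(s) = \infty$ (note $\target \cap Z = \emptyset$). For $s \in Z \setminus \target$, the closure property guarantees an action $a$ (every action, if $\opt=\max$) with $\post(s,a) \subseteq Z$; for such $a$, all successors satisfy $\tilde r(s') = 0$, so $\min_{s' \in \post(s,a)} \tilde r(s') = 0$ and the Iverson bracket vanishes, giving $\distopmod{\opt}(\tilde r)(s) = 0 = \tilde r(s)$. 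Fixed point induction then yields $\rank = \lfp{\distopmod{\opt}} \leq \tilde r$, so $\rank(s) \leq 0$ whenever $s \in Z$.

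For $\rank(s) = 0 \Rightarrow s \in Z$, let $N := \{s \mid \rank(s) = 0\}$. Since $\rank$ is a fixed point of $\distopmod{\opt}$ and targets receive value $\infty$, we have $N \cap \target = \emptyset$. The key step is to show that $N$ is closed under appropriate transitions: for $s \in N \setminus \target$, fixed-point equality gives, for each (resp.\ some, if $\opt = \min$) action $a$, $\min_{s' \in \post(s,a)} \rank(s') + [\exists u,v \in \post(s,a) \colon \rank(u) \neq \rank(v)] = 0$. Since both summands are non-negative they must each vanish; the minimum being $0$ supplies one successor of rank $0$, and the Iverson bracket being $0$ forces \emph{all} successors of $a$ to have the same rank, hence to lie in $N$. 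The remaining probabilistic conclusion is routine: for $\opt = \min$, any memoryless strategy that at each $s \in N \setminus \target$ selects such an $a$ confines all paths from $s$ to $N$, witnessing $\prmin_s(\reach\target) = 0$; for $\opt = \max$, the closure under all actions means every strategy produces paths staying in $N$, so $\prmax_s(\reach\target) = 0$. In both cases $s \in Z$.

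The main obstacle I foresee is the second direction, where it is essential that the Iverson-bracket term vanishes alongside the minimum. This is precisely why the definition of $\distopmod{\opt}$ increments the distance only when successors disagree: without this feature, a state merely having \emph{one} rank-$0$ successor would already yield rank $0$, but such a state need not lie in $Z$. Making this observation explicit and coupling it with the standard fixed-point characterization of $\prmin$/$\prmax$ in terms of $\bellmanopt$ should yield a clean proof.
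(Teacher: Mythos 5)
Your proposal is correct and follows essentially the same route as the paper's proof: the forward direction exploits that both non-negative summands in the fixed-point equation must vanish (so the rank-$0$ set is closed under all/some actions and avoids $\target$), and the backward direction uses the closure of $Z$ under successors. Your backward direction is in fact slightly cleaner than the paper's, since you package it as an explicit inductive witness $\tilde r$ (equal to $0$ on $Z$, $\infty$ elsewhere) fed into Knaster--Tarski, whereas the paper argues somewhat informally that the least fixed point is $0$ on $Z$ ``independent of the ranks of other states.''
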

\begin{proof}
	\noindent\textbf{Forward Direction $\implies$:}
	In words, we want to prove that if a state has rank 0, it is a sink state.
	We first prove the claim for $\opt=\max$ and then mention the differences for the $\opt=\min$ case.
	
	\noindent\textbf{Case $\opt=\max$:}
	
	Assume $\rank(s) = 0$.
	Since $\rank$ is a fixed point, we know that $\distopmod{\max}(\rank)(s) = \rank(s) = 0$. Thus, by \Cref{def:distopmod}, we know that for all $a\in\act(s), s'\in\post(s,a)$, we have $\rank(s')=0$ (all actions since we take the action with maximum rank, and all successors since if there was one with a larger rank, the Iverson bracket would evaluate to 1).
	This is not only true for $s$, but for all states with rank~0.
	
	Consider the set $X^*$, defined as the fixpoint of the following recursion:
	$X_0 = \{s\}$ and $X_{i+1} = X_i \cup \{p \in \post(q,a) \mid \exists q \in X_i, a\in\act(q)\}$.
	$X^*$ is well defined, since there are finitely many states, and every iteration adds a state or terminates.
	
	By the above argument, all states in $X^*$ have rank 0.
	Moreover, $X^*$ is exactly the set of states reachable from $s$ under any strategy.
	Since target states have rank infinity, there can be no target state in $X^*$, and thus no target state reachable from $s$. 
	Thus, $\pr_s^{\max}(\lozenge \target) = 0$, and $s\in Z$
	\smallskip
	%		
	%		\textbf{Case $\opt=\max, \impliedby$:}
	%		
	%		For the other direction of the equivalence, the proof can essentially be read backwards, with one notable exception: The fact that all states reachable from $s$ are not targets does not immediately imply that they all have rank 0.
	%		For example, assigning $\infty$ to them all would also be a fixed point.
	%		However, assigning 0 to them all is \textbf{a} fixed point, and $\rank$ is defined as the least fixed point of $\distopmod{\max}$. Thus, for all of them, $\rank$ is 0.
	
	\noindent\textbf{Case $\opt=\min$:}
	
	In the case of $\opt=\min$, the argument changes slightly: 
	The set $X^*$ is not the set of all reachable states, but the set of states reachable under some particular strategy that always picks the action minimizing the rank.
	Still, all states in $X^*$ have the property that they have rank 0 and an available action that keeps the path in $X^*$.
	Thus, there exists a strategy that keeps the play in $X^*$ which cannot contain a target state. Thus,  $\pr_s^{\min}(\lozenge \target) = 0$, and $s \in Z$.
	\medskip
	%		The backward direction again uses the fact that $\rank$ is the least fixed point.

	\noindent\textbf{Backward Direction $\impliedby$:}
	Let $\rank'(q) = 0$ for all $q\in Z$, and be arbitrary otherwise.
	We show that for all $q\in Z$, $\distopmod{\opt}(\rank')(q)= \rank'(q)$, and hence the least fixed point of $\distopmod{\opt}$ assigns 0 to all states in $Z$ (independent of the ranks of other states).
	Since $Z\cap T=\emptyset$, we always use the second case of \Cref{def:distopmod}.
	
	\noindent\textbf{Case $\opt=\max$:}
	
	Since $q\in Z$, $\pr_q^{\max}(\lozenge \target) = 0$.
	Hence, for all actions $a \in \act(q)$, all successors $q'\in \post(q,a)$ are also in $Z$, since otherwise $q$ could reach a state that has positive reachability probability, and hence could reach the target set with positive probability.
	Thus, for all actions, the Iverson bracket evaluates to 0, and thus, $\distopmod{\max}(\rank')(q) = 0$, concluding the case. 
	\smallskip
	
	\noindent\textbf{Case $\opt=\min$:}
	 
	Since $q\in Z$, $\pr_q^{\min}(\lozenge \target) = 0$.
	Hence, there exists an action $a \in \act(q)$ where all $q'\in \post(q,a)$ are also in $Z$, since otherwise $q$ would be forced to reach a state with positive reachability probability, and reach the target set with positive probability.
	For this action, the Iverson bracket evaluates to 0, and thus, $\distopmod{\min}(\rank')(q) = 0$, concluding the case.
	\qed
	
\end{proof}

\newcommand{\UZ}{\states \setminus \target\mathbin\mathsf{U}Z}
\newcommand{\pZ}{z(s)}

\begin{lemma}[Soundness]
	\label{lem:distopmod-soundness}%
	Let $\opt \in \{\min, \max\}$ and let $\rank = \lfp{\distopmod{\opt}}$.
	Let $\sigma$ be an optimal strategy for reaching $\target$ according to $\opt$.
	Let $\pZ = \pr_s^{\sigma}(\UZ)$ for $s \in \states$,
	where $\pr_s^{\sigma}(\UZ)$ denotes the probability that the play reaches $Z$ before reaching $\target$ under strategy $\sigma$ (constrained reachability problem).
	It holds for all $s \in \states$ that $\rank(s) < \infty \implies \pZ > 0$.
\end{lemma}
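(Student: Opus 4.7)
}
The plan is to combine \Cref{lem:distopmodequiv} with a classical BSCC-decomposition argument on the induced DTMC $\mdp^\sigma$. First, since $\rank = \lfp{\distopmod{\opt}}$ and $\rank(s) < \infty$, \Cref{lem:distopmodequiv} immediately yields $\pr^{\opt}_s(\reach\target) < 1$. Because $\sigma$ is a (uniformly) $\opt$-optimal memoryless strategy for reachability---such a strategy exists by the standard theory of finite MDPs---we obtain $\pr^{\sigma}_s(\reach\target) = \pr^{\opt}_s(\reach\target) < 1$, so that the event ``never visit $\target$'' in $\mdp^\sigma$ has probability $p \coloneqq 1 - \pr^{\sigma}_s(\reach\target) > 0$.

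Second, I would invoke the standard fact that in the finite DTMC $\mdp^\sigma$ almost every infinite path eventually enters a bottom strongly connected component (BSCC). Conditioning on never visiting $\target$, the entered BSCC $B$ must be disjoint from $\target$ (any BSCC meeting $\target$ is reached iff $\target$ is reached). Hence hitting some BSCC $B$ with $B \cap \target = \emptyset$ happens with probability at least $p$.

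Third, I would show that every state in such a BSCC lies in $Z$, which finishes the proof. Let $q \in B$ with $B \cap \target = \emptyset$. Since $B$ is absorbing in $\mdp^\sigma$ and contains no $\target$ state, $\pr^{\sigma}_q(\reach\target) = 0$. A brief case distinction on $\opt$ then shows $q \in Z$: if $\opt = \max$, then by optimality $\pr^{\max}_q(\reach\target) = \pr^{\sigma}_q(\reach\target) = 0$; if $\opt = \min$, then $\pr^{\min}_q(\reach\target) \leq \pr^{\sigma}_q(\reach\target) = 0$. In either case $q \in Z$, so reaching a non-target BSCC implies reaching $Z$ before (in fact, without ever) reaching $\target$. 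Therefore $\pZ = \pr^{\sigma}_s(\UZ) \geq p > 0$, as claimed.

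The main delicate point I foresee is the case distinction at the end, where the implication $\pr^{\sigma}_q(\reach\target) = 0 \Rightarrow q \in Z$ uses the optimality of $\sigma$ in subtly different directions for $\min$ and $\max$. The existence of a uniformly $\opt$-optimal memoryless deterministic strategy $\sigma$ is classical and will simply be cited rather than reproved. No inductive argument on $\rank(s)$ is needed, since the whole statement collapses into the single qualitative fact $\rank(s) < \infty \Rightarrow \pr^\opt_s(\reach\target) < 1$ plus a routine DTMC BSCC analysis.
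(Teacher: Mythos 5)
There is a genuine gap, and it is a circularity rather than a technical slip. Your very first step invokes \Cref{lem:distopmodequiv} to get $\rank(s) < \infty \implies \propt_s(\reach\target) < 1$. But in this paper \Cref{lem:distopmodequiv} is not an independent fact: its proof consists precisely of combining \Cref{lem:distopmod-soundness} (the statement you are asked to prove) with \Cref{lem:distopmod-completeness}, after rephrasing $\propt_s(\reach\target) < 1$ as $\pZ > 0$ via a standard equivalence. The direction of \Cref{lem:distopmodequiv} you cite is exactly the one that is \emph{derived from} the soundness lemma. So your proposal reduces the lemma to itself. This is not merely a citation issue: the implication $\rank(s) < \infty \implies \propt_s(\reach\target) < 1$ is the entire mathematical content here. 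A priori, finiteness of $\lfp{\distopmod{\opt}}$ at $s$ says nothing about reachability probabilities --- the whole reason the operator carries an Iverson bracket and the reason one must take the \emph{least} fixed point is to make this implication true (without the bracket, states in end components would receive infinite rank, or with the wrong fixed point everything could be $\infty$). Establishing it requires an argument that actually uses the fixed-point equation for $\rank$; the paper does this by induction on $n$ with $\rank(s) \leq n$, with a base case $\rank(s)=0 \iff s \in Z$ and case distinctions on the value of the Iverson bracket (the bracket-$0$ case needs a separate reachable-set/end-component analysis). You explicitly assert that ``no inductive argument on $\rank(s)$ is needed,'' which is where the proof collapses.

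The remainder of your argument --- that $\pr^{\sigma}_s(\reach\target) < 1$ implies $\pZ > 0$ via the BSCC decomposition of $\mdp^\sigma$, with the careful $\min$/$\max$ case distinction showing that every state of a non-target BSCC lies in $Z$ --- is correct, but it only reproves one direction of the ``standard'' equivalence $\propt_s(\reach\target) < 1 \iff \pZ > 0$ that the paper already takes for granted when deducing \Cref{lem:distopmodequiv} from soundness and completeness. To repair the proof you must replace step one by a direct argument from $\rank = \lfp{\distopmod{\opt}}$ and $\rank(s) < \infty$ to $\pZ > 0$ (or to $\propt_s(\reach\target) < 1$), without appealing to \Cref{lem:distopmodequiv}.
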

\begin{proof}	
	We show by induction the following statement:
	$
	\rank(s) \leq n \implies \pZ > 0.
	$
	Using this, for every natural number $n$, we have $\rank(s) = n \implies \pZ > 0$.
	Since $\exnats = \nats \cup \infty$, this means that $\rank(s) < \infty \implies \pr_s^{\sigma}(\UZ) > 0$.
	
	\noindent \textbf{Base Case:} $\rank(s) = 0 \implies \pZ > 0$
	
		From \Cref{lem:ZiffRankZero}, we know that if $\rank(s)=0$, then $s\in Z$.
		Thus, trivially, for all $\sigma$, $\pZ =1 >0$.
		
	\noindent \textbf{Induction Hypothesis:} $\rank(s) \leq n \implies \pZ > 0$
		
		Intuitively, if the rank of a state is finite, under optimal play it reaches the set of sink states $Z$ (before $\target$) with positive probability.
		Note that because of the Iverson bracket, the counting in $\rank$ is not perfect as in \Cref{lem:distopequiv}, i.e.\ it does not give the length of the shortest path that reaches the set. 
		
	\noindent \textbf{Induction Step:} $\rank(s) \leq n + 1 \implies \pZ > 0$		
		
		\begin{itemize}
		\item \textbf{Case $\opt=\min$:}
		
		Assume $\rank(s) = n+1$.
		Since $\rank$ is a fixed point, we know that also $\distopmod{\max}(\rank)(s) = \rank(s) = n+1$.
		Now there are two possibilities:
		\begin{itemize}
			\item Case 1 (Iverson bracket 1): 
			For an action $a$ minimizing the rank, there exists an $s'\in\post(s,a)\setminus \target$ with $\rank(s') = n$ where the Iverson bracket evaluates to 1.
			Intuitively, the player can decrease the rank, hence making progress towards the sink states $Z$. 
			Formally, using action $a$, the probability to reach in one step a state with $\rank \leq n$ is positive.
			Applying the Induction Hypothesis yields that the probability to reach $Z$ is positive under the strategy that first plays $a$ and then optimally from $s'$. 
			%			Note that we only add 1 to the step bound, but do not use another $\abs{\states}$ many steps yet.
			Note that we only showed existence of a strategy that reaches $Z$ (before $\target$) with positive probability, but did not yet consider that we talk about an optimal strategy. 
			However, for $\opt=\min$, an optimal strategy will never decrease the chance of reaching $Z$, but would only increase it, so proving this existence suffices.
			\item Case 2 (Iverson bracket 0): 
			For all actions $a\in\act(s)$ minimizing the rank, we have that all $s'\in \post(s,a)$ have $\rank(s')=n+1$.
			We use a construction similar to the one in the base case:
			Let $X^*$ be the fixpoint of the following recursion:
			$X_0 = \{s\}$ and $X_{i+1} = X_i \cup \{p \in \post(q,a) \mid \exists q \in X_i, \forall a\in\act(q), \forall p' \in \post(q,a): \rank(p') = n+1\}$.
			Again, since the number of states is finite, $X^*$ is well-defined.
			Also, $X^* \cap \target = \emptyset$, since the rank of all states in $X^*$ is finite.
			If for some $i$, we reach a state $q$ that does not satisfy $\forall a\in\act(q), \forall p' \in \post(q,a): \rank(p') = n+1$, then by the argument in Case 1, this $q$ reaches $Z$ with positive probability.
			By following the strategy that has positive probability to reach $q$ from $s$ and then applying the Induction Hypothesis, we can conclude that $q$ can reach $Z$ with positive probability.
			
			The only remaining case is that for all $q \in X^*$, we have $\forall a\in\act(q), \forall p' \in \post(q,a): \rank(p') = n+1$.
			That means that there exists a strategy that from $s$ keeps the play inside $X^*$; in fact, every strategy does.
			From $X^* \cap \target = \emptyset$, we obtain $\pr_s^{\min}(\lozenge \target) = 0$, and $s\in Z$.
		\end{itemize}
		Overall, we have shown that for $\opt=\min$, if a rank is finite, then there exists a strategy that reaches $Z$ before $\target$ with positive probability.
		So in particular, the optimal strategy for minimizing the probability to reach $\target$ will also do so.
			\item \textbf{Case $\opt=\max$:}
		
		This case is mostly analogous to the previous one, with the differences arising from two sources: The existential and all quantification of actions is inverted, and the argument in the case that we found a set $X^*$ where all states have rank $n+1$ is different.
		
		Assume $\rank(s) = n+1$.
		Since $\rank$ is a fixed point, we know that also $\distopmod{\max}(\rank)(s) = \rank(s) = n+1$.
		Now there are two possibilities:
		\begin{itemize}
			\item Case 1 (Iverson bracket 1): 
			For all actions $a$ maximizing the rank, there exists an $s'\in\post(s,a)$ with $\rank(s') = n$ where the Iverson bracket evaluates to 1.
			Since this is the maximizing action, all other actions $b$ have that there exists an $s'\in\post(s,a) \setminus \target$ with $\rank(s') \leq n$.
			
			Intuitively, for all actions the player cannot avoid decreasing the rank, hence making progress towards the sink states $Z$. 
			Formally, for every strategy, the probability to reach in one step a state with $\rank \leq n$ is positive.
			Applying the Induction Hypothesis yields that the probability to reach $Z$ before $\target$ is positive.
			
			\item Case 2 (Iverson bracket 0): 
			There exists an action $a\in\act(s)$ maximizing the rank where all $s'\in \post(s,a)$ have $\rank(s')=n+1$.
			Let $X^*$ be the fixed point of the following recursion:
			$X_0 = \{s\}$ and $X_{i+1} = X_i \cup \{p \in \post(q,a) \mid \exists q \in X_i, a\in\act(q): \forall p' \in \post(q,a): \rank(p') = n+1\}$.
			
			% Note that some elements might have rank $n+1$, but all available actions have a successor with rank $n$; such elements then essentially are as in Case 1.
			%If a subset of $X^*$ has the property that for all states there exists an action such that all successors are again in $X^*$, then it forms a so-called end component [De Alfaro 1998 Def. 3.X].
			% Intuitively, there is a strategy that keeps the play inside $X^*$.
			
			We briefly interrupt the proof to explain in detail the motivation behind including the Iverson bracket, which is visible exactly in this case (although it is only necessary for the proof of the backward direction below):
			The local goal of the operator $\distopmod{\max}$ is to maximize the distance to $Z$, so remaining in $X^*$ seems like the best choice. 
			In fact, without the Iverson bracket, all states in $X^*$ would have infinite rank, as they can count up depending on each other.
			Since infinite rank should denote almost sure reachability, this is why we use the Iverson bracket to keep the rank finite at $n+1$.
			
			Returning to the proof, we now show that for all $q\in X^*$, we have $z(q) > 0$ (thus in particular proving that $s$ also reaches $Z$ with positive probability under an optimal strategy).
			
			First, consider the case that there is no state-action pair that leaves $X^*$, formally for all $q\in X^*$ and $a\in\act(q)$, we have $\post(q,a)\subseteq X^*$.
			Then, since $X^*$ cannot contain any target states (as their rank would be infinite, not $n+1$), and since all actions remain in $X^*$, we have $X^*\subseteq Z$.
			Thus $\pZ = 1 > 0$.
			
			Second, consider the case that there exists a state-action pair that leaves $X^*$, i.e.\ there exists a $q\in X^*$ and $a\in\act(q)$ such that $\post(q,a) \nsubseteq X^*$.
			For every such leaving state-action pair, we have that there exists a $q'\in \post(q,a)$ with $\rank(q') \leq n$: The rank $\rank(q')$ cannot be greater than $n+1$, as otherwise $\rank(q)$ would be greater than $n+1$, and it cannot be equal to $n+1$, because otherwise $q'$ would have been added to $X^*$.
			Thus, every strategy using a leaving state-action pair reaches a state with $\rank$ less than or equal to $n$, and by Induction Hypothesis, then this strategy has positive probability to reach $Z$.
			
			Finally, note that every strategy remaining in $X^*$ forever gets value 0, as $X^*$ does not contain any target state.
			Consequently, an optimal strategy has to use a leaving state-action pair, and thus decrease the rank.
		\end{itemize}
		
		Overall, we have shown that for $\opt=\max$, if a rank is finite, then an optimal strategy for reaching $\target$ will reach $Z$ (constrained to $\states \setminus \target$) with positive probability (avoiding to be stuck in a cycle and not reaching the target set at all).
		\end{itemize}
	
	Together, we have shown that for $\opt$ being $\min$ or $\max$, if the rank is finite, then the sink states are reached with positive probability under strategies optimal for reaching $\target$.\qed
\end{proof}

\begin{lemma}[Completeness]
	\label{lem:distopmod-completeness}%
	Let $\opt \in \{\min, \max\}$ and let $\rank = \lfp{\distopmod{\opt}}$.
	Let $\sigma$ be an optimal strategy for reaching $\target$ according to $\opt$.
	Let $\pZ = \pr_s^{\sigma}(\UZ)$ for $s \in \states$,
	where $\pr_s^{\sigma}(\UZ)$ denotes the probability that the play reaches $Z$ before reaching $\target$ under strategy $\sigma$ (constrained reachability problem).
	It holds for all $s \in \states$ that $\rank(s) < \infty \impliedby \pZ > 0$.
\end{lemma}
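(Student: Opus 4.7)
I'd prove the contrapositive: $\rank(s) = \infty \implies \pr_s^\sigma(\UZ) = 0$. Let $W = \{s \in \states \mid \rank(s) = \infty\}$. Since $\rank$ is a fixed point of $\distopmod{\opt}$, we have $\target \subseteq W$, and by \Cref{lem:ZiffRankZero}, $W \cap Z = \emptyset$. The fixed-point equation $\rank(s) = \distopmod{\opt}(\rank)(s) = \infty$, together with the Iverson bracket being bounded by $1$, forces the following structural property: for every $s \in W \setminus \target$, if $\opt = \min$ then \emph{every} action $a \in \act(s)$ satisfies $\post(s,a) \subseteq W$, whereas if $\opt = \max$ then \emph{some} action $a^*(s) \in \act(s)$ satisfies $\post(s, a^*(s)) \subseteq W$.

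The case $\opt = \min$ is then immediate: $W$ is absorbing under every strategy, and in particular under $\sigma^{\min}$, so paths starting in $W$ remain in $W$ and never reach $Z$, giving $\pr_s^{\sigma^{\min}}(\UZ) = 0$ for $s \in W$.

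For the harder case $\opt = \max$, let $\tau$ be the strategy with $\tau(s) = a^*(s)$ on $W$. Paths under $\tau$ stay in $W$ and therefore either reach $\target \subseteq W$ or remain in $W \setminus \target$ forever. I would aim to show $\pr_s^\tau(\reach\target) = 1$ for $s \in W \setminus \target$, which by optimality of $\sigma^{\max}$ yields $\pr_s^{\sigma^{\max}}(\reach\target) = 1$ and thus $\pr_s^{\sigma^{\max}}(\UZ) = 0$. Assume for contradiction $\pr_s^\tau(\reach\target) < 1$; then the $\tau$-induced DTMC has a bottom strongly connected component $C \subseteq W \setminus \target$. The remainder of the proof would exploit the \emph{least}-fixed-point property of $\rank$: I would construct a function $\hat\rank \leq \rank$ with $\hat\rank \neq \rank$ and $\distopmod{\max}(\hat\rank) \leq \hat\rank$, contradicting minimality. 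Within $C$, all $\tau$-successors share the same rank, so the Iverson bracket vanishes on $\tau$-actions and would permit reducing $\rank$ to any finite constant $k$ along $\tau$. The main obstacle---and the central technical difficulty---is that non-$\tau$ actions at states in $C$ may produce successors mixing $C$ (with reduced rank $k$) and $W \setminus C$ or $\target$ (still at rank $\infty$), causing the Iverson bracket to re-trigger and forcing $\distopmod{\max}(\hat\rank)(s) \geq k+1 > k$. Overcoming this likely requires a global reduction across all of $W$, carefully calibrated to match the standard nested fixed-point computation of the almost-sure reachability set $S_1^{\max} = \{s \mid \pr_s^{\max}(\reach\target) = 1\}$; the final argument would identify $W = S_1^{\max}$ by induction on the stages $S_1^{(n)}$ of this computation, with the Iverson bracket serving precisely to justify the step from $S_1^{(n)}$ to $S_1^{(n+1)}$ in the lfp iteration of $\distopmod{\max}$ from $\vec 0$.
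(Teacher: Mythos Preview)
Your high-level strategy matches the paper's: pass to the contrapositive, extract the closure property of $W=\{s\mid\rank(s)=\infty\}$ from the fixed-point equation, and in the $\max$ case contradict minimality of $\rank$ by exhibiting a strictly smaller (pre-)fixed point. Your $\opt=\min$ argument is correct and essentially identical to the paper's.

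For $\opt=\max$, however, the intermediate goal $\pr_s^\tau(\reach\target)=1$ is false in general, so the contradiction you seek cannot materialise. Take $\states=\{s,t\}$ with $\target=\{t\}$, where $s$ has a self-loop $a$ and a deterministic action $b$ to $t$. Both actions have $\post\subseteq W$ (one checks $\rank(s)=\rank(t)=\infty$ in the lfp), but choosing $\tau(s)=a$ gives $\pr_s^\tau(\reach\target)=0$; your BSCC is $C=\{s\}$, and any $\hat\rank$ with $\hat\rank(s)=k<\infty$ satisfies $\distopmod{\max}(\hat\rank)(s)=\max\{k,\infty\}=\infty$ because action $b$ has sole successor $t$ with $\hat\rank(t)=\infty$. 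No smaller fixed point exists---indeed $\rank$ is already the lfp here, and the lemma is vacuous since $Z=\emptyset$. The obstruction is thus not merely that the Iverson bracket ``re-triggers'': a non-$\tau$ action may have \emph{all} successors in $W\setminus C$ or in $\target$, making its minimum $\infty$ outright.

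The paper resolves this not by a global reduction over $W$ but by choosing the reduction locus more carefully. It builds the set $X^*\subseteq W$ of states reachable from $s$ via $W$-preserving actions, and then selects a \emph{bottom} maximal end component $Y\subseteq X^*$ with $Y\cap\target=\emptyset$. Existence of such $Y$ is where the hypothesis $\pZ>0$ enters: if every bottom MEC in $X^*$ met $\target$, one could reach $\target$ almost surely from $s$, forcing $\pZ=0$. Bottomness then guarantees that every exiting action from $Y$ leaves $X^*$ and hence has some successor of finite rank; setting $\rank_0=n{+}1$ on $Y$ (with $n$ the maximum over exits of the minimum successor rank) and $\rank_0=\rank$ elsewhere gives $\distopmod{\max}(\rank_0)\le\rank_0$, and iterating yields a fixed point strictly below $\rank$. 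The key move you are missing is replacing the $\tau$-BSCC by a bottom MEC of the sub-MDP on $X^*$: this is exactly what rules out the problematic actions landing back in rank-$\infty$ territory.
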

\begin{proof}
	We start with the assumption that $\pZ > 0$ and assume for contradiction $\rank(s) = \infty$.
	We make a case distinction on whether $\opt$ used by $\distopmod{\opt}$ is $\max$ or $\min$.
	\smallskip
	
	\noindent\textbf{Case $\opt=\min$:}
	
	In this case, since $\rank$ is a fixpoint of $\distopmod{\min}$, using \Cref{def:distopmod} we get that for all states $q\in\states$: $\rank(q)=\infty$ implies that for all $a\in\act(q), s'\in\post(q,a)$ we have $\rank(s')=\infty$.
	
	As in \Cref{lem:distopmod-soundness}, we inductively construct a set $X^*$, defined as the least fixpoint of:
	$X_0 = \{s\}$ and $X_{i+1} = X_i \cup \{p \in \post(q,a) \mid \exists q \in X_i \setminus \target , a\in\act(q)\}$.
	The set $X^*$ is well defined, since there are finitely many states, and every iteration adds a state or terminates.
	Moreover, by the above argument, all successors of states not in $\target$ have rank infinity and hence we have for all $q\in X^*$ that $\rank(q)=\infty$.
	
	However, since $\pZ > 0$, we also have $Z \cap X^* \neq \emptyset$.
	\Cref{lem:ZiffRankZero} shows that $\rank(q)=0$ for all $q\in Z$ when $\rank$ is the least fixed point of $\distopmod{\opt}$.
	This is a contradiction to the rank being infinite for all states in $X^*$, concluding the case.
	\smallskip
	
	%	The set $X^*$ contains no target states, because otherwise for all strategies $\sigma'$ we would have $\pr_s^{\sigma'}(\lozenge T) = 1$, which contradicts the assumption $\pr_s^{\sigma}(\lozenge Z) > 0$.
	%	Thus, intuitively, we derive a contradiction by constructing a fixed point of $\distopmod{\min}$ that assigns 0 to all states in $X^*$, which is thus a smaller fixpoint than $\rank$, contradicting that $\rank$ was selected as the least fixed point.
	%	
	%	More formally, for all $q\in \states$ let $\rank_0(q) = \begin{cases} 0 &\text{ if } q\in X^* \cap Z\\ \rank(q) &\text{ otherwise}\end{cases}$.
	%	Then, let $\rank_{i+1}(q) = \distopmod{\min}(\rank_i(q))$.
	%	For all $i\in\nats$, we have for all states $q\in X^*$ that $\rank_i(q) = 0$, and for all other states $q\in\states\setminus X^*$, we have $\rank_i(q) \leq \rank(q)$, since by reducing some ranks in the MDP, we can never increase the ranks of other states (MW@MS: probably some argument based on \Cref{def:distopmod}).
	%	Moreover, there exists an $i$ such that $\rank_i$ is a fixed point because the ranks can only decrease, but never below 0.
	%	Thus, we have constructed a fixed point that is smaller than the given rank $\rank$, which is a contradiction to $\rank$ being the least fixed point.

	\noindent\textbf{Case $\opt=\max$:}
	
	In this case, since $\rank$ is a fixpoint of $\distopmod{\max}$, using \Cref{def:distopmod} we get that for all states $q\in\states$,  $\rank(q)=\infty$ implies that there exists $a\in\act(q)$ such that for all $s'\in\post(q,a)$ we have $\rank(s')=\infty$.
	We construct $X^*$ as the least fixpoint of:
	$X_0 = \{s\}$ and $X_{i+1} = X_i \cup \{p \in \post(q,a) \mid \exists q \in X_i \setminus \target, a\in\act(q): \forall q'\in\post(q,a): \rank(q')=\infty\}$.
	
	Note that this $X^*$ contains all states that are reachable from $s$ when following actions that maximize the rank (until we visit $\target$).
	These actions need not be those that are used by the optimal strategy $\sigma$.
	Thus, we cannot simply conclude that $X^*\cap Z \neq \emptyset$ as in the $\opt=\min$ case.
	We now show that there have to be some states in $X^*$ that should have finite rank, thus contradicting the assumption $\rank(s)=\infty$ and proving our goal.
	Intuitively, we show that if $s$ has infinite rank, then there has to be some set of states reachable from $s$ that wrongly has infinite rank.
	
	As the proof is quite involved, we adopt a proof style of providing the major claims, and then indenting the necessary definitions and reasoning.
	This way, the overall idea can be obtained by reading only the major items.
	
	\begin{enumerate}
		\item In the following, let $Y$ be a \emph{maximal end component} that is \emph{bottom} in $X^*$ with $Y\cap T = \emptyset$.
		
		We provide definitions and the proof that such a $Y$ exists:
		\begin{itemize}
			\item \textbf{Definition of end component:}
			An end component (EC)~\cite[Def. 3.13]{dealfaro97} is a set of states $Y$ such that there exists a set of actions $B \subseteq \bigcup_{q\in Y} \act(q)$ (with $\act(q)\cap B \neq \emptyset$ for all $q\in Y$) and the following two conditions are satisfied:
			(i) for all $q\in Y$ and $a\in \act(q)\cap B$, we have $\post(q,a)\subseteq Y$, and 
			(ii) the graph induced by $(Y,B)$ (see~\cite[Def. 3.12]{dealfaro97}) is strongly connected.
			A \emph{maximal end component (MEC)} is an end component $Y$ such that there exists no $Y'\supset Y$ that is also an end component.
			\item We identify $X^*$ with the sub-MDP induced by $X^*$ with self-loops at all states in $X^* \cap \target$. Therefore, every state in $\target$ forms it's own MEC. 
			\item \textbf{Definition of bottom:}
			A MEC is \emph{bottom} in $X^*$ if for all $q\in Y$ and $a \in \act(q)$ with $\post(q,a) \nsubseteq Y$ we also have $\post(q,a) \nsubseteq X^*$.
			
			\item For all $q \in X^* \setminus \target$ there exists $a\in \act(q)$ such that all $q\in \post(q,a)$ have $\rank(q')=\infty$.
			
			\textbf{Reason:} Assume there was a $q \in X^* \setminus \target$ where all actions $a \in \act(q)$ had a successor $q'\in\post(q,a)$ with $\rank(q')<\infty$.
			Then, $\distopmod{\max}(\rank)(q) < \infty$, which is a contradiction to $\rank$ being a fixed point.
			\item There exist ECs in $X^*$.
			
			\textbf{Reason:} 
			We know that there is a strategy that keeps the play inside the sub-MDP $X^*$, namely the one playing the staying actions whose existence we proved in the previous step.
			By~\cite[Thm. 3.2]{dealfaro97}, under this strategy an EC in the sub-MDP is reached with probability 1.
			Thus, there have to be ECs contained in $X^*$.
			
			\item There exist bottom MECs in $X^*$.
			
			Such a MEC must exist, since we can topologically order the MECs and pick one at the end of a chain. The actions exiting $Y$ must lead outside of $X^*$, because if they lead to another state $p\in X^*$, either all actions of this state would lead out of $X^*$ which leads to a contradiction as in the first step, or it would have to form an EC, which contradicts $Y$ being bottom.
			\ms{Maybe we should just shorten this to every (sub)MDP contains a bottom MEC + citation}
			
			\item For every MEC $Y'$ that is not bottom in $X^*$, there exists a strategy that almost surely reaches a state $q'\in X^*\setminus Y'$.
			
			\textbf{Reason:} Since $Y'$ is not bottom in $X^*$, there exists an exiting state $e\in Y'$ and $a\in \act(q)$ such that $\post(q,a) \nsubseteq Y$, but $\post(q,a) \subseteq X^*$.
			Since $Y'$ is an EC, its underlying graph is strongly connected and from every $q\in Y'$ we can play a strategy that almost surely reaches $e$.
			By playing $a$ in $e$, we almost surely reach states outside of the non-bottom MEC $Y$ that are still in $X^*$.
			
			\item There exists a bottom MEC $Y \subseteq X^*$ such that $Y\cap \target = \emptyset$.
			
			\textbf{Reason:} Assume for contradiction that all bottom MECs in $X^*$ contain a target state. 			
			By~\cite[Thm. 3.2]{dealfaro97}, we almost surely reach an EC under all strategies.
			Consider the strategy that in every non-bottom EC $Y'$ leaves $Y'$, which exists by the previous step.
			Under this strategy, we almost surely reach a bottom MEC in $X^*$. 
			Note that upon reaching an EC, we can almost surely reach all states inside this EC.
			\ms{again, we could cite the fact that a.s. a bottom MEC is reached}
			
			Hence, if all bottom MECs contain a target state, $s$ can almost surely reach a target state by following the strategy that reaches bottom MECs. This is a contradiction to the initial assumption of the lemma, namely $\pZ > 0$.
			Consequently, the assumption that all bottom MECs contain a target state must be wrong, and thus there exists a bottom MEC that does not contain a target state.
		\end{itemize}
		
		\item Define a ranking function $\rank_0$ that is finite on $Y$.
		\begin{itemize}
			\item Let $E := \{(q,a) \mid q\in Y, a\in\act(q): \post(q,a) \nsubseteq Y\}$ be the set of exits from~$Y$.
		
			\item For all $(q,a) \in E$, we have $\exists q' \in \post(q,a)$ with $\rank(q') < \infty$.
		
			\textbf{Reason:} Since $Y$ is bottom in $X^*$, we have $\post(q,a) \nsubseteq X^*$. If all $q' \in \post(q,a)$ had $\rank(q') = \infty$, we would have $q'\in X^*$.
		
			\item Let $n := \max_{(q,a)\in E} \min_{q' \in \post(q,a)} \rank(q')$ be the maximum rank of any exiting action from $Y$.
			By the previous step, $n<\infty$.
		
			\item Let $\rank_0(q) := \begin{cases} n+1 &\text{ if } q\in Y\\ \rank(q) &\text{ otherwise}\end{cases}$ be a modified ranking function.	
		\end{itemize}
	
		\item From $\rank_0$, we can construct a ranking function $\rank'$ that is a fixed point of $\distopmod{\max}$.
		
		\begin{itemize}
			\item Let $\rank_i := (\distopmod{\max})^i(\rank_0)$.
			\item There exists an $i$ such that $\rank_i = \distopmod{\max}(\rank_i)$.
			\begin{itemize}
				\item  For all $q\in Y$ and all $i\in\nats$, we have $\rank_{i}(q) =n+1$.
				
				\textbf{Reason:} This is the crucial step showing that even in ECs, the complementary distance operator correctly keeps the rank finite.
				Here, we use the fact that \Cref{def:distopmod} utilizes Iverson brackets (as without them, in ECs, the rank would count to infinity, and the Backward Direction would not hold, since all states in ECs would have infinite rank independent of their reachability probability).
				
				Let $q\in Y$. We perform an induction on $i$, the number of times the complementary distance operator was applied.
				Trivially, for $i=0$ we have $\rank_0(q) = n+1$.
				
				For the induction step when we apply the distance operator the $(i+1)$-st time, we separately consider actions staying in $Y$ or exiting it.
				
				For all staying actions $a \in \act(q) \setminus \{b \mid (q,b) \in E\}$, we have that $\post(q,a) \subseteq Y$ by the fact that $Y$ is an EC. 
				Thus, by Induction Hypothesis, all $q'\in\post(q,a)$ have $\rank_i(q') = n+1$, and the complementary distance operator evaluates this action to $n+1$, using the fact that Iverson bracket evaluates to 0.		
				
				Recall that we chose $n := \max_{(q,a)\in E} \min_{q' \in \post(q,a)} \rank(q')$.
				Thus, for all exiting actions $a \in \act(q) \cap \{b \mid (q,b) \in E\}$, there exists a $q'\in\post(q,a)$ with $\rank_i(q') \leq \rank(q') \leq n$.
				Thus, the complementary distance operator evaluates all exiting actions to at most $n+1$.
				
				Overall, when applying the complementary distance operator the $(i+1)$-st time, the staying actions maximize the rank and set it to $n+1$.
				Thus, $\rank_{i+1}(q) = n+1$ and we completed the induction.
				
				\item We have $\rank_{1} \leq \rank_0$.
				
				\textbf{Reason:} 
				For $q\in Y$, the previous step proves this. For $q\in\target$, we trivially have infinite rank in both $\rank_1$ and $\rank_0$.
				To prove it for the remaining states, fix $q\in \states \setminus (Y\cup\target)$.
				
				\begin{itemize}
					\item Define $\rank(q,a) = \underset{q' \in \post(q,a)}{\min}  \rank_0(q') \,+\, [\exists s',s'' \in \post(q,a): \rank_0(q')\neq\rank_0(q'')]$, and analogously for $\rank_0$.
					\item For all $a\in\act(q)$, we have $\rank_0(q,a) \leq \rank(q,a)$.
					
					\textbf{Reason:}
					If $\rank(q,a) = \infty$, the statement trivially holds.
					If $\post(q,a)\cap Y =\emptyset$, then $\rank_0(q') = \rank(q')$ for all $q'\in \post(q,a)$, and the statement holds.
					
					If $\post(q,a)\cap Y \neq\emptyset$ and $\rank(q,a) = m < \infty$ for some $m\in\nats$, then $\underset{q' \in \post(q,a)}{\min}  \rank(q') = m-1$. This is because the Iverson bracket evaluates to 1, since there exists the $q'$ with $\rank(q')=m$ and a $q'' \in Y$ with $\rank(q'')=\infty$.
					
					Now, if $n+1 \geq m-1$, we have $\rank_0(q,a) = m-1 = \rank(q,a)$; and if $n+1 < m-1$, we have $\rank_0(q,a) = n+1 < \rank(q,a)$.
					
					\item We conclude using \Cref{def:distopmod}, the previous step, the fact that $\rank$ is a fixed point and that $\rank_0(q)=\rank(q)$ for $q\notin Y$:
					
					$\rank_1(q) = \distopmod{\max}(\rank_0)(q) = \underset{a \in \act(q)}{\max} \rank_0(q,a) \leq \underset{a \in \act(q)}{\max} \rank(q,a) =$\\ $\distopmod{\max}(\rank)(q) = \rank(q) = \rank_0(q)$.
				\end{itemize}
				
				\item We have $\rank_{i+1} \leq \rank_i$.
				
				\textbf{Reason:}
				We use monotonicity (\Cref{lem:distopmodMonotone}) and the previous step to show:
				$\rank_{i+1} = (\distopmod{\max})^i(\rank_1) \leq (\distopmod{\max})^i(\rank_0) = \rank_i$.
				
				\item By the previous step, the sequence $(\rank_i)_{i\in\nats}$ is monotonically decreasing on a bounded domain (since the minimum rank is 0). Thus, in finitely many iterations, we get $\rank_{i+1}=\rank_i$.
			\end{itemize}
		
	\end{itemize}
		
		\item Let $\rank' := \lim_{i\to\infty} (\distopmod{\max})^i (\rank_0)$ be the fixed point induced by $\rank_0$.
		
		\textbf{Reason:} By the previous step, we know that such a fixed point exists.
		
		\item $\rank' < \rank$.
		
		\textbf{Reason:} We have $\rank_0 \leq \rank$ (since we only decreased the rank of the states in $Y$ from $\infty$ to $n+1$), so also for all $i\in\nats$ we have $\rank_i = (\distopmod{\max})^i(\rank_0) \leq (\distopmod{\max})^i(\rank) = \rank$ by monotonicity (\Cref{lem:distopmodMonotone}) and $\rank$ being a fixed point.
		
		Moreover, above we showed that for all $q\in Y$ and all $i\in\nats$, we have $\rank_{i}(q) =n+1 < \infty = \rank(q)$.
		Thus, $\rank'$ is strictly smaller than $\rank$ on $Y$, and less than or equal on all other states.
		
		\item We have proven our goal: $\rank$ is not the least fixed point of $\distopmod{\max}$, which is a contradiction.
		Thus the initial assumption that $\rank(s)=\infty$ even though $\pZ > 0$ was wrong, and we have
		\[
		\pZ > 0 \implies \rank(s)<\infty
		\]
	\end{enumerate}
	\qed
\end{proof}

\distopmodEquiv*
\begin{proof}	
	Let $s\in\states$ be an arbitrary state.
	We rephrase the goal by negating both sides of the equivalence, stating that the rank is finite if and only if the reachability probability is strictly less than 1, formally
	$\rank(s) < \infty \iff \pr_s^{\opt}(\lozenge \target) < 1$.
	We now reformulate the goal of almost surely reaching the target set in terms of reaching $Z$ before $\target$ with positive probability.
	Let $\sigma$ be an optimal strategy for reaching the target set. 
	Let $\pZ = \pr_s^{\sigma}(\UZ)$ for $s \in \states$,
	where $\pr_s^{\sigma}(\UZ)$ denotes the probability that the play reaches $Z$ before reaching $\target$ under strategy $\sigma$.
	We have $\pr_s^{\opt}(\lozenge \target) < 1 \iff \pZ > 0$ by standard arguments (consult the \isabelle proof for details).
	Thus, we can rephrase the goal as 
	\[
	\rank(s) < \infty \iff \pZ > 0.
	\]
	
	Now, applying \Cref{lem:distopmod-soundness,lem:distopmod-completeness} yields the claim.
	\qed
\end{proof}

\subsection{Proof of \Cref{prop:certQualReachTwo}}
\label{app:certQualReachTwo}

\certQualReachTwo*
\begin{proof}
    Suppose that $\rank$ is valid and let $\rank(s) < \infty$ for a state $s \in \states$.
    By \Cref{thm:knastertarski} (Knaster-Tarski) we have $(\lfp{\distopmod{\opt}})(s) \leq \rank(s) < \infty$.
    Hence by \Cref{lem:distopmodequiv}, $\propt_s(\reach\target) < 1$.
\end{proof}

\section{Proofs of \Cref{sec:certificates}}
All proofs from \Cref{sec:certificates} were formally verified in \isabelle.
We include additional conventional pen-and-paper proofs in this appendix.

\subsection{Proof of \Cref{thm:reachislfp}}

We remark that, while this result is standard in the literature, formalizing its proof in Isabelle was surprisingly complicated: It requires proving continuity of the Bellman operator, and we did not find any source that provided a fully formal proof of this claim.

\subsection{Proof of \Cref{thm:certlower}}
\label{app:certlower}

\certlower*
\begin{proof}
    Let $(\vals, \rank)$ be valid.
    By \Cref{thm:knastertarski} (Knaster-Tarski) and \Cref{thm:uniquefp} it suffices to show that $\vals \leq \modbellmanmin(\vals)$.
    Let $s \in \states$ be arbitrary.
    If $\vals(s) = 0$, then the inequality $\vals(s) \leq \modbellmanmin(\vals)(s)$ holds trivially.
    Hence assume that $\vals(s) > 0$.
    We make a further case distinction:
    \begin{itemize}
        \item $s \in \target$:
        Then $\modbellmanmin(\vals)(s) = 1$, so again we trivially have $\vals(s) \leq \modbellmanmin(\vals)(s)$.
        \item $s \notin \target$:
        By 2) we have $\vals(s) \leq \min_{a \in \act(s)} \sum_{s' \in \post(s,a)} \prmdp(s, a, s') \cdot \vals(s')$.
        Further, since $x(s) > 0$ we know by 3) that $\rank(s) < \infty$.
        By 1) and \Cref{prop:certQualReach}, $\prmin_s(\reach\target) > 0$.
        Hence, by definition of $\modbellmanmin$, $\vals(s) \leq \modbellmanmin(\vals)(s)$ is equivalent to $\vals(s) \leq \min_{a \in \act(s)} \sum_{s' \in \post(s,a)} \prmdp(s, a, s') \cdot \vals(s')$, but this holds as noted above.
    \end{itemize}\qed
\end{proof}

\subsection{$\modbellmanmax$ does not have a Unique Fixed Point}
\label{app:remarkbellmanmax}

The following modified Bellman operator does \emph{not} have a unique fixed point:
\begin{align*}
    \modbellmanmax \colon
    [0,1]^\states \to [0,1]^\states,
    ~
    \modbellmanmax(\vals)(s)
    =
    \begin{cases}
        \bellmanmax(\vals)(s) & \text{ if } \prmax_s(\lozenge\target) > 0 \\
        0 & \text{ if } \prmax_s(\lozenge\target) = 0 
    \end{cases}
\end{align*}

As a counter example, consider the following MDP:
\begin{center}
    \begin{tikzpicture}[initial text=, node distance=8mm and 30mm, on grid, thick]
    \node[state, initial] (s0) {$s_0$};
    \node[state,right=of s0] (s1) {$s_1$};
    \node[state,accepting,left=of s0] (s2) {$s_2$};
    \draw[->] (s0) edge[loop above] node[auto] {$\alpha$} (s0);
    \draw[->] (s1) edge[loop right] node[auto] {$\alpha$} (s1);
    \draw[->] (s2) edge[loop left] node[auto] {$\alpha$} (s2);
    \draw[->] (s0) edge node[auto] {$\beta, \frac 1 2$} (s1);
    \draw[->] (s0) edge node[auto] {$\beta, \frac 1 2$} (s2);
    \end{tikzpicture}
\end{center}
Every probability vector $\vals$ with $\vals(s_0) \in [\frac 1 2, 1]$, $\vals(s_1) = 0$, $\vals(s_2) = 1$ is a fixed point of $\modbellmanmax$.

\subsection{Proof of \Cref{thm:certlowerstratwitness}}
\label{app:certlowerstratwitness}

\certlowerstratwitness*
\begin{proof}
    Follows by applying \Cref{thm:certlower} to the induced DTMC $\mdp^\strat$ and using the trivial fact that $\prind{\strat}_s(\reach\target) \leq \prmax_s(\reach\target)$ for all $s \in \states$.\qed
\end{proof}

\subsection{Proof of \Cref{thm:certlower-nostrat}}
\label{app:certlower-nostrat}

\certlowernostrat*
\begin{proof}
    Let $\strat$ be any MD strategy satisfying the following:
    For all $s\in\states$,
    \begin{align}
        \label{eq:maxstrat}
        \strat(s)
        ~\in~
        \argmin_{a\in \indact{\vals}(s)} ~ \min_{s' \in \post(s,a)} \rank(s')
        \tag{$\dagger$}
        ~.
    \end{align}
    Such a $\strat$ exists as $\indact{\vals}(s) \neq \emptyset$ for all $s \in \states$ due to the assumption that $\vals \leq \bellmanmax(\vals)$.
    Note that $\strat$ depends on $\vals$.
    Intuitively, $\strat$ picks an inductive action that brings it closer to the target $\target$ with positive probability (if this is possible).
    We now show that under this strategy, the conditions of \Cref{thm:certlowerstratwitness} are satisfied.
    
    The third condition of \Cref{thm:certlowerstratwitness} holds by assumption.
    Thus, it remains to show:
    \begin{enumerate}
        \item $\distop{\strat}{}(\rank) \leq \rank$, \quad and  
        \item $\vals \leq \bellman{\strat}(\vals)$.
    \end{enumerate}
    
    First condition:
    Trivial for target states $s \in \target$. 
    For all other states $s\in\states\setminus \target$ we have 
    \begin{align*}
        \distop{\strat}{}(\rank)(s) &~=~ 1 ~+~ \min_{s' \in \post(s, \strat(s))}  \rank(s') \tag{definition of $\distop{\strat}{}$ in case $s \in \states\setminus \target$}
        \\
        &~=~ 1 ~+~ 
        \min_{a \in \indact{\vals}(s)} ~ \min_{s' \in \post(s,a)} \rank(s') \tag{definition of $\strat$, see \eqref{eq:maxstrat}}
        \\
        &~=~ \distopvals{\min}{}(\rank)(s) \tag{definition of $\distopvals{\min}{}$.}
        \\
        &~\leq~ \rank(s) \tag{by assumption}
        ~.
    \end{align*}
    
    Second condition:
    Trivial for target states $s \in \target$. 
    For all other states $s\in\states\setminus \target$ we have 
    \begin{align*}
        \vals(s) &~\leq~ \sum\limits_{s' \in \states} \prmdp(s, \strat(s), s') \cdot \vals(s') \tag{$\strat(s) \in \indact{\vals}(s)$ by \eqref{eq:maxstrat}}
        \\
        &~=~ \bellman{\strat}(\vals)(s) \tag{definition of $\bellman{\strat}$}
        ~.
    \end{align*}\qed
\end{proof}

%\subsection{Counter Example for Lower Bounds on Infinite MDPs}
%\label{app:counterExLowerBoundsInfinite}
%
%As a counter example consider the Markov chain with state space $\states = \{0,1,2,\ldots\}$, transition probabilities $\prmdp(i,i-1) = \frac 1 3$, $\prmdp(i,i+1) = \frac 2 3$ for all $i \geq 1$ and $\prmdp(0,0) = 1$:
%\begin{center}
%    \begin{tikzpicture}[initial text=, node distance=8mm and 30mm, on grid, thick]
%        \node[state, accepting] (0) {$0$};
%        \node[state, right =of 0] (1) {$1$};
%        \node[state, right =of 1] (2) {$2$};
%        \node[right =of 2] (3) {$\cdots$};
%        \draw[->]
%            (0) edge[loop left] node {$1$} (0)
%            (1) edge[bend left] node[above] {$\frac 1 3$} (0)
%            (1) edge[bend left] node[auto] {$\frac 2 3$} (2)
%            (2) edge[bend left] node[above] {$\frac 1 3$} (1)
%            (2) edge[bend left] node[auto] {$\frac 2 3$} (3)
%        ;
%    \end{tikzpicture}
%\end{center}
%It can be shown that for all $i \in \states$, $\pr_i(\reach \{0\}) = \frac{1}{2^i}$.
%However, the (infinite) value vector $\vals$ with $\vals(i) = 1$ for all $i \in \states$ and the rank vector $\rank(i) = i$ form a certificate in the sense of \Cref{thm:certlower-nostrat}.
\section{Proofs of \Cref{sec:ExpRew}}
All proofs from \Cref{sec:ExpRew} were formally verified in \isabelle.
We include additional conventional pen-and-paper proofs in this appendix.

\subsection{Proof of \Cref{lem:infLowerGfp}}
\infLowerGfp*
\begin{proof}
\label{app:infLowerGfp}
    If for $s \in \states$ we have that $\pr_s^{\nopt}(\lozenge \target) < 1$ then $\Eopt_s(\lozenge \target) = \infty$ and $\vals(s)$ is trivially a lower bound.
    Similarly, for all states $s \in \target$, property 1) yields $\vals(s) = 0$ and thus $\vals(s)$ is a lower bound again.
    
    For all states $s \in \states\setminus\target$ with $\pr_s^{\nopt}(\lozenge \target) = 1$, we have that $\pr_s^{\strat}(\lozenge \target) = 1$ for some optimal strategy $\strat$ (optimal w.r.t.\ expected reward).
    In the induced DTMC $\mdp^{\strat}$, for each such $s$ there exists a shortest path $\rho_s$ to $\target$ of length $n_s$.
    For all states $s \in \states$, let $z_s = \max\{0, \vals(s)-\Eopt_s(\lozenge \target)\}$.
    Due to property 2), $z_s$ is finite for $s \in \states\setminus\target$ with $\pr_s^{\nopt}(\lozenge \target) = 1$.
    Note that $\vals(s) > \Eopt_s(\lozenge \target)$ if and only if $z_s > 0$.
    
    Next we show by induction on $n \in \nats_{\geq 1}$ that for all states $s$ with $n_s = n$, if $z_s > 0$, then there is a state $u_s$ reachable from $s$ with $z_{u_s} > z_s$.
    We only show the case of $\opt=\min$ and then describe the necessary modification for the $\opt=\max$ case below.
    \begin{itemize}
        \item If $n = 1$, then $s$ has a direct successor $t \in \target$. If $z_s > 0$, then $\Emin_s(\lozenge\target) < \vals(s) \leq \bellmanrmin(\vals)(s)$ yields:
        \begin{align*}
            &\rew(s) + \sum\limits_{s' \in \states} \prmdp(s, \strat(s), s') \cdot \Emin_{s'}(\lozenge \target) \\
            ~<~ &\vals(s) \\
            ~\leq~ &\rew(s) + \sum\limits_{s' \in \states} \prmdp(s, \strat(s), s') \cdot \vals(s') ~.
            \tag{$\dagger$}
        \end{align*}
        By definition, we have that $\Emin_s(\lozenge\target) + z_s = \vals(s)$ and $\vals(s') \leq \Emin_{s'}(\lozenge\target) + z_{s'}$. With this we can rewrite inequality $(\dagger)$ to
        \[
            z_s
            ~\leq
            ~\sum_{s' \in \states} \prmdp(s, \strat(s), s') \cdot z_{s'} ~.
        \]
        Since $z_t = 0$ and $\prmdp(s, \strat(s), t) > 0$, we find that
        \[
            z_s
            ~\leq~
            \prmdp(s, \strat(s), t) \cdot 0 + \sum\limits_{s' \in \post(s,a)\setminus\{t\}} \prmdp(s, \strat(s), s') \cdot z_{s'}
            ~.
        \]
        Assume $z_{s'} \leq z_s$ for all successors $s'$. Since $\sum\limits_{s' \in \post(s,a)\setminus\{t\}} \prmdp(s, \strat(s), s') < 1$ this yields $z_s \leq r \cdot z_s$ for some $r < 1$ which is a contradiction. Thus, there exists a from $s$ reachable state $u_s$ (here even a direct successor) with $z_{s'} > z_s$
        \item If $n > 1$ and if $z_s > 0$, then, similar as before, $\Emin_s(\lozenge\target) < \vals(s) \leq \bellmanrmin(\vals)$ yields:
        \[
            z_s \leq \sum\limits_{s' \in \states} \prmdp(s, \strat(s), s') \cdot z_{s'}.
        \]
        There is a successor $s'$ of $s$ that is on the path $\rho_s$, from which by induction hypothesis a state $u_{s'}$ is reachable with $z_{u_{s'}} > z_{s'}$. Since $u_{s'}$ is also reachable from $s$, we are done if $z_{s'} \geq z_s$. Otherwise, $z_{s'} < z_s$.
        Then, however, assuming $z_{s'} \leq z_s$ for all successors $s'$ yields $z_s \leq r \cdot z_s$ for some $r < 1$ again, which is a contradiciton.
        Thus, there exists a successor $s'$ of $s$ with $z_{s'} > z_s$.  
    \end{itemize}

	For $\opt=\max$, this induction works analogously by replacing $\min$ with $\max$, except for one detail that requires quite some technical work:
	Inequation $\dagger$ does not hold in general, since an optimal strategy for the expected reward need not be optimal for the estimates $x$.
	We noted and fixed this problem when formalizing the proof in \isabelle.
	The solution is to prove that \cref{lem:infLowerGfp} works for every strategy $\sigma$, so in particular also for an optimal strategy for the expected reward.
	% Technically, we also needed to touch $x$ and set it to infty where appropriate
	However, this technical modification is not very interesting, so in the interest of space, here we do not write it out formally and thereby avoid essentially duplicating the induction proof. We refer the curious reader to the \isabelle theory file, \texttt{Lemma Certificates\_Rewards.lb\_R\_supI} (see the Data Availability Statement at the beginning of the paper for a link to the artifact containing the theory file).
	
	Finally, towards a contradiction, assume that for some state $s \in \states\setminus\target$ with $\pr_s^{\nopt}(\lozenge \target) = 1$, it holds that $x(s) > \Eopt_s(\lozenge\target)$, i.e., $z_s > 0$.
    We can assume w.l.o.g.\ that $s$ is a state with this property where $z_s$ is maximal.
    But we have just shown above that there is a state $u_s$ reachable from $s$ with $z_{u_s} > z_s$, contradiction.\qed
\end{proof}

\subsection{Proof of \Cref{prop:certsInfLower}}
\certslowerexprewinf*
\begin{proof}
    \label{app:certsInfLower}
    Let $(\vals, \rank)$ be valid.
    By condition 2) and \Cref{lem:infLowerGfp} it suffices to show that $\vals$ has the following property:
    For all $s \in \states$ with $\pr_s^{\nopt}(\lozenge \target) = 1$, we have $\vals(s) < \infty$.
    Hence let $s \in \states$ be such that $\pr_s^{\nopt}(\lozenge \target) = 1$ and assume towards contradiction that $\vals(s) = \infty$.
    Then, by condition 3), it follows that $\rank(s) < \infty$.
    By condition 1) and Knaster-Tarski (\Cref{thm:knastertarski}), $\lfp{\distopmod{\nopt}} \leq \rank$.
    Hence $(\lfp{\distopmod{\nopt}})(s) < \infty$ and thus, by \Cref{lem:distopmodequiv}, $\pr_s^{\nopt}(\lozenge \target) < 1$, contradiction.\qed
\end{proof}

\subsection{Proof of \Cref{lem:infMaxUpperLfp}}
\infMaxUpperLfp*
\begin{proof}
    \label{app:infMaxUpperLfp}
    For all states $s \in \states$ with $\prmin_s(\lozenge\target) < 1$ there exists a strategy $\strat$ with $\prind{\strat}_s(\square\neg\target) > 0$ and thus the maximized expected reward of $s$ is $\infty$. States $s$ with $\prmin_s(\lozenge\target) = 0$ are immediately set to the correct value by the modified Bellman operator.
    For all states $s$ with $0 < \prmin_s(\lozenge\target) < 1$, there exists a strategy $\strat$ and an $n \in \nats$ such that $\prind{\strat}_s(\lozenge^{\leq n}\{s'\}) > 0$ for some $s'$ with $\prmin_{s'}(\lozenge\target) = 0$. By an induction, it follows that $(\lfp{\modbellmanrmax})(s) = (\tilde{\mathcal{E}}^{\max})^{n}(0)(s) = \infty$ is correct too.
%TODO: Replace (n) durch ^n sodass es keinen double superscript error gibt
    
    For all states $s$ with $\prmin_s(\lozenge\target) = 1$, we show by induction on $n \in \nats$ that $(\tilde{\mathcal{E}}^{\max})^{n}(0)(s) = \Emax_s(\lozenge^{= n} \target)$.
    \begin{itemize}
        \item $n = 0$. $(\tilde{\mathcal{E}}^{\max})^{n}(0)(s) = 0 = \Emax_s(\lozenge^{= n} \target)$
        \item $n > 0$. For all states $s \in \target$, it holds that $(\tilde{\mathcal{E}}^{\max})^{n}(0)(s) = 0 = \Emax_s(\lozenge^{= n} \target)$.
        For all states $s \in \states\setminus\target$, we have that
        \begin{align*}
            (\tilde{\mathcal{E}}^{\max})^{n}(0)(s) & = \rew(s) + \max\limits_{a \in \act(s)} \sum\limits_{s' \in \states} \prmdp(s, a, s') \cdot (\tilde{\mathcal{E}}^{\max})^{n-1}(0)(s')\\
             & \overset{I.H.}{=} \rew(s) + \max\limits_{a \in \act(s)} \sum\limits_{s' \in \states} \prmdp(s, a, s') \cdot \Emax_{s'}(\lozenge^{= n-1} \target)\\
             & = \Emax_s(\lozenge^{= n} \target)
        \end{align*}
    \end{itemize}
    Due to $\prmin_s(\lozenge\target) = 1$, it follows that $\Emax_s(\lozenge \target) = \sup_{n \in \nats}\Emax_s(\lozenge^{= n} \target)$, which concludes the theorem.\qed
    
    %\tw{for the sake of actually formalizing this result in Isabel it would be good if you wrote down the proof explicitly}
\end{proof}

\subsection{Proof of \Cref{prop:certsInfMaxUpper}}
\certsInfMaxUpper*
\begin{proof}
    \label{app:certsInfMaxUpper}
    Let $(\vals, \rank)$ be valid.
    With Knaster-Tarski (\Cref{thm:knastertarski}) and \Cref{lem:infMaxUpperLfp}, it suffices to show that $\modbellmanrmax(\vals) \leq \vals$.
    To this end let $s \in \states$ be arbitrary.
    We make a case distinction:
    \begin{itemize}
        \item Case $\prmin_s(\lozenge\target) > 0$: Then by definition of $\modbellmanrmax$ and property 2), $\modbellmanrmax(\vals)(s) = \bellmanrmax(\vals)(s) \leq \vals(s)$.
        \item Case $\prmin_s(\lozenge\target) = 0$: In this case $\modbellmanrmax(\vals)(s) = \infty$ by definition, i.e., we have to show that $\vals(s) = \infty$.
        Assume towards contradiction that $\vals(s) < \infty$.
        Then by property 3), $\rank(s) < \infty$.
        By property 1) and \Cref{prop:certQualReach}, we conclude that $\prmin_s(\lozenge\target) > 0$, contradiction.
    \end{itemize}\qed
\end{proof}

\subsection{Remark on $\modbellmanrmin$}
\label{app:remarkbellmanminrewards}
A modified Bellman operator analogously defined as in the $\max$-expected rewards case is:
\begin{align*}
    \modbellmanrmin \colon
    ~
    \RgeqZeroInfStates \to \RgeqZeroInfStates,
    ~
    \modbellmanrmin(\vals)(s) \mapsto
    \begin{cases}
    \bellmanrmin(\vals)(s) & \text{ if }\prmax_s(\lozenge\target) > 0 \\
    %0 & \text{ if } s \in \target \\
    \infty & \text{ if } \prmax_s(\lozenge\target) = 0
    \end{cases}
\end{align*}

However, $\modbellmanrmin$ does not work for minimizing, i.e., the least fixed point of it is \emph{not} equal to the expected reward. A counter example is the following MDP with rewards $\rew(s_0) = 0, \rew(s_1) = 100$ and $\rew(s_2) = 0$. For the right-most state $s_0$ it holds that $\prmax_{s_0}(\lozenge\target) = 1$, which means that the least fixed point of the modified Bellman operator assigns to $s_0$ the value $0$. However, it holds that $\Emin_{s_0}(\lozenge\target) = 100$ (notice that $\rew((s_0)^\omega) = \infty$ due to the \enquote{$\starinf$} semantics).
    \begin{center}
        \begin{tikzpicture}[initial text=, node distance=8mm and 30mm, on grid, thick]
        \node[state] (s1) {$s_1$};
        \node[state,right=of s1] (s0) {$s_0$};
        \node[state,accepting,left=of s1] (s2) {$s_2$};
        \draw[->] (s2) edge[loop left] node[auto] {$\alpha$} (s2);
        \draw[->] (s0) edge[loop right] node[auto] {$\alpha$} (s0);
        \draw[->] (s0) edge node[auto] {$\beta$} (s1);
        \draw[->] (s1) edge node[auto] {$\alpha$} (s2);
        \end{tikzpicture}
    \end{center}

\subsection{Certificates for Upper Bounds on $\Emin(\reach\target)$ with a Witness Strategy}
\label{app:EminUpperWithWitness}

We write $\bellmanrind{\strat}$ for the Bellman operator $\bellmanrmin = \bellmanrmax$ in the DTMC induced by strategy $\strat$.
Similarly, $\modbellmanrind{\strat}$ denotes the modified operator $\modbellmanrmax$ defined in \Cref{sec:certrew:uppermax}. The following is \Cref{lem:infMaxUpperLfp} applied to the induced DTMC $\mdp^\strat$:

\begin{restatable}{lemma}{infMinUpperLfp}
    \label{lem:infMinUpperLfp}
    %For finite $\mdp$,
    For all $s \in \states$, $(\lfp{\modbellmanrind{\strat}})(s) = \mathbb{E}^{\strat}_s(\lozenge \target)$.
    %    $\lfp{\modbellmanrinfind{\strat}} = \lambda s.\Estrat_s(\lozenge \target)$.
    %Let $\mdp$ be a finite MDP with target set $T$. %Further, let $\strat$ be a strategy. Then,
    %\[
    %    \lfp{\modbellmanrinfind{\strat}} = \lambda %s.\Estrat_s(\lozenge \target).
    %\]
\end{restatable}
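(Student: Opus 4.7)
The plan is to reduce the claim directly to \Cref{lem:infMaxUpperLfp} by viewing the induced DTMC $\mdp^\strat$ as a (trivial) MDP. Concretely, in $\mdp^\strat$ every state has exactly one enabled action (namely $\strat(s)$), so there is only one memoryless deterministic strategy, call it $\strat^\star$, and consequently $\prmax_s(\reach\target) = \prmin_s(\reach\target) = \pr^\strat_s(\reach\target)$ and $\Emax_s(\reach\target) = \E^{\strat^\star}_s(\reach\target) = \E^\strat_s(\reach\target)$ for all $s\in\states$.

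First I would unfold the definitions to verify that the operator $\modbellmanrind{\strat}$ of the present section, when instantiated in $\mdp^\strat$, is syntactically the same map as the operator $\modbellmanrmax$ of \Cref{sec:certrew:uppermax} applied to $\mdp^\strat$: on target states both return $0$; on non-target states with $\prmin_s(\reach\target)=\pr^\strat_s(\reach\target)>0$ both return $\rew(s) + \sum_{s'}\prmdp^\strat(s,\strat(s),s')\cdot \vals(s')$ (the $\max$ over the singleton action set collapses); and on the remaining states both return $\infty$. Hence $\modbellmanrind{\strat}$ is exactly $\modbellmanrmax$ evaluated on $\mdp^\strat$.

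Next I would apply \Cref{lem:infMaxUpperLfp} verbatim to the DTMC $\mdp^\strat$, which yields
\[
  (\lfp{\modbellmanrind{\strat}})(s) ~=~ \Emax_s(\reach\target) \quad\text{in }\mdp^\strat
\]
for all $s\in\states$. Combining this with the identification $\Emax_s(\reach\target) = \E^\strat_s(\reach\target)$ in $\mdp^\strat$ established above closes the proof.

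I do not anticipate any real obstacle: the only thing to be careful about is that the conditional branch of $\modbellmanrind{\strat}$ keys on the correct qualitative reachability notion (here $\pr^\strat$, which coincides with $\prmin$ in $\mdp^\strat$), so that the equality of operators actually holds on the nose rather than merely up to redefinition. Once that is checked, the result is immediate from \Cref{lem:infMaxUpperLfp}.
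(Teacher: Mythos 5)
Your proposal is correct and matches the paper exactly: the paper introduces this lemma with the remark that it ``is \Cref{lem:infMaxUpperLfp} applied to the induced DTMC $\mdp^\strat$'' and gives no further proof. You simply spell out the details of that reduction (single enabled action, collapse of $\max$/$\min$, identification of $\modbellmanrind{\strat}$ with $\modbellmanrmax$ on $\mdp^\strat$), all of which are sound.
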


%Again we use our certificates for qualitative reachability from \Cref{sec:qualreachandsafe} to find all states $s$ with $\prind{\strat}_s(\reach\target) < 1$.
%By that we are able to evaluate $\modbellmanrinfind{\strat}$ directly.

%, Proof in \Cref{app:certsInfMinUpper}
\begin{restatable}[Certificates for Upper Bounds on $\Emin(\lozenge\target)$ \!+\! Strategy]{proposition}{certsInfMinUpper}%
    \label{prop:certsInfMinUpper}%
    Let $\mdptuple$ be an MDP and
    $\target \subseteq \states$ a target set.
    A triple $(\vals, \rank, \strat) \in \RgeqZeroInfStates \times \exnats^\states \times \act^\states$ is called a \emph{valid certificate for upper bounds on minimal expected rewards with witness strategy} if it satisfies
    \begin{enumerate}[1)]
        \item $\distop{\strat}{}(\rank) \leq \rank$,  
        \item $\bellmanr{\strat}(\vals) \leq \vals$, %\hfill \textcolor{gray}{(Note: This is \underline{not} the modified Bellman operator!)}
        \item $\forall s \in \states \colon \vals(s) < \infty \implies \rank(s) < \infty$.
    \end{enumerate}
    If $(\vals, \rank, \strat)$ is valid, then $\forall s \in \states \colon \Emin_s(\reach\target) \leq \E^{\strat}_s(\reach\target) \leq \vals(s)$.
\end{restatable}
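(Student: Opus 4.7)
The plan is to reduce the claim to the induced DTMC $\mdp^\strat$ and then mirror the argument used for \Cref{prop:certsInfMaxUpper}. Since $\strat$ is memoryless deterministic, it is a valid strategy, so the inequality $\Emin_s(\reach\target) \leq \E^{\strat}_s(\reach\target)$ is immediate from the definition of the minimum over strategies. It therefore suffices to prove $\E^{\strat}_s(\reach\target) \leq \vals(s)$ for every $s \in \states$.

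By \Cref{lem:infMinUpperLfp}, we have $\E^{\strat}_s(\reach\target) = (\lfp{\modbellmanrind{\strat}})(s)$, where $\modbellmanrind{\strat}$ is $\modbellmanrmax$ applied inside the DTMC $\mdp^\strat$ (recall that $\prmin$ and $\prmax$ coincide in a DTMC, so the case distinction in the definition is unambiguous). By the fixed point induction rule of Knaster-Tarski (\Cref{thm:knastertarski}), it is enough to establish inductivity $\modbellmanrind{\strat}(\vals) \leq \vals$. Fix an arbitrary $s \in \states$ and distinguish two cases. If $\pr^{\strat}_s(\reach\target) > 0$, then $\modbellmanrind{\strat}(\vals)(s) = \bellmanr{\strat}(\vals)(s) \leq \vals(s)$ directly from condition~2). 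If $\pr^{\strat}_s(\reach\target) = 0$, then $\modbellmanrind{\strat}(\vals)(s) = \infty$, so we must argue $\vals(s) = \infty$. Suppose for contradiction that $\vals(s) < \infty$; then condition~3) yields $\rank(s) < \infty$, and condition~1), viewed inside the DTMC $\mdp^\strat$ (where $\distop{\strat}{}$ is exactly the distance operator of $\mdp^\strat$), together with \Cref{prop:certQualReach}, forces $\pr^{\strat}_s(\reach\target) > 0$, a contradiction.

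The main conceptual step is simply recognizing that conditions 1)--3) are precisely the DTMC analogues of the conditions in \Cref{prop:certsInfMaxUpper}, so the proof lifts verbatim once we pass to $\mdp^\strat$. The only mild subtlety is the application of \Cref{prop:certQualReach}: that proposition is stated for MDPs with an optimization direction, and we instantiate it inside the DTMC $\mdp^\strat$, where $\distop{\strat}{} = \distop{\min}{} = \distop{\max}{}$ and $\pr^{\strat}_s = \pr^{\min}_s = \pr^{\max}_s$, so either choice of $\opt$ delivers the desired qualitative conclusion. No additional structural assumption on $\mdp$ (e.g., absence of end components) is required, because all reasoning takes place in the induced DTMC and the ranking-function part of the certificate handles the states from which $\target$ is not reached.
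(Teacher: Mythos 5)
Your proof is correct and follows essentially the same route as the paper, which simply applies \Cref{prop:certsInfMaxUpper} to the induced DTMC $\mdp^\strat$ and notes $\Emin_s(\reach\target) \leq \Estrat_s(\reach\target)$. You have merely inlined the proof of \Cref{prop:certsInfMaxUpper} (Knaster--Tarski on $\modbellmanrind{\strat}$ via \Cref{lem:infMinUpperLfp}, with the two-case analysis and the contradiction through conditions 1) and 3) and \Cref{prop:certQualReach}) rather than citing it, and your remark that $\distop{\strat}{}$ and $\pr^{\strat}_s$ make the $\opt$ choice immaterial in the DTMC is exactly the implicit justification the paper relies on.
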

\begin{proof}
    Follows from applying \Cref{prop:certsInfMaxUpper} to the induced DTMC $\mdp^\strat$ and noticing that $\Emin_s(\lozenge \target) \leq \Estrat_s(\lozenge \target)$ for all $s \in \states$.\qed
\end{proof}

%\Cref{thm:mdoptimalrew} ensures that the certificates from \Cref{prop:certsInfMinUpper} are \emph{complete} in the sense that even the best possible upper bound -- the minimal expected rewards themselves -- can be certified.

\subsection{Proof of \Cref{prop:certsInfMinUpperNoWitness}}
\certsInfMinUpperNoWitness*

\begin{proof}
    \label{app:certsInfMinUpperNoWitness}
    Let $(\vals,\rank)$ be valid and let $\strat$ be a strategy satisfying the following:
    For all $s\in\states$,
    \begin{align}
        \label{eq:maxstratrewinf}
        \strat(s)
        ~\in~
        \argmin_{a\in \coindactrew{\vals}(s)} ~ \min_{s' \in \post(s,a)} \rank(s')
        \tag{$\dagger$}
        ~.
    \end{align}
    Such a $\strat$ exists as $\coindactrew{\vals}(s) \neq \emptyset$ for all $s \in \states$ since $\bellmanrmin(\vals) \leq \vals$ (condition 2) in in \Cref{prop:certsInfMinUpperNoWitness}).
    Note that $\strat$ depends on $\vals$.
%    Intuitively, $\strat$ picks an inductive action that brings it closer to the target $\target$ with positive probability (if this is possible).
    We now show that $(\vals,\rank,\strat)$ is a valid certificate in the sense of \Cref{prop:certsInfMinUpper}.
    
    Condition 3) from \Cref{prop:certsInfMinUpper} holds by assumption.
    Thus, it remains to show:
    \begin{enumerate}[1)]
        \item $\distop{\strat}{}(\rank) \leq \rank$:
        For $s \in \target$ we have $\distop{\strat}{}(\rank)(s) = 0 \leq \rank(s)$.
        For $s\in\states\setminus \target$:
        \begin{align*}
            \distop{\strat}{}(\rank)(s) &~=~ 1 ~+~ \min_{s' \in \post(s, \strat(s))}  \rank(s') \tag{definition of $\distop{\strat}{}$}
            \\
            &~=~ 1 ~+~ 
            \min_{a \in \coindactrew{\vals}(s)} ~ \min_{s' \in \post(s,a)} \rank(s') \tag{definition of $\strat$, see \eqref{eq:maxstratrewinf}}
            \\
            &~=~ \distoprvals{\min}{}(\rank)(s) \tag{definition of $\distoprvals{\min}{}$.}
            \\
            &~\leq~ \rank(s) \tag{by condition 1) in \Cref{prop:certsInfMinUpperNoWitness}}
            ~.
        \end{align*}
        \item $\bellmanr{\strat}(\vals) \leq \vals$:
        For $s \in \target$ we have $\bellmanr{\strat}(\vals)(s) = 0 \leq \vals(s)$.
        For $s\in\states\setminus \target$:
        \begin{align*}
            \vals(s)& ~\geq~ \rew(s) + \sum\limits_{s' \in \states} \prmdp(s, \strat(s), s') \cdot \vals(s') \tag{$\strat(s) \in \coindactrew{\vals}(s)$}
            \\
            &~=~ \bellmanr{\strat}(\vals)(s) \tag{definition of $\bellman{\strat}$}
            ~.
        \end{align*}
    \end{enumerate}\qed
\end{proof}

\section{Certificates for Expected Rewards with Alternative Semantics}
\label{app:sec:exprewStarRho}
We now present certificates for bounds on expected rewards with the other semantics mentioned in \Cref{sec:prelims}, called \enquote{$\starrho$} semantics in the following.
This semantics assigns the accumulated reward to paths not reaching $\target$.
The accumulated reward can be infinite or finite depending on whether the path reaches a cycle with a positive reward or where all states have a reward of $0$. We can characterize expected rewards with the \enquote{$\starrho$} semantics using the same Bellman operator $\bellmanropt$ as the one associated to the \enquote{$\starinf$} semantics (see \Cref{defn:bellmanopRewards}). The certificates show similarities to the certificates for reachability (\Cref{sec:certificates}) in the sense that computing expected rewards with the \enquote{$\starrho$} semantics is also a least fixed point objective:
\begin{theorem}
\label{thm:rhoUpperLfp}
For all $s \in \states$, $(\lfp{\bellmanropt})(s) = \Eopt_s(\lozenge \target).$

\end{theorem}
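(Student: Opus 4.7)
The plan is to prove \Cref{thm:rhoUpperLfp} via the standard Kleene-style approach: first relate the iterates of $\bellmanropt$ starting from $\vec 0$ to step-bounded expected rewards, then pass to the limit, and finally argue that this limit is the least fixed point.

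First, I would define $\vals_n := (\bellmanropt)^n(\vec 0) \in \RgeqZeroInfStates$ and prove by induction on $n \in \nats$ that $\vals_n(s) = \Eopt_s(\rew^{\reach^{=n}\target})$ for every $s \in \states$. The base case $n=0$ is immediate since both sides are $0$, except on $\target$ where both are also $0$. For the inductive step, a case split on $s \in \target$ versus $s \notin \target$ combined with the definition of $\bellmanropt$ (\Cref{defn:bellmanopRewards}) and the definition of step-bounded cumulative rewards (see Appendix B) gives the equality, using that memoryless optimal step actions suffice at the root of a step-bounded problem. This step is purely computational and mirrors the standard argument; the only care needed is to track that $a + \infty = \infty$ never causes trouble because $\vec 0$ is finite and $\bellmanropt$ maps finite vectors to finite vectors whenever rewards are finite (infinities only appear in the limit).

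Next, I would establish the supremum identity
\[
\Eopt_s(\rew^{\reach\target}) ~=~ \sup_{n \in \nats} \Eopt_s(\rew^{\reach^{=n}\target})
\]
under the $\starrho$ semantics. For any fixed strategy $\strat$, the path-wise reward $\rew^{\reach^{=n}\target}$ is pointwise monotonically increasing and converges to $\rew^{\reach\target}$, so monotone convergence yields $\Estrat_s(\rew^{\reach\target}) = \sup_n \Estrat_s(\rew^{\reach^{=n}\target})$. For $\opt=\max$ the $\sup$-$\sup$ exchange is unconditional. For $\opt=\min$, the inequality $\sup_n \Emin_s(\reach^{=n}\target) \leq \Emin_s(\reach\target)$ follows from $\Emin_s \leq \Estrat_s$ pointwise and taking $\inf_\strat$ after taking $\sup_n$; the reverse inequality uses the existence of an optimal memoryless deterministic strategy $\strat^*$ for the unbounded problem, giving $\Emin_s(\reach\target) = \Estrat_s^{*}(\reach\target) = \sup_n \Estrat_s^{*}(\reach^{=n}\target) \geq \sup_n \Emin_s(\reach^{=n}\target)$. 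Combining these steps yields $\Eopt_s(\reach\target) = \sup_n \vals_n(s)$.

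Finally, I would invoke Kleene's fixed point theorem: $\bellmanropt$ is monotone on the complete lattice $(\RgeqZeroInfStates,\leq)$ and is $\omega$-continuous, i.e., preserves suprema of increasing chains (which reduces to continuity of finite sums and finite $\opt$ over $[0,\infty]$, using monotone convergence and that $\act(s)$ is finite). Since $\vec 0 \leq \bellmanropt(\vec 0)$, the iterates $(\vals_n)_{n \in \nats}$ form an increasing chain whose supremum is therefore the least fixed point of $\bellmanropt$. Combined with the previous two paragraphs, this yields $(\lfp \bellmanropt)(s) = \sup_n \vals_n(s) = \Eopt_s(\reach\target)$.

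The main obstacle I anticipate is the $\omega$-continuity argument and the careful treatment of $\infty$ values: products $p \cdot \infty$ with $p > 0$ absorb, which is compatible with monotone convergence, but one must verify that $\sup$ and $\sum_{s'} \prmdp(s,a,s') \cdot (\cdot)$ commute on $[0,\infty]$-valued monotone sequences (a standard but slightly delicate check). For $\opt=\min$ in Step 2, the subtlety of exchanging $\inf_\strat$ with $\sup_n$ is the other non-routine point and crucially relies on the existence of MD-optimal strategies for total reward minimization, which we invoke as a standard result from the MDP literature.
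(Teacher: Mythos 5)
Your overall skeleton is the same as the paper's: prove by induction that $(\bellmanropt)^n(\vec 0)(s) = \Eopt_s(\reach^{=n}\target)$, identify $\lfp{\bellmanropt}$ with $\sup_n (\bellmanropt)^n(\vec 0)$ via monotonicity and $\omega$-continuity, and conclude using $\Eopt_s(\reach\target) = \sup_n \Eopt_s(\reach^{=n}\target)$. You are in fact more explicit than the paper about the two delicate points (continuity of $\bellmanropt$ and the exchange of $\opt_\strat$ with $\sup_n$), which the paper's pen-and-paper proof simply asserts.

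However, your argument for the one genuinely non-routine step fails in the $\opt=\min$ case. What you need there is $\Emin_s(\reach\target) \leq \sup_n \Emin_s(\reach^{=n}\target)$, i.e., the exchange $\inf_\strat \sup_n \leq \sup_n \inf_\strat$, which is the hard direction. The chain you write, $\Emin_s(\reach\target) = \E^{\strat^*}_s(\reach\target) = \sup_n \E^{\strat^*}_s(\reach^{=n}\target) \geq \sup_n \Emin_s(\reach^{=n}\target)$, only uses $\E^{\strat^*}_s(\reach^{=n}\target) \geq \Emin_s(\reach^{=n}\target)$ and therefore re-derives the easy inequality $\Emin_s(\reach\target) \geq \sup_n \Emin_s(\reach^{=n}\target)$ that you had already established, not its reverse; a strategy optimal for the unbounded problem is not optimal for the $n$-step problems, so this route cannot yield the $\leq$ direction. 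A standard repair: let $v = \sup_n (\bellmanropt)^n(\vec 0)$; by your continuity argument $v$ is a fixed point of $\bellmanrmin$, so any greedy MD strategy $\strat$ with $\strat(s) \in \argmin_{a\in\act(s)} \big(\rew(s) + \sum_{s'\in\post(s,a)}\prmdp(s,a,s')\, v(s')\big)$ makes $v$ a fixed point of $\bellmanr{\strat}$ too; the DTMC instance of the theorem (where the strategy quantifier is vacuous, so your induction plus monotone convergence already suffice) gives $\E^{\strat}_s(\reach\target) = (\lfp{\bellmanr{\strat}})(s) \leq v(s)$, hence $\Emin_s(\reach\target) \leq v(s) = \sup_n \Emin_s(\reach^{=n}\target)$. (Alternatively, cite convergence of value iteration for minimizing non-negative total rewards, i.e., negative dynamic programming.) With that replacement your proof goes through and matches the paper's.
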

\mw{citation. Ensuring reliability paper?}
\begin{proof}
    We show by induction on $n \in \nats$ that $(\bellmanropt)^n(0)(s) = \Eopt_s(\lozenge^{= n} \target)$.
    \begin{itemize}
        \item $n = 0$. $(\bellmanropt)^n(0)(s) = 0 = \Eopt_s(\lozenge^{= n} \target)$
        \item $n > 0$. For all states $s \in \target$, it holds that $(\bellmanropt)^n(0)(s) = 0 = \Eopt_s(\lozenge^{= n} \target)$.
        For all states $s \in \states\setminus\target$, we have that
        \begin{align*}
            (\bellmanropt)^n(0)(s) & = \rew(s) + \underset{a \in \act(s)}{\opt} \sum\limits_{s' \in \states} \prmdp(s, a, s') \cdot (\bellmanropt)^{n-1}(0)(s')\\
             & \overset{I.H.}{=} \rew(s) + \underset{a \in \act(s)}{\opt} \sum\limits_{s' \in \states} \prmdp(s, a, s') \cdot \Eopt_{s'}(\lozenge^{= n-1} \target)\\
             & = \Eopt_s(\lozenge^{= n} \target)
        \end{align*}
    \end{itemize}
    Since $\Eopt_s(\lozenge \target) = \sup_{n \in \nats}\Eopt_s(\lozenge^{= n} \target)$, the theorem follows.\qed
\end{proof}
This leads to a simple mechanism to certify upper bounds.

\subsection{Upper Bounds on Optimal Expected Rewards}
The following is an immediate consequence of \Cref{thm:knastertarski} (Knaster-Tarski) and \Cref{thm:rhoUpperLfp}:
\begin{proposition}[Certificates for Upper Bounds on $\Eopt(\lozenge\target)$]
\label{prop:certsRhoUpper}
%    Let ${\mdptuple}$ be an MDP, $\target \subseteq \states$ a target set, and $\opt \in \{\min, \max\}$. 
A value vector $\vals \in \RgeqZeroInfStates$ satisfying $\bellmanropt(\vals) \leq \vals$ is called a \emph{valid certificate for upper bounds on $\opt$-expected rewards with \enquote{$\starrho$} semantics}. If $\vals$ is valid, then $\forall s \in \states \colon \Eopt_s(\lozenge \target) \leq \vals(s).$
\end{proposition}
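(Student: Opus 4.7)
The proof will be a direct, short application of fixed point induction (Guiding Principle \ref{leitmotif:knasterTarski}) once the necessary ingredients are in place. The plan is to simply chain together \Cref{thm:knastertarski} (Knaster-Tarski) with \Cref{thm:rhoUpperLfp}, which has been established just above the statement. There are no subtle cases to distinguish and no auxiliary qualitative reachability information is needed---this is precisely what makes the \enquote{$\starrho$} upper-bound case structurally parallel to the upper-bound reachability case treated in \Cref{thm:certupper}, and simpler than the \enquote{$\starinf$} case of \Cref{sec:ExpRew}, where spurious greatest fixed points forced the introduction of modified Bellman operators and ranking functions.

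First, I would recall that $\bellmanropt$ is a monotone function on the complete lattice $(\RgeqZeroInfStates, \leq)$, as already observed in the discussion following \Cref{defn:bellmanopRewards}. This puts us in the setting of \Cref{thm:knastertarski}. Hence, the fixed point induction rule applies: any $\vals \in \RgeqZeroInfStates$ with $\bellmanropt(\vals) \leq \vals$ satisfies $\lfp{\bellmanropt} \leq \vals$, interpreted pointwise.

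Second, I would invoke \Cref{thm:rhoUpperLfp}, which identifies $\lfp{\bellmanropt}$ pointwise with the optimal expected rewards under the \enquote{$\starrho$} semantics, i.e.\ $(\lfp{\bellmanropt})(s) = \Eopt_s(\lozenge\target)$ for every $s \in \states$. Composing this equality with the inductive inequality of the previous paragraph yields $\Eopt_s(\lozenge\target) \leq \vals(s)$ for all $s \in \states$, which is exactly the conclusion.

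Since both \Cref{thm:knastertarski} and \Cref{thm:rhoUpperLfp} are already available, there is no real obstacle: the entire proof reduces to invoking these two results and noting that the pointwise order $\leq$ on $\RgeqZeroInfStates$ is exactly the lattice order used by Knaster-Tarski. The only thing worth spelling out explicitly is that the absorptive convention $p \cdot \infty = \infty$ (for $p > 0$) is compatible with monotonicity of $\bellmanropt$, so that the hypothesis $\bellmanropt(\vals) \leq \vals$ is well-defined even when $\vals$ takes the value $\infty$; in that case the inequality holds trivially at such states.
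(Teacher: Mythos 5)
Your proof is correct and matches the paper exactly: the paper states that \Cref{prop:certsRhoUpper} is an immediate consequence of \Cref{thm:knastertarski} and \Cref{thm:rhoUpperLfp}, which is precisely the two-step chain (fixed point induction gives $\lfp{\bellmanropt} \leq \vals$, then the least fixed point is identified with $\Eopt_s(\lozenge\target)$) that you spell out.
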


\subsection{Lower Bounds on Minimal Expected Rewards}
\label{sec:rhoLowerBoundsMin}
For the certification of lower bounds on the minimal expected rewards, our goal is to modify the Bellman operator such that its greatest fixed point is equal to the minimal expected rewards of the MDP.
The function $\bellmanrmintarget{}$ has two problems in this regard.
The first problem is that Value Iteration from above yields the wrong results e.g. for states in end components in which all states have a reward of $0$. When a path reaches such a state, it will never collect any additional reward in the future. This property makes such states behave exactly like target states, which motivates to redefine the set of target states to
\[
    \newtargetmin \coloneqq \{s ~\mid~ \Emin_s(\lozenge\target) = 0\}.
\]
Note, that $\target \subseteq \newtargetmin$ holds as well as $\Emin_s(\lozenge \target) = \Emin_s(\lozenge \newtargetmin)$ for all $s \in \states$. We equip our Bellman and distance operators $\bellmanropt$ and $\distopmod{\opt}$ with a subscript, $\bellmanropttarget{\target}$ and $\distopmodtarget{\opt}{\target}$ respectively, that indicates which target set is considered. An important property of the Bellman operators regarding multiple target sets is that the output can only be larger, if a smaller target set is considered, i.e. for $\newtarget \subseteq \target$ and for all $\vals \in \RgeqZeroInfStates$, we have that $\bellmanropttarget{\target}(\vals) \leq \bellmanropttarget{\newtarget}(\vals)$ and $\bellmanrindtarget{\strat}{\target}(\vals) \leq \bellmanrindtarget{\strat}{\newtarget}(x)$. Now we can consider the Bellman operator $\modbellmanrrhomin$ that takes the redefined target set into account and by that the first problem is already solved. The second problem that still remains is that the spurious greatest fixed point (assigning $\infty$ to all non-target states and $0$ to all target states) appears again (as it already did for \emph{lower} bounds in the \enquote{$\starinf$} case). We deal with it in a similar manner as in the previous case while respecting the new target set: by forcing all values of states reaching the target with probability one to be finite.

\begin{lemma}
\label{lem:rhoMinLowerGfp}
    Let $\vals \in \RgeqZeroInfStates$ be such that 1) $\vals \leq \bellmanrmintarget{\newtargetmin}(\vals)$ and 2) $\prmax_s(\lozenge\newtargetmin) = 1 \implies \vals(s) < \infty$ for all $s \in \states$. Then it holds for all $s \in \states$ that $\vals(s) \leq \Emin_s(\lozenge \target)$.
\end{lemma}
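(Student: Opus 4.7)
The plan is to adapt the proof of \Cref{lem:infLowerGfp} to the $\starrho$ setting, leveraging that $\newtargetmin$ is the effective target: by construction $\Emin_s(\reach\target) = \Emin_s(\reach\newtargetmin)$ for all $s$, and $\bellmanrmintarget{\newtargetmin}(\vals)$ equals $0$ on states in $\newtargetmin$. The inequality $\vals \leq \bellmanrmintarget{\newtargetmin}(\vals)$ therefore immediately forces $\vals(s) = 0 = \Emin_s(\reach\target)$ for $s \in \newtargetmin$, handling this subcase.

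For $s \notin \newtargetmin$ I would first dispose of the trivial subcase $\Emin_s(\reach\target) = \infty$, where the bound holds automatically. Otherwise, pick a memoryless deterministic strategy $\strat$ that attains $\Emin$ simultaneously at every state. Since $\Estrat_s(\reach\target) = \Emin_s(\reach\target)$ is finite, almost every path in the induced DTMC $\mdp^\strat$ eventually enters a BSCC whose total accumulated reward is finite; such a BSCC either is contained in $\target$ or forms a zero-reward end component, and in either case all its states belong to $\newtargetmin$. Hence $\prind{\strat}_s(\reach\newtargetmin) = 1$, so $\prmax_s(\reach\newtargetmin) = 1$, and condition~2) yields $\vals(s) < \infty$.

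With both $\vals(s)$ and $\Emin_s(\reach\target)$ finite, define $z_s = \max\{0,\, \vals(s) - \Emin_s(\reach\target)\}$ and aim for $z_s = 0$ everywhere. Following the proof of \Cref{lem:infLowerGfp}, I would pick $s$ maximizing $z_s$ (possible since $\states$ is finite) and induct on the length $n_s$ of a shortest path in $\mdp^\strat$ from $s$ to $\newtargetmin$, showing that $z_s > 0$ entails a reachable state $u_s$ with $z_{u_s} > z_s$. The induction combines $\vals(s) \leq \rew(s) + \sum_{s'} \prmdp(s,\strat(s),s') \cdot \vals(s')$ with the Bellman equality $\Emin_s(\reach\target) = \rew(s) + \sum_{s'} \prmdp(s,\strat(s),s') \cdot \Emin_{s'}(\reach\target)$, exploiting that for the neighbor $t \in \newtargetmin$ on the shortest path one has $\vals(t) = 0 = \Emin_t(\reach\target)$ to produce the strict increase of $z$. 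Maximality of $z_s$ then delivers the contradiction.

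The principal obstacle is the second step: justifying $\prind{\strat}_s(\reach\newtargetmin) = 1$ whenever $\Emin_s(\reach\target) < \infty$. This requires a BSCC analysis of $\mdp^\strat$ that separates BSCCs contained in $\target$ or carrying only zero rewards from those carrying a positive reward, and shows that the latter would render $\Estrat_s$ infinite, contradicting finiteness. In the $\starrho$ semantics this analysis replaces the simpler dichotomy on whether $\prmax_s(\reach\target) = 1$ used in \Cref{lem:infLowerGfp}, and is the main place where the present proof diverges from its $\starinf$ counterpart.
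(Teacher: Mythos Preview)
Your proposal is correct and follows essentially the same route as the paper: the paper simply invokes \Cref{lem:infLowerGfp} with target set $\newtargetmin$, observing that the only $\starinf$-specific step in that proof (namely that $\prmax_s(\lozenge\newtargetmin)<1$ forces $\Emin_s(\lozenge\newtargetmin)=\infty$) holds by construction of $\newtargetmin$, whereas you unroll that same argument explicitly and supply the BSCC analysis justifying the key implication. One small slip: your BSCC dichotomy should read ``intersects $\target$'' rather than ``is contained in $\target$'' (a BSCC may intersect $\target$ without being contained in it, and then not all its states lie in $\newtargetmin$, though the path still reaches $\newtargetmin$ via $\target$); this does not affect your conclusion.
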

\begin{proof}
    By definition, we have that $\Emin_s(\lozenge \target) = \Emin_s(\lozenge \newtargetmin)$.
    Then, \Cref{lem:infLowerGfp} is an even more general statement. Here, we have that $\opt = \min$. Note that even though \Cref{lem:infLowerGfp} considers the $\starinf$ case, the proof is still applicable because our redefined target set ensures that for all states $s$ with $\prmax_s(\lozenge\newtargetmin) < 1$ it holds that $\Emin_s(\lozenge \newtargetmin) = \infty$.\qed
\end{proof}

We reuse our certificates from \Cref{sec:qualreachandsafe} for \emph{non-almost sure reachability} again to find all states $s$ satisfying $\prmax_s(\lozenge\newtargetmin) < 1$. However, the redefined target set brings up another complication: The certificate checker needs to be provided with our redefined target set $\newtargetmin$ since computing it from the scratch would significantly increase the runtime. Then however, the certificate checker also needs to certify that the provided redefined target set is correct.

%\begin{leitmotif}[Certification of the Redefined Target Set]
%\label{problem5}
%    The redefined target set needs to be certified. %that is needed to evaluate $\bellmanrmintarget{\newtargetmin}$
% \end{leitmotif}

The certificate checker is provided with the input target set $\newtargetmininput$ as part of the certificate. Certifying that $\newtargetmininput = \newtargetmin$ has a similar expense as computing $\newtargetmin$ from scratch, but we make the following observation: for the certification of lower bounds, it suffices to have that $\newtargetmin \subseteq \newtargetmininput$. This can be checked efficiently, as we know from \Cref{sec:qualreachandsafe} how to certify qualitative reachability and we have the following equivalence: for all states $s \in \states$, we have that $s \in \newtargetmin$ is equivalent to $\prmin_s(\lozenge \pos) = 0$, where $\pos = \{s ~\mid~ \rew(s) > 0\}$. Thus, the task of certifying the redefined target set became a task of certifying some qualitative reachability property, which is done by bullets 4 and 5 in the following proposition.

\begin{proposition}[Certificates for Lower Bounds on $\Emin(\lozenge\target)$]%Bellman operator meant with r(0) set here. Consider renaming
\label{prop:certsRhoMinLower}
    Let $\mdptuple$ be an MDP with redefined target set $\newtargetmin \subseteq \states$ and $\pos = \{s ~\mid~ \rew(s) > 0\}$. 
A quadruple $(\vals, \rank_1, \rank_2, \newtargetmininput) \in \RgeqZeroInfStates \times \exnats^\states \times \exnats^\states \times 2^{\states}$ is called a \emph{valid certificate for lower bounds on minimal expected rewards with \enquote{$\starrho$} semantics} if it satisfies
    \begin{enumerate}
        \item $\distopmodtarget{\max}{\newtargetmininput}(\rank_1) \leq \rank_1$,
        \item $\vals \leq \bellmanrmintarget{\newtargetmininput}(\vals)$,
        \item $\forall s \in \states \colon \rank_1(s) = \infty \implies \vals(s) < \infty$,
        \item $\distop{\max}{\pos}(\rank_2) \leq \rank_2$,
        \item $\forall s \in \states \colon \rank_2(s) = \infty \implies s \in \newtargetmininput$.
    \end{enumerate}
    If $(\vals, \rank_1, \rank_2, \newtargetmininput)$ is valid, then $\forall s \in \states \colon \Emin_s(\lozenge \target) \geq \vals(s)$.
\end{proposition}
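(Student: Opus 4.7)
The plan is to reduce the claim to \Cref{lem:rhoMinLowerGfp}, whose hypotheses are (a)~$\vals \leq \bellmanrmintarget{\newtargetmin}(\vals)$ and (b)~$\prmax_s(\lozenge\newtargetmin) = 1 \implies \vals(s) < \infty$ for all $s \in \states$. Conditions~1--3 of the certificate look almost like (a) and (b), but they are formulated in terms of the user-supplied set $\newtargetmininput$ rather than the true redefined target set $\newtargetmin$. The key preparatory step is therefore to use conditions~4 and 5 to establish $\newtargetmin \subseteq \newtargetmininput$; the target-set monotonicity of the Bellman operators mentioned immediately before the proposition (a smaller target set gives larger or equal values) then transports (a) and (b) from $\newtargetmininput$ to $\newtargetmin$.

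To show $\newtargetmin \subseteq \newtargetmininput$, I would use the observation stated in the text that $s \in \newtargetmin \iff \prmin_s(\lozenge\pos) = 0$. Condition~4 makes $\rank_2$ a valid certificate for positive $\min$-reachability of $\pos$ in the sense of \Cref{prop:certQualReach} (applied with $\opt = \min$, hence $\nopt = \max$), so $\rank_2(s) < \infty$ implies $\prmin_s(\lozenge\pos) > 0$. Contrapositively, every $s \in \newtargetmin$ satisfies $\rank_2(s) = \infty$, whereupon condition~5 delivers $s \in \newtargetmininput$.

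With $\newtargetmin \subseteq \newtargetmininput$ in hand, target-set monotonicity yields $\bellmanrmintarget{\newtargetmininput}(\vals) \leq \bellmanrmintarget{\newtargetmin}(\vals)$, so condition~2 gives $\vals \leq \bellmanrmintarget{\newtargetmin}(\vals)$, which is (a). For (b), fix $s$ with $\prmax_s(\lozenge\newtargetmin) = 1$; since $\newtargetmin \subseteq \newtargetmininput$ we also have $\prmax_s(\lozenge\newtargetmininput) = 1$. Condition~1 makes $\rank_1$ a valid non-almost-sure reachability certificate for $\newtargetmininput$ (with $\opt = \max$) via \Cref{prop:certQualReachTwo}, whose contrapositive forces $\rank_1(s) = \infty$; condition~3 then yields $\vals(s) < \infty$. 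Invoking \Cref{lem:rhoMinLowerGfp} concludes the proof.

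The main obstacle is conceptual rather than technical: the certifier is allowed to supply any overapproximation of $\newtargetmin$ instead of the exact set, so one must recognise that conditions~4--5 are precisely a qualitative-reachability certificate witnessing this overapproximation (using the equivalence $\newtargetmin = \{s \mid \prmin_s(\lozenge\pos)=0\}$). Once this bookkeeping is in place, the remaining steps are short applications of target-set monotonicity, the contrapositives of the two qualitative-reachability propositions from \Cref{sec:qualreachandsafe}, and \Cref{lem:rhoMinLowerGfp}; no new technical ideas appear to be required.
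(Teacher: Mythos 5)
Your proposal is correct and follows essentially the same route as the paper's proof: conditions 4--5 plus the qualitative positive-reachability certificate (via the equivalence $s \in \newtargetmin \iff \prmin_s(\lozenge\pos)=0$) give $\newtargetmin \subseteq \newtargetmininput$, target-set monotonicity transports condition 2, conditions 1 and 3 with the non-a.s.\ reachability certificate give the finiteness hypothesis, and \Cref{lem:rhoMinLowerGfp} concludes. The only cosmetic difference is that you invoke the packaged \Cref{prop:certQualReach,prop:certQualReachTwo} (and lift $\prmax_s(\lozenge\newtargetmin)=1$ to $\newtargetmininput$ by monotonicity of reachability) where the paper unfolds the same argument via Knaster--Tarski and \Cref{lem:distopequiv,lem:distopmodequiv} directly.
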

\begin{proof}
    Bullets 4 and 5 together with Knaster-Tarski verify that $\newtargetmin \subseteq \newtargetmininput$ because:
    \begin{align*}
        s \in \newtargetmin & \implies \prmin_s(\lozenge \pos) = 0\\
         & \overset{\Cref{lem:distopequiv}}{\implies} \fp{\distop{\max}{\pos}}(s) = \infty\\
         & \overset{bullet~ 4}{\implies} \rank_2(s) = \infty\\
         & \overset{bullet~ 5}{\implies} s \in \newtargetmininput.
    \end{align*}
    
    Let $s \in \states$. If $\prmax_s(\lozenge \newtargetmin) < 1$, then $\Emin_s(\lozenge \target) = \infty$ and $\vals(s)$ surely is a lower bound. Otherwise, bullet 1 and Knaster-Tarski yield that $\fp{\distopmodtarget{\max}{\newtargetmininput}} \leq \rank_1$. Since $\newtargetmin \subseteq \newtargetmininput$, we also have that $\fp{\distopmodtarget{\max}{\newtargetmin}} \leq \rank_1$. Bullets 1 and 3 further yield for all $s \in \states$:
    \begin{align*}
    \prmax_s(\lozenge \newtargetmin) = 1 & \overset{\Cref{lem:distopmodequiv}}{\implies} \fp{\distopmodtarget{\max}{\newtargetmin}} = \infty\\
     & \overset{bullet~ 1}{\implies} \rank_1(s) = \infty\\
     & \overset{bullet~ 3}{\implies} \vals(s) < \infty.
    \end{align*}

    Bullet 2 implies that $\vals \leq \bellmanrmintarget{\newtargetmin}(\vals)$. Then, \Cref{lem:rhoMinLowerGfp} is applied and yields the proposition.\qed
\end{proof}

%\begin{lemma}
    %$\Cref{alg:checkMinLowerExpRewRho}$ returns true if and only if $(\vals, \rank_1, \rank_2)$ is a valid certificate for lower bounds on $\Emin(\lozenge\target)$.
%\end{lemma}

\subsection{Lower Bounds on Maximal Expected Rewards}
For lower bounds on maximal expected rewards, we again redefine the target set. First, consider
\[
    \newtargetmax \coloneqq \{s ~\mid~ \Emax_s(\lozenge\target) = 0\}.
\]
Similar as in the previous section, we have that $\target \subseteq \newtargetmax$ and $\Emax_s(\lozenge \target) = \Emax_s(\lozenge \newtargetmax)$. Examples for states that are in $\newtargetmax$, but that might not have been in $\target$, are states in bottom strongly connected components, in which all states have a reward of $0$. However, the Bellman operator $\bellmanrmaxtarget{\newtargetmax}$ that takes the new target set into account is not useful for the certification. The greatest fixed point of $\bellmanrmaxtarget{\newtargetmax}$, even taken aside the possible spurious greatest fixed point, does not yield the maximized expected reward values. This is different from the minimizing case, but analogue to our certificates for reachability, where a modification of the Bellman operator also did not work (with our techniques) for maximizing. We deal with the situation by including an explicit witness strategy in the certificate. Then, we again encode the strategy into the value- and rank vector.

\subsubsection{Certificates including an Explicit Witness Strategy}
Reconsider the ``Bellman operator'' $\bellmanrind{\strat}$ of the Markov chain induced by an MD strategy $\strat$. The target states of the induced Markov chain are redefined to
\[
    \newtargetstrat \coloneqq \{s ~\mid~ \Estrat_s(\lozenge\target) = 0\}.
\]
The Bellman operator $\bellmanrindtarget{\strat}{\newtargetstratinput}$ that takes the new target set into account will be useful for the certification, after adressing the issue with the spurious greatest fixed point again.
\begin{lemma}
\label{lem:rhoMaxLowerGfp}
    Let $\vals \in \RgeqZeroInfStates$ be such that 1) $\vals \leq \bellmanrindtarget{\strat}{\newtargetstrat}(\vals)$ and 2) $\prind{\strat}_s(\lozenge\newtargetstrat) = 1 \implies \vals(s) < \infty$ for all $s \in \states$. Then it holds for all $s \in \states$ that $\vals(s) \leq \Estrat_s(\lozenge \target)$.
\end{lemma}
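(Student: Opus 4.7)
The plan is to reduce the statement to the already-established \Cref{lem:rhoMinLowerGfp} by viewing the induced DTMC $\mdp^\strat$ as an MDP with only one available strategy. In such a DTMC, minimizing and maximizing coincide, so $\Estrat$ plays the role of both $\Emin$ and $\Emax$, and likewise $\prind{\strat}$ plays the role of both $\prmin$ and $\prmax$. Moreover, applied to $\mdp^\strat$, the definition $\newtargetmin = \{s \mid \Emin_s(\lozenge\target) = 0\}$ specializes exactly to $\newtargetstrat = \{s \mid \Estrat_s(\lozenge\target) = 0\}$, and the Bellman operator $\bellmanrmintarget{\newtargetmin}$ of $\mdp^\strat$ is precisely $\bellmanrindtarget{\strat}{\newtargetstrat}$.

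First I would spell out these three correspondences so that hypothesis 1 of the present lemma is literally hypothesis 1 of \Cref{lem:rhoMinLowerGfp} applied to $\mdp^\strat$, and hypothesis 2 ($\prind{\strat}_s(\lozenge\newtargetstrat) = 1 \implies \vals(s)<\infty$) matches the finiteness premise ($\prmax_s(\lozenge\newtargetmin) = 1 \implies \vals(s)<\infty$) in $\mdp^\strat$. Then I would invoke \Cref{lem:rhoMinLowerGfp} on $\mdp^\strat$ to obtain, for every $s \in \states$, $\vals(s) \leq \Emin^{\mdp^\strat}_s(\lozenge\target) = \Estrat_s(\lozenge\target)$, which is the desired conclusion.

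The step I expect to require the most care is the equivalence between the expected reward semantics for $\target$ and for $\newtargetstrat$ in the induced DTMC, i.e.\ $\Estrat_s(\lozenge\target) = \Estrat_s(\lozenge\newtargetstrat)$, because the $\starrho$ semantics counts rewards accumulated along paths that never reach $\target$, and one needs to check that enlarging $\target$ to $\newtargetstrat$ does not truncate any path on which a positive reward would still have been collected. This follows, however, from the definition of $\newtargetstrat$: any state added to the target set has $\Estrat_{\cdot}(\lozenge\target) = 0$, so from such a state no further reward is accumulated on almost every path, and therefore stopping there instead of continuing does not change the expected total reward. Once this identity and the three correspondences above are in place, the conclusion is immediate from \Cref{lem:rhoMinLowerGfp}.
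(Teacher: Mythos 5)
Your reduction is correct and matches the paper's in substance: the paper likewise proves this by noting $\Estrat_s(\lozenge\target)=\Estrat_s(\lozenge\newtargetstrat)$ and then invoking the general lower-bound lemma (\Cref{lem:infLowerGfp}) with the fixed strategy $\strat$, which is exactly what your application of \Cref{lem:rhoMinLowerGfp} to the one-strategy MDP $\mdp^\strat$ unfolds to, since that lemma is itself a thin wrapper around \Cref{lem:infLowerGfp}. The only cosmetic differences are that you route through \Cref{lem:rhoMinLowerGfp} rather than citing \Cref{lem:infLowerGfp} directly, and that you spell out the justification for $\Estrat_s(\lozenge\target)=\Estrat_s(\lozenge\newtargetstrat)$ that the paper dispatches as \enquote{by definition}.
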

\begin{proof}
    By definition, we have that $\Estrat_s(\lozenge \target) = \Estrat_s(\lozenge \newtargetstrat)$.
    Then, \Cref{lem:infLowerGfp} is an even more general proof, because we can take the already given strategy $\strat$ as the strategy that is considered in the proof. Note that even though \Cref{lem:infLowerGfp} considers the $\starinf$ case, the proof is still applicable because our redefined target set ensures that for all states $s$ with $\prind{\strat}_s(\lozenge\newtargetstrat) < 1$ it holds that $\Estrat_s(\lozenge \newtargetstrat) = \infty$.\qed
\end{proof}

We reuse our certificates from $\Cref{sec:qualreachandsafe}$ for \emph{non-almost sure reachability} again to find all states $s$ satisfying $\prind{\strat}_s(\lozenge\newtargetstrat) < 1$. Additionally, we have the same need for certifying the redefined target set as in the previous section.

%We deal with it in the same way, by constructing a set $\newtargetstratinput$ and certifying that it is at least a superset of $\newtargetstrat$.

\begin{proposition}[Certificates for Lower Bounds on $\Emax(\lozenge\target)$ + Witness Strategy]
\label{prop:certsRhoMaxLower}
    Let $\mdptuple$ be an MDP with redefined target set $\newtargetmin \subseteq \states$ and $\pos = \{s ~\mid~ \rew(s) > 0\}$. 
A quintuple $(\vals, \rank_1, \rank_2, \strat, \newtargetstratinput) \in \RgeqZeroInfStates \times \exnats^\states \times \exnats^\states \times \act^{\states} \times 2^{\states}$ is called a \emph{valid certificate for lower bounds on minimal expected rewards with \enquote{$\starrho$} semantics including a witness strategy} if it satisfies
    \begin{enumerate}
        \item $\distopmodtarget{\strat}{\newtargetstratinput}(\rank_1) \leq \rank_1$,
        \item $\vals \leq \bellmanrindtarget{\strat}{\newtargetstratinput}(\vals)$,
        \item $\forall s \in \states \colon \rank_1(s) = \infty \implies \vals(s) < \infty$,
        \item $\distop{\strat}{\pos}(\rank_2) \leq \rank_2$,
        \item $\forall s \in \states \colon \rank_2(s) = \infty \implies s \in \newtargetstratinput$.
    \end{enumerate}
    If $(\vals, \rank_1, \rank_2, \strat, \newtargetstratinput)$ is valid, then $\forall s \in \states \colon \Emax_s(\lozenge\target) \geq \Estrat_s(\lozenge\target) \geq \vals(s)$.
\end{proposition}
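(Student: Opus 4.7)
The plan is to mirror the proof structure of \Cref{prop:certsRhoMinLower}, but specialized to the DTMC $\mdp^\strat$ induced by the witness strategy $\strat$. Once $\vals(s) \leq \Estrat_s(\lozenge\target)$ is established for all $s$, the desired chain $\Emax_s(\lozenge\target) \geq \Estrat_s(\lozenge\target) \geq \vals(s)$ is immediate, since $\Emax$ is defined as an optimum over strategies.

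First, I would use conditions 4) and 5) to establish the inclusion $\newtargetstrat \subseteq \newtargetstratinput$, which certifies that the input target set is conservative. Working inside $\mdp^\strat$ and using the standard characterization that $\Estrat_s(\lozenge\target) = 0$ holds iff $\prind{\strat}_s(\lozenge \pos) = 0$ (with $\pos = \{s \mid \rew(s) > 0\}$), \Cref{lem:distopequiv} applied in the induced DTMC turns $s \in \newtargetstrat$ into $(\fp{\distop{\strat}{\pos}})(s) = \infty$. Condition 4) with Knaster-Tarski then yields $\rank_2(s) = \infty$, and condition 5) finally places $s \in \newtargetstratinput$.

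Second, I would exclude $\vals(s) = \infty$ at states with $\prind{\strat}_s(\lozenge\newtargetstrat) = 1$. From $\newtargetstrat \subseteq \newtargetstratinput$ plus monotonicity of $\distopmodtarget{\strat}{\cdot}$ in its target set, one deduces $\lfp{\distopmodtarget{\strat}{\newtargetstrat}} \leq \lfp{\distopmodtarget{\strat}{\newtargetstratinput}} \leq \rank_1$, the latter by condition 1) and Knaster-Tarski. \Cref{lem:distopmodequiv} then converts $\prind{\strat}_s(\lozenge\newtargetstrat) = 1$ into $\rank_1(s) = \infty$, whence condition 3) enforces $\vals(s) < \infty$. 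Finally, monotonicity of the Bellman operator in its target set (stated in \Cref{sec:rhoLowerBoundsMin}) upgrades condition 2) to $\vals \leq \bellmanrindtarget{\strat}{\newtargetstratinput}(\vals) \leq \bellmanrindtarget{\strat}{\newtargetstrat}(\vals)$. The hypotheses of \Cref{lem:rhoMaxLowerGfp} (for the strategy $\strat$) are now met, giving $\vals(s) \leq \Estrat_s(\lozenge\target)$.

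The principal obstacle is the monotonicity claim $\distopmodtarget{\strat}{Y_1} \leq \distopmodtarget{\strat}{Y_2}$ whenever $Y_2 \subseteq Y_1$: the operator assigns $\infty$ on the target set and otherwise uses the Iverson-bracket construction of \Cref{def:distopmod}, so shrinking the target may both remove the $\infty$-cap and alter which successors agree in rank. Establishing this requires a careful case analysis in the spirit of \Cref{lem:distopmodMonotone}, distinguishing whether successors lie inside $Y_1 \setminus Y_2$ and whether the bracket triggers. Once that monotonicity is in place, the remaining steps are straightforward bookkeeping that parallels the $\min$-case proof almost verbatim, lifted to the induced DTMC $\mdp^\strat$.
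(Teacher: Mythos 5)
Your proof follows essentially the same route as the paper's: bullets 4) and 5) with \Cref{lem:distopequiv} and Knaster--Tarski certify $\newtargetstrat \subseteq \newtargetstratinput$; bullets 1) and 3) with \Cref{lem:distopmodequiv} and target-set monotonicity force $\vals(s) < \infty$ wherever $\prind{\strat}_s(\lozenge\newtargetstrat) = 1$; and bullet 2) plus the Bellman operator's anti-monotonicity in the target set lets you invoke \Cref{lem:rhoMaxLowerGfp}, after which $\Estrat_s(\lozenge\target) \leq \Emax_s(\lozenge\target)$ finishes the argument. One small correction: in your final paragraph the operator-level inequality is stated backwards and its difficulty overestimated --- for the complementary distance operator, \emph{enlarging} the target set can only \emph{increase} the operator (target states are sent to $\infty$, and the formula at non-target states does not reference the target set at all), so $\distopmodtarget{\strat}{\newtargetstrat}(\rank) \leq \distopmodtarget{\strat}{\newtargetstratinput}(\rank)$ holds pointwise by inspection without any Iverson-bracket case analysis, and this is exactly the direction your (correctly stated) chain $\lfp{\distopmodtarget{\strat}{\newtargetstrat}} \leq \lfp{\distopmodtarget{\strat}{\newtargetstratinput}} \leq \rank_1$ requires.
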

\begin{proof}
    Bullets 4 and 5 together with Knaster-Tarski verify that $\newtargetstrat \subseteq \newtargetstratinput$ because:
    \begin{align*}
        s \in \newtargetstrat & \implies \prind{\strat}_s(\lozenge \pos) = 0\\
         & \overset{\Cref{lem:distopequiv}}{\implies} \fp{\distop{\strat}{\pos}}(s) = \infty\\
         & \overset{bullet~ 4}{\implies} \rank_2(s) = \infty\\
         & \overset{bullet~ 5}{\implies} s \in \newtargetstratinput.
    \end{align*}
    
    Let $s \in \states$. If $\prind{\strat}_s(\lozenge \newtargetstrat) < 1$ then $\Estrat_s(\lozenge \target) = \Emax_s(\lozenge \target) = \infty$ and $\vals(s)$ surely is a lower bound. Otherwise, bullet 1 and Knaster-Tarski yield that $\fp{\distopmodtarget{\strat}{\newtargetstratinput}} \leq \rank_1$. Since $\newtargetstrat \subseteq \newtargetstratinput$, we also have that $\fp{\distopmodtarget{\strat}{\newtargetstrat}} \leq \rank_1$. Bullets 1 and 3 further yield for all $s \in \states$:
    \begin{align*}
    \prind{\strat}_s(\lozenge \newtargetstrat) = 1 & \overset{\Cref{lem:distopmodequiv}}{\implies} \fp{\distopmodtarget{\strat}{\newtargetstrat}}(s) = \infty\\
     & \overset{bullet~ 1}{\implies} 
    \rank_1(s) = \infty\\
     & \overset{bullet~ 3}{\implies} \vals(s) < \infty.
    \end{align*}
    Bullet 2 implies that $\vals \leq \bellmanrindtarget{\strat}{\newtargetstrat}(\vals)$. Then, \Cref{lem:rhoMaxLowerGfp} is applied and since for all $s \in \states \colon \Estrat_s(\lozenge \target) \leq \Emax_s(\lozenge \target)$ the proposition follows.\qed
\end{proof}

%A concrete algorithm to check if a given certificate for lower bounds on min-reachability is valid is \Cref{alg:checkMinLowerExpRewRho} shown in \Cref{sec:algs:checkingCerts}.

\subsubsection{Certificates without Witness Strategies}
In this section, we encode the strategy of the certificate in $\vals$ as well as the two ranking functions. This is practical since we anyway need those for the certificate, and then the strategy is no longer part of the certificate. For every $s \in \states$ and value vector $\vals$, we define the \emph{$\vals$-increasing} actions with respect to the reward available at $s$ as follows:
\begin{align*}
    \indactrew{\vals}(s)
    ~=~
    \{a\in\act(s) \mid \vals(s) \leq \rew(s) + \sum\limits_{s' \in \states} \prmdp(s, a, s') \cdot \vals(s') \}
    ~.
\end{align*}
Note that $\indactrew{\vals}(s)$ may be empty in general.
However, if $\vals \leq \bellmanrmaxtarget{\newtargetmax}(\vals)$, then $\indactrew{\vals}(s)$ contains at least one action. Next, we define a variant of the distance operator $\distoprrhovals{\opt}{\target}$ that only considers $\vals$-increasing actions.

\begin{align*}
    \distoprrhovals{\opt}{\target} \colon
    ~
    \exnats^\states \to \exnats^\states,
    ~
    \rank \mapsto \lambda s.
    \begin{cases}
        0 & \text{ if } s \in \target\\
        1 ~+~ \underset{a \in \indactrew{\vals}(s)}{\opt} ~ \min\limits_{s' \in \post(s,a)}  \rank(s') & \text{ if } s \in \states\setminus\target
    \end{cases}
\end{align*}

As opposed to the two other times where we encoded a strategy into the value- and rank vector (for reachability and for expected rewards with the \enquote{$\starinf$} semantics), this time there is not only one ranking function with respect to which the strategy has to be consistent. Thus, this time the encoding is only possible if there exists a strategy that is consistent with both ranking functions. Quadruples $(\vals, \rank_1, \rank_2, \newtargetmaxinput)$ for which there does not exist such a strategy, are immediately rejected as invalid certificates. 

%The check for the existence of such a strategy does asymptotically not increase the runtime of our certification algorithm, as it can be done in $O(\abs{\states}^2 \cdot \abs{\act})$.

\begin{proposition}[Certificates for Lower Bounds on $\Emax(\lozenge\target)$ - without Witness Strategy]
\label{prop:certsRhoMaxLowerNoWitness}
    Let $\mdptuple$ be an MDP with redefined target set $\newtargetmax \subseteq \states$ and $\pos = \{s ~\mid~ \rew(s) > 0\}$. A quadruple $(\vals, \rank_1, \rank_2, \newtargetmaxinput) \in \RgeqZeroInfStates \times \exnats^\states \times \exnats^\states \times 2^{\states}$ is called a \emph{valid certificate for lower bounds on maximal expected rewards with \enquote{$\starrho$} semantics} if there exists a strategy $\sigma$ such that
    \begin{enumerate}
        \item $\forall s \in \states\colon \sigma(s) \in         \underset{a \in \indactrew{\vals}(s)}{\argmax} ~ \underset{s' \in \post(s,a)}{\min} \rank_1(s')$,
        \item $\forall s \in \states\colon \sigma(s) \in         \underset{a \in \indactrew{\vals}(s)}{\argmax} ~ \underset{s' \in \post(s,a)}{\min} \rank_2(s')$,
    \end{enumerate}
    and if it satisfies
    \begin{enumerate}
        \item $\distopmodtarget{\max}{\newtargetmaxinput}(\rank_1) \leq \rank_1$,
        \item $\vals \leq \bellmanrmaxtarget{\newtargetmaxinput}(\vals)$,
        \item $\forall s \in \states \colon \rank_1(s) = \infty \implies \vals(s) < \infty$,
        \item $\distoprrhovals{\max}{\pos}(\rank_2) \leq \rank_2$,
        \item $\forall s \in \states \colon \rank_2(s) = \infty \implies s \in \newtargetmaxinput$.
    \end{enumerate}
    If $(\vals, \rank_1, \rank_2, \newtargetmaxinput)$ is valid, then $\forall s \in \states \colon \Emax_s(\lozenge \target) \geq \vals(s)$.
\end{proposition}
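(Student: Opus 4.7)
The plan is to reduce this proposition to the witness-strategy version, namely \Cref{prop:certsRhoMaxLower}, by showing that the extracted strategy $\sigma$ (whose existence is guaranteed by the two $\argmax$ requirements) together with $(\vals, \rank_1, \rank_2, \newtargetmaxinput)$ forms a valid certificate for lower bounds on maximal expected rewards with witness strategy. Once this is established, the bound $\Emax_s(\lozenge\target) \geq \Estrat_s(\lozenge\target) \geq \vals(s)$ follows directly.

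First, I would fix such a $\sigma$ and set $\newtargetstratinput \coloneqq \newtargetmaxinput$. The core observation is that the two $\argmax$ conditions automatically yield $\sigma(s) \in \indactrew{\vals}(s)$ for all $s$, which is the key structural property inherited from the $\vals$-increasing actions. Conditions 3 and 5 of \Cref{prop:certsRhoMaxLower} then transfer verbatim from bullets 3 and 5 of the present proposition. For Condition 2, I would verify $\vals \leq \bellmanrindtarget{\sigma}{\newtargetmaxinput}(\vals)$ by noting that at target states both sides vanish (using bullet 2 of the NoWitness version evaluated on $\newtargetmaxinput$), while at non-target states the Bellman inequality for the single action $\sigma(s)$ follows immediately from $\sigma(s)\in\indactrew{\vals}(s)$.

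For Condition 1, $\distopmodtarget{\sigma}{\newtargetmaxinput}(\rank_1) \leq \rank_1$, I would argue that at any state $s \notin \newtargetmaxinput$ the quantity $\distopmodtarget{\sigma}{\newtargetmaxinput}(\rank_1)(s)$ equals the inner expression of $\distopmodtarget{\max}{\newtargetmaxinput}(\rank_1)(s)$ evaluated at the specific action $\sigma(s)$, which is bounded above by the $\max$ over all actions, hence by $\rank_1(s)$ thanks to bullet 1 of the NoWitness version. Note that the argmax condition on $\rank_1$ is not strictly needed here, only the fact that $\sigma(s)$ is some enabled action. Condition 4, $\distop{\sigma}{\pos}(\rank_2) \leq \rank_2$, is where the second $\argmax$ condition becomes essential: in the induced DTMC we have
\[
\distop{\sigma}{\pos}(\rank_2)(s) \;=\; 1 + \min_{s' \in \post(s,\sigma(s))} \rank_2(s') \;=\; 1 + \max_{a \in \indactrew{\vals}(s)} \min_{s' \in \post(s,a)} \rank_2(s') \;=\; \distoprrhovals{\max}{\pos}(\rank_2)(s),
\]
which is bounded by $\rank_2(s)$ via bullet 4.

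The main obstacle, though conceptually light, is keeping the target-set bookkeeping straight: we must confirm that identifying $\newtargetstratinput$ with $\newtargetmaxinput$ is compatible with both the modified distance operator (for which target states trivially receive $\infty$ on both sides) and the induced-DTMC Bellman operator (where target states trivially receive $0$, forcing $\vals(s)=0$ as a sanity check derived from bullet 2). Once these bookkeeping details are discharged and the five conditions are verified, invoking \Cref{prop:certsRhoMaxLower} completes the argument.
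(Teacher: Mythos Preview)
Your proposal is correct and mirrors the paper's own proof: both reduce to \Cref{prop:certsRhoMaxLower} by setting $\newtargetstratinput = \newtargetmaxinput$, observing that $\sigma(s)\in\indactrew{\vals}(s)$, and then discharging the five conditions in the same way. Your remark that Condition~1 only needs $\sigma(s)\in\act(s)$ (so the first $\argmax$ requirement on $\rank_1$ is superfluous for that step) is correct and in fact slightly sharper than the paper's chain of equalities, which passes through the $\argmax$ unnecessarily.
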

\begin{proof}
    By assumption there exists a strategy $\strat$ satisfying
    \begin{enumerate}
        \item $\forall s \in \states\colon \sigma(s) \in         \underset{a \in \indactrew{\vals}(s)}{\argmax} ~ \underset{s' \in \post(s,a)}{\min} \rank_1(s')$
        \item $\forall s \in \states\colon \sigma(s) \in         \underset{a \in \indactrew{\vals}(s)}{\argmax} ~ \underset{s' \in \post(s,a)}{\min} \rank_2(s')$
    \end{enumerate}

    We now show that under this strategy, the conditions of \Cref{prop:certsRhoMaxLower} are satisfied. The third and fifth condition of \Cref{prop:certsRhoMaxLower} hold by assumption.
    Thus, it remains to show:
    \begin{enumerate}
        \item $\distopmodtarget{\strat}{\newtargetstratinput}(\rank_1) \leq \rank_1$
        \item $\vals \leq \bellmanrindtarget{\strat}{\newtargetstratinput}(\vals)$, \quad and
        \item $\distop{\strat}{\pos}(\rank_2) \leq \rank_2$
    \end{enumerate}

    First condition:
    Trivial for states $s \in \newtargetstratinput$. 
    For all other states $s\in\states\setminus \newtargetstratinput$ we have 
    \begin{align*}
        \distop{\strat}{\newtargetstratinput}(\rank_1)(s) &~=~ 1 ~+~ \min_{s' \in \post(s, \strat(s))}  \rank_1(s') \tag{definition of $\distop{\strat}{\newtargetstratinput}$}
        \\
        &~=~ 1 ~+~ 
        \max_{a \in \indactrew{\vals}(s)} ~ \min_{s' \in \post(s,a)} \rank_1(s') \tag{definition of $\strat$}
        \\
        &~=~ \distopmodtarget{\max}{\newtargetmaxinput}(\rank_1)(s) \tag{definition of $\distopmodtarget{\max}{\newtargetmaxinput}$.}
        \\
        &~\leq~ \rank_1(s) \tag{by assumption}
        ~.
    \end{align*}
    
    Second condition:
    Trivial for target states $s \in \newtargetstratinput$. 
    For all other states $s\in\states\setminus \newtargetstratinput$ we have 
    \begin{align*}
        \vals(s) & ~\leq~ \rew(s) + \sum\limits_{s' \in \states} \prmdp(s, \strat(s), s') \cdot \vals(s') \tag{$\strat(s) \in \indactrew{\vals}(s)$}
        \\
        &~=~ \bellmanrindtarget{\strat}{\newtargetstratinput}(\vals)(s) \tag{definition of $\bellmanrindtarget{\strat}{\newtargetstratinput}$}
        ~.
    \end{align*}  

    Third condition:
    Trivial for states $s \in \pos$. 
    For all other states $s\in\states\setminus \pos$ we have 
    \begin{align*}
        \distop{\strat}{\pos}(\rank)(s) &~=~ 1 ~+~ \min_{s' \in \post(s, \strat(s))}  \rank_2(s') \tag{definition of $\distop{\strat}{\pos}$}
        \\
        &~=~ 1 ~+~ 
        \max_{a \in \indactrew{\vals}(s)} ~ \min_{s' \in \post(s,a)} \rank_2(s') \tag{definition of $\strat$}
        \\
        &~=~ \distoprrhovals{\max}{\pos}(\rank_2)(s) \tag{definition of $\distoprrhovals{\max}{\pos}$.}
        \\
        &~\leq~ \rank_2(s) \tag{by assumption}
        ~.
    \end{align*} \qed
\end{proof}

%A concrete algorithm to check if a given certificate for lower bounds on min-reachability is valid is \Cref{alg:checkMaxLowerExpRewRhoNoWitness} shown in \Cref{sec:algs:checkingCerts}.
\section{Computing Certificates}
\subsection{Further Details on Smooth Interval Iteration}\label{app:smoothII}
In \Cref{sec:computing} we mentioned that every (co-)inductive value vector w.r.t. $\bellman{\opt}_{\gamma}$ is also (co-)inductive w.r.t. $\bellman{\opt}$. This follows almost directly from the definitions of the operators since for every $\vals \in [0,1]^{\states}$:
\begin{align*}
    \vals \leq \bellman{\opt}_{\gamma}(\vals) & \Longleftrightarrow \vals \leq \gamma \cdot \vals + (1 - \gamma) \cdot \bellman{\opt}(\vals)\\
     & \Longleftrightarrow (1 - \gamma) \cdot \vals \leq (1 - \gamma) \cdot \bellman{\opt}(\vals)\\
     & \Longleftrightarrow \vals \leq \bellman{\opt}(\vals)
\end{align*}
For Value Iteration from below this means that the sequence the $\gamma$-smooth Bellman~operator creates constantly lies below the sequence that the normal Bellman~operator creates, i.e., if you let $\vals_{\gamma}^{(i)} = (\bellman{\opt}_{\gamma})^i(0)$ and $\vals^{(i)} = (\bellman{\opt})^i(0)$, then it holds for all $i \in \nats$ that $\vals_{\gamma}^{(i)} \leq \vals^{(i)}$. We prove this by an induction:
\begin{itemize}
    \item $i=0$: It holds that $0 \leq 0$.
    \item $i > 0$: For states $s \in \target$ it surely holds that
    \[
        \vals_{\gamma}^{(i)} \leq \vals^{(i)} = 1.
    \]
    For states $s \in \states \setminus \target$ the induction hypothesis yields $\vals_{\gamma}^{(i-1)} \leq \vals^{(i-1)}$ and by that
    \begin{align*}
        \vals_{\gamma}^{(i)} & = \gamma \cdot \vals_{\gamma}^{(i-1)} + (1-\gamma) \cdot \bellman{\opt}(\vals_{\gamma}^{(i-1)})\\
         & \leq \gamma \cdot \vals^{(i-1)} + (1-\gamma) \cdot \bellman{\opt}(\vals^{(i-1)}) \leq \bellman{\opt}(\vals^{(i-1)}) = \vals^{(i)}.
    \end{align*}
\end{itemize}

By the principle of induction, we conclude that $\vals_{\gamma}^{(i)} \leq \vals^{(i)}$ for all $i \in \nats$. Consequently, this also implies that $\lfp{\bellman{\opt}_{\gamma}} \leq \lfp{\bellman{\opt}}$. However, we show that the least fixed points of $\bellman{\opt}$ and $\bellman{\opt}_{\gamma}$ even coincide. To do so, it remains to prove that $\lfp{\bellman{\opt}} \leq \lfp{\bellman{\opt}_{\gamma}}$. To do so, we show in the following that $\lfp{\bellman{\opt}} \leq \bellman{\opt}_{\gamma}(\lfp{\bellman{\opt}})$ and then Knaster-Tarski yields the desired. Since
\[
	\bellman{\opt}_{\gamma}(\lfp{\bellman{\opt}}) = \gamma \cdot \lfp{\bellman{\opt}} + (1-\gamma) \cdot \bellman{\opt}(\lfp{\bellman{\opt}}) = \lfp{\bellman{\opt}},
\]
we obtain the desired.

\paragraph{A $\gamma$-Smooth Operator for Expected Rewards.}
For expected rewards, we obtain a $\gamma$-smooth operator with similar properties as
\[
	\bellmanropt_\gamma(\vals) = \rew(s) + \gamma \cdot \vals + (1-\gamma) \cdot \underset{a \in \act(s)}{\opt} \sum\limits_{s' \in \post(s,a)} \prmdp(s, a, s') \cdot \vals(s').
\]
Note that the reward is not multiplied by the factor $\gamma$.

%\bellmanopt_\gamma(\vals) ~=~ \gamma \cdot \vals + (1-\gamma) \cdot \bellmanopt(\vals)

\subsection{More Efficient Algorithms for Computing the Ranking Functions}
%Previously called: Further Details on VI for Ranking Function Computation
\label{app:rankingFunctionVIDetails}
\label{app:algsForComputing}
We start by giving a high-level description of how a algorithms computing ranking functions with the properties desired for our certificates work. Then, we give two concrete algorithms in pseudocode and prove their termination and correctness.

\paragraph{Ranking Functions for $\distop{\opt}{}$.}
Several certificates discussed in \Cref{sec:qualreachandsafe,sec:certificates,sec:ExpRew} need a ranking function $\rank$ that satisfies $\distop{\opt}{}(\rank) \leq \rank$ and is sufficiently small to witness finite distance to $\target$ from the states where this is required (see \Cref{fig:certoverview,fig:certoverviewrewards} for the exact conditions).\footnote{Notice that the restricted variants $\distopvals{\min}{}$ and $\distoprvals{\min}{}$ from \Cref{sec:certreach:lowermax,sec:InfUpperMin} are the same as $\distop{\min}{}$ in a sub-MDP.}
To ensure a sufficiently small $\rank$, one can simply compute the exact (unique) fixed point $\rank = \fp{\distop{\opt}{}}$ (see \Cref{lem:distopuniquefp}).
This can be achieved as follows:
We start VI from $\rank^{(0)} = \vec\infty$.
Since $\rank^{(0)}$ is inductive, the VI sequence converges monotonically to $\rank$ from above.
Since $\exnats^\states$ is a \emph{well partial order} (it contains no infinite strictly decreasing sequences), the sequence converges in finitely many steps.
See \Cref{app:algsForComputing} for a more efficient algorithm.

\paragraph{Ranking Functions for $\distopmod{\opt}{}$.}
For the \emph{complementary distance operator} $\distopmod{\opt}{}$ (\Cref{def:distopmod})
which is employed only in \Cref{prop:certsInfLower}, to compute $\rank = \lfp{\distopmod{\opt}{}}$ we propose to perform VI from $\rank^{(0)}$ with $\rank^{(0)}(s) = [\pr^{\nopt}_s(\reach\target) = 1] \cdot \infty$ for all $s \in \states$.
The condition in the Iverson bracket can be evaluated using standard graph analysis~\cite[Section~10.6.1]{principles}.
Notice that $\rank^{(0)} \leq \rank$ by \Cref{lem:distopmodequiv}.
It is moreover easy to see that $\rank^{(0)}$ is co-inductive.
The VI sequence thus converges monotonically to $\rank$ from below.
Since $\rank(s) \neq \infty$ for all $s \in \states$ with $\rank^{(0)}(s) \neq \infty$, the iteration will again converge in finitely many steps.\\

We now give two concrete algorithms for which we prove (1): termination and (2): that they compute the exact fixed point of the distance operator (least fixed point respectively in the case of the complementary distance operator). In the following, we write $\pre(s) = \{s' \in \states \mid \exists a \in \act(s') \colon \prmdp(s',a,s) > 0\}$ for state $s \in \states$ of an MDP $\mdptuple$.

\begin{algorithm}[t]
    \Input{Finite MDP $\mdp = \mdptuple$, target states $\target \subseteq S$}
    \Output{$\rank \in \exnats^\states$ such that $\rank = \fp{\distop{\opt}{}}$}
    \lForEach(\tcp*[h]{$Q$ is a FIFO queue}){$s \in T$}{$\rank(s) \gets 0$\,;~$Q.\mathtt{push}(s)$}
    \lForEach{$s \in \states \setminus \target$}{$\rank(s) \gets \infty$}
    \While{$Q$ is not empty}{%\
        $\hat{s} \gets Q.\mathtt{pop}()$\;
        \ForEach{$s \in \pre(\hat{s})$}{%
            \lIf{$\rank(s) < \infty$}{\Continue\label{alg:rankfpcompute:line:onlysetifinf}}
            $\textnormal{tmp} \gets 1 + \underset{a \in \act(s)}{\opt} ~ \underset{s' \in \post(s,a)}{\min}  \rank(s') $\label{alg:rankfpcompute:line:rs}\;
            \lIf{$\textnormal{tmp} = \rank(\hat{s}) + 1$}{%
                $\rank(s) \gets \textnormal{tmp} $\,;~$Q.\mathtt{push}(s)$\label{alg:rankfpcompute:line:enqueue}
            }%
        }%
    }
    \caption{Computation of the fixed point of the distance operator}
    \label{alg:rankfpcompute}
\end{algorithm}

\begin{restatable}[Computing the Fixed Point of the Distance Operator]{proposition}{algrankfpcompute}
    \label{prop:algrankfpcompute}
    \Cref{alg:rankfpcompute} terminates and computes $\rank \in \exnats^\states$ with $\rank = \fp{\distop{\opt}{}}$.
\end{restatable}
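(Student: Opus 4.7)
The plan is to prove termination and correctness separately. \emph{Termination} is direct: line \ref{alg:rankfpcompute:line:onlysetifinf} guarantees that $\rank(s)$ is assigned a finite value at most once, hence $s$ is pushed onto $Q$ at most once. Since $\states$ and each $\pre(\hat s)$ are finite, the \textbf{while} loop terminates after finitely many iterations.

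For \emph{correctness}, write $\rank^{\star} := \fp{\distop{\opt}{}}$, which is well-defined and unique by \Cref{lem:distopuniquefp}. I would establish by joint induction on $k \in \nats$ the following two claims:
\begin{itemize}
    \item[(C1)] Every $s$ with $\rank^{\star}(s) = k$ is assigned $\rank(s) := k$ by the end of the phase in which the last state of rank $k-1$ has been popped.
    \item[(C2)] Every assignment performed by the algorithm respects the soundness property $\rank(s) \in \{\rank^{\star}(s), \infty\}$.
\end{itemize}
The base case $k = 0$ is immediate from the initialization of the target states. For the inductive step, (C2) is preserved at any update $\rank(s) := \textnormal{tmp} = \rank(\hat s) + 1$: monotonicity of $\distop{\opt}{}$ together with the pointwise inequality $\rank \geq \rank^{\star}$ (which the prior invariant yields) gives $\textnormal{tmp} \geq \rank^{\star}(s)$; conversely, because $\rank(s) = \infty$ held just before the update, (C1) applied to all smaller ranks implies $\rank^{\star}(s) > \rank^{\star}(\hat s) = \rank(\hat s)$, so $\rank^{\star}(s) \geq \rank(\hat s) + 1 = \textnormal{tmp}$, and equality follows.

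For (C1) at rank $k$, the fixed point equation yields, for any $s$ with $\rank^{\star}(s) = k$, some successor $s^* \in \post(s, a)$ with $\rank^{\star}(s^*) = k - 1$ (for $\opt = \min$ immediately; for $\opt = \max$ because $\max_a \min_{s' \in \post(s,a)} \rank^{\star}(s') = k - 1$). Hence $s \in \pre(s^*)$ and $s$ is visited when $s^*$ is popped. By the FIFO discipline of $Q$ and (C1) for $k-1$, every state of rank $<k$ has been assigned its true value by that moment, so the value $\textnormal{tmp}$ computed on line \ref{alg:rankfpcompute:line:rs} evaluates to $k$ as soon as, for each action $a$ of $s$, at least one successor achieving $\min_{s' \in \post(s,a)} \rank^{\star}(s')$ is currently set.

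The main obstacle is this last step in the case $\opt = \max$: a state $s$ with $\rank^{\star}(s) = k$ may have several actions $a_1, \ldots, a_j$ with $\min_{s' \in \post(s,a_i)} \rank^{\star}(s') = k - 1$, and $\textnormal{tmp}$ equals $k$ only once \emph{every} $a_i$ has a rank-$(k-1)$ successor already assigned. Consequently, the update for $s$ need not succeed at its first visit from a popped rank-$(k-1)$ successor, but only at the ``last'' visit that closes the $\max_a$-expression. Phrasing (C1) in terms of ``by the end of the rank-$(k-1)$ processing phase'' rather than at a specific pop is precisely what lets the induction go through without having to reason about the ordering among rank-$(k-1)$ pops of the several successors of $s$.
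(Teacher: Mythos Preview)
Your proof is correct and takes essentially the same approach as the paper: termination via the one-push-per-state argument, and correctness by induction on the rank value using the FIFO discipline of the queue. The paper factors out the ordering fact as a separate invariant (whenever a state of rank $i$ is popped, all states of smaller rank have already been popped) and then proves the biconditional $\rank(s)=i \Leftrightarrow \rank^{\star}(s)=i$, whereas you fold the ordering into your (C1); as a side remark, your final paragraph overstates the $\opt=\max$ obstacle---since (C1) at level $k-1$ already ensures \emph{all} rank-$(k-1)$ states are assigned before any rank-$(k-1)$ state is popped, the update for a rank-$k$ state in fact succeeds at the \emph{first} visit from a rank-$(k-1)$ successor, not only the last.
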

%\algrankfpcompute*
\begin{proof}
\label{app:algrankfpcompute}
    For $\opt = \max$. For $\opt = \min$, only the wording `for all $a \in \act(s)$' has to be adjusted to `there exists an $a \in \act(s)$'.
    
    Due to \Cref{alg:rankfpcompute:line:onlysetifinf}, a rank of a state can only change from $\infty$ to a finite value and remains constant from that point on. Furthermore, it is an invariant that for each state $s$ that is pushed into the queue, $\rank(s)$ is set to a finite value right before. Thus, every state $s \in S$ is pushed into the queue $Q$ at most once and termination follows.
    
    Let $\rank \in \exnats^\states$ hold the computed values upon termination of \Cref{alg:rankfpcompute}. We show by induction on $i \in \nats$, that the following is an invariant: if a state $s$ with $\rank(s) = i$ is popped from the queue, then all states $s'$ with $\rank(s') < i$ have already been popped from the queue. ($\dagger$)
    \begin{itemize}
        \item $i=0$. Clear, since there are no states $s'$ with $\rank(s') < 0$.
        \item $i>0$. Since $s$ is popped from the queue, $s$ was pushed into the queue when, due to \Cref{alg:rankfpcompute:line:enqueue}, there has been a state $\hat{s}$ popped from the queue with $\rank(\hat{s}) = i-1$. By induction hypothesis, all states $s''$ with $\rank(s'') < i-1$ have already been popped from the queue. Take an arbitrary state $s'$ with $\rank(s') = i-1$. Since all its successors $s''$ with $\rank(s'') < i-1$ have already been popped from the queue, $s'$ got pushed into the queue before $s$ and thus also popped before.
    \end{itemize}

    Now we show, based on the invariant $(\dagger)$, that we have for all $s \in S$ that $\rank(s) = \fp{\distop{\max}{}}(s)$. We show by induction on $i \in \nats$ that
    \[
        \rank(s) = i \quad \Longleftrightarrow \quad \fp{\distop{\max}{}}(s) = i.
    \]
    \begin{itemize}
        \item $i = 0$. Then, $\rank(s)=0$ and $\fp{\distop{\max}{}}(s)=0$ are both equivalent to $s \in \target$.
        \item $i > 0$. We show both directions. If $\fp{\distop{\max,\min}{\target}}(s)=i$, then for all $a \in \act(s)$ there is an $s' \in \post(s,a)$ such that $\fp{\distop{\max}{}}(s')=i-1$. By induction hypothesis, $\rank(s') = i-1$ for all such $s'$. Take the first such $s'$ that is popped from the queue. Certainly, $s$ is a predecessor of $s'$. Let $\hat{\rank} \in \exnats^\states$ hold the values of $\rank$ right before \Cref{alg:rankfpcompute:line:onlysetifinf} is executed. Due to $(\dagger)$, we have that still $\hat{\rank}(s) = \infty$ and thus \Cref{alg:rankfpcompute:line:rs} is executed. Also due to $(\dagger)$, we have that $\hat{\rank}(s')=i-1$ is already set for all states $s'$ with $\fp{\distop{\max,\min}{\target}}(s')=i-1$. Thus, $\max\limits_{a \in \act(s)} ~ \min\limits_{s' \in \post(s,a)} \hat{\rank}(s') = i$, which means that $\rank(s)$ is set to $i$ by the algorithm.

        Now, let $\rank(s) = i$. This means that at some point during the execution of the algorithm, the current vector $\hat{\rank} \in \exnats^\states$ holds values such that $1 + \max\limits_{a \in \act(s)} ~ \min\limits_{s' \in \post(s,a)} \hat{\rank}(s') = i$. This means that for all $a \in \act(s)$ there is an $s' \in \post(s,a)$ such that $\hat{\rank}(s')=i-1$. Since finite values are never overwritten by the algorithm, also $\rank(s')=i-1$ holds for all such $s'$ and by induction hypothesis $\fp{\distop{\max}{}}(s') = i-1$. By definition it follows that $\fp{\distop{\max}{}}(s) = i$.
    \end{itemize}
    Note that from the induction it also follows that 
    \[
        \rank(s) = \infty \Longleftrightarrow \fp{\distop{\max}{}}(s) = \infty
    \]
    which concludes the proof.\qed
\end{proof}

\begin{algorithm}[t]
    \Input{Finite MDP $\mdp = \mdptuple$, target states $\target \subseteq S$}
    \Output{$\rank \in \exnats^\states$ such that $\rank = \lfp{\distopmod{\opt}{}}$}
    \lForEach(\tcp*[h]{$Q$ is FIFO}){$s$ with $\propt_s(\lozenge\target) = 1$}{$\rank(s) \gets \infty$\,;~$Q.\mathtt{push}(s)$}
    \lForEach{$s$ with $\propt_s(\lozenge\target) < 1$}{$\rank(s) \gets 0$}
    \While{$Q$ is not empty}{%\
        $\hat{s} \gets Q.\mathtt{pop}()$\;
        \ForEach{$s \in \pre(\hat{s})$}{%
            $\textnormal{tmp} \gets \underset{a \in \act(s)}{\opt} ~ \Big( ~ \underset{s' \in \post(s,a)}{\min}  \rank(s') ~+~$\;
             $~~~~~~~~~~~~~~~~~~~~~~~[\exists s',s'' \in \post(s,a): \rank(s')\neq\rank(s'')]\Big)$\;
            \If{$\rank(s) \ne \textnormal{tmp}$}{$\rank(s) \gets \textnormal{tmp}$\,;~$Q.\mathtt{push}(s)$}
        }%
    }
    \caption{Computation of the least fixed point of the modified distance operator}
    \label{alg:rankfpdistopmodcompute}
\end{algorithm}

\begin{restatable}[Computing the Least Fixed Point of the Complementary Distance Operator]{proposition}{algrankfpdistopmodcompute}
    \label{prop:algrankfpdistopmodcompute}
    \Cref{alg:rankfpdistopmodcompute} terminates and computes $\rank \in \exnats^\states$ with $\rank = \lfp{\distopmod{\opt}{}}$.
\end{restatable}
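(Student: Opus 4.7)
The plan is to interpret \Cref{alg:rankfpdistopmodcompute} as an asynchronous value iteration on $(\exnats^\states, \leq)$ starting from the initial vector $\rank^{(0)}$ defined in its first two lines, and to argue termination and convergence to $\lfp{\distopmod{\opt}}$ by maintaining two invariants throughout the execution: (I1) $\rank \leq \distopmod{\opt}(\rank)$ (co-inductivity) and (I2) $\rank \leq \lfp{\distopmod{\opt}}$. The ingredients are Knaster-Tarski (\Cref{thm:knastertarski}), monotonicity of $\distopmod{\opt}$ (\Cref{lem:distopmodMonotone}), and the characterization from \Cref{lem:distopmodequiv} identifying the states with $\lfp$-value $\infty$ as exactly those with $\propt_s(\reach\target) = 1$.

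For the initial vector $\rank^{(0)}$, (I2) is immediate from \Cref{lem:distopmodequiv}: $\rank^{(0)}(s) = \infty$ is set exactly when $\propt_s(\reach\target) = 1$, i.e., exactly when $(\lfp{\distopmod{\opt}})(s) = \infty$. Verifying (I1) requires a brief closure argument for non-target $s$ with $\propt_s(\reach\target) = 1$: for $\opt = \max$, any MD strategy $\strat$ witnessing $\prind{\strat}_s(\reach\target) = 1$ yields the action $a = \strat(s)$ all of whose successors lie in $\{s' \mid \prmax_{s'}(\reach\target) = 1\}$; for $\opt = \min$, a standard strategy-replacement argument shows that every action at $s$ has this property (otherwise one could redirect through a bad successor to violate $\prmin_s = 1$). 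In both cases the resulting action-value is $\infty$ (the minimum is $\infty$ and the Iverson bracket vanishes), giving $\distopmod{\opt}(\rank^{(0)})(s) = \infty = \rank^{(0)}(s)$; for states initialized to $0$, (I1) is trivial. Because each update sets $\rank(s) \gets \distopmod{\opt}(\rank)(s) \geq \rank(s)$ (by (I1)), monotonicity of $\distopmod{\opt}$ preserves both invariants.

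Termination follows because every update strictly increases $\rank(s)$: by (I1) the computed $\textnormal{tmp}$ satisfies $\textnormal{tmp} \geq \rank(s)$, and the algorithm updates only when $\textnormal{tmp} \ne \rank(s)$. By (I2) and \Cref{lem:distopmodequiv}, $\rank(s)$ is bounded above by $(\lfp{\distopmod{\opt}})(s)$, which is finite precisely for the states initialized to $0$; states initialized to $\infty$ can never be updated, since any $\textnormal{tmp} \geq \infty$ must equal $\infty$. Thus each state is updated at most finitely often, and the worklist empties after finitely many iterations.

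The main obstacle is correctness: I must show that the terminal vector $\rank^*$ satisfies $\distopmod{\opt}(\rank^*) = \rank^*$, whereupon (I2) combined with the Knaster-Tarski characterization $\lfp{\distopmod{\opt}} = \inf\{a \mid \distopmod{\opt}(a) \leq a\}$ forces $\rank^* = \lfp{\distopmod{\opt}}$. The key observation is the worklist discipline: each change to some $\rank(s')$ pushes $s'$, and popping $s'$ re-examines every $s \in \pre(s')$. Hence, after the last modification of any successor of $s$, state $s$ is examined at least once with up-to-date successor values, so the computed $\textnormal{tmp}$ equals $\distopmod{\opt}(\rank^*)(s)$; the absence of a triggered update then forces $\distopmod{\opt}(\rank^*)(s) = \rank^*(s)$. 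A never-examined state $s$ must have every successor retaining its initial rank; the closure argument from paragraph~2 rules out the mixed configurations, leaving only $s \in \target$ (both sides are $\infty$) or $s \notin \target$ with $\rank^* = 0$ on $s$ and all its successors (both sides are $0$).
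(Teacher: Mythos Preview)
Your proof is correct and takes a genuinely different route from the paper's. The paper argues correctness by induction on the rank value $n$: it fixes $n$, assumes states with $\lfp$-rank $< n$ are already correct, and then tracks the queue dynamics to show that each state $s$ with $(\lfp{\distopmod{\opt}})(s) = n$ is eventually assigned $n$. Your argument is more structural: you maintain the pair of invariants $\rank \leq \distopmod{\opt}(\rank)$ and $\rank \leq \lfp{\distopmod{\opt}}$ throughout, use the worklist discipline (every successor change eventually triggers a re-examination of $s$) to conclude that the terminal vector is a fixed point, and then squeeze it against the least fixed point. This abstract approach has the advantage that it would transfer essentially unchanged to any monotone $\exnats^\states$-operator for which one can characterize the $\infty$-states of the lfp a priori (that characterization is exactly what \Cref{lem:distopmodequiv} provides here); the paper's induction, by contrast, exploits the concrete ``$+1$ via the Iverson bracket'' increment to step from rank $n-1$ to $n$. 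Both approaches rely on the same closure observation (states with $\propt_s(\reach\target)=1$ have, for $\opt=\max$, some action and, for $\opt=\min$, every action staying inside the $\propt=1$ set), but you make it explicit as part of establishing (I1), whereas in the paper it is implicit in the initialization being compatible with \Cref{lem:distopmodequiv}.
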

%\algrankfpdistopmodcompute*
\begin{proof}
\label{app:algrankfpdistopmodcompute}
    Since the algorithm only updates $\rank$ values by applying the $\distopmod{\opt}{}$ function and $\lfp{\distopmod{\opt}{}} = \sup{\{{\distopmod{\opt}{}}^n(0) ~|~ n \in \nats\}}$, it is an invariant that $\hat{\rank}(s) \leq \rank(s)$ holds for all vectors $\hat{\rank}$ during the execution of the algorithm and all $s \in \states$. ($\dagger$)

    According to \Cref{lem:distopmodequiv}, we have that $\rank(s) = \infty$, if and only if $\propt_s(\lozenge\target) = 1$. Because of ($\dagger$) and the fact that the algorithm initializes the rank of exactly all $s$ with $\propt_s(\lozenge\target) = 1$ by $\infty$, we get that the algorithm correctly outputs $\infty$ for exactly those $s$ with $\rank(s) = \infty.$
    Now, we show by induction on $n \in \nats$, that also the ranks of all states $s$ with $\rank(s) = n$ are correctly computed by the algorithm.
    \begin{itemize}
        \item Induction start: $n = 0$. Follows immediately from ($\dagger$) and the fact that the algorithm initializes the ranks of all $s$ with $\propt_s(\lozenge\target) < 1$ by $0$.
        
        \item Induction Step: $n > 0$. Let $s \in \states$ with $\rank(s) = n > 0$. Since $\rank$ is a fixed point, it holds that 
        \[
            \underset{a \in \act(s)}{\opt} ~\Big(~ \underset{s' \in \post(s,a)}{\min} \rank(s') + [\exists s',s'' \in \post(s,a): \rank(s')\neq\rank(s'')]\Big) = n.
        \]
        
        If $\underset{a \in \act(s)}{\opt} ~ \underset{s' \in \post(s,a)}{\min} \rank(s') = n$, then the algorithm considers $s$ at a time when the current vector $\hat{\rank}$ satisfies
        \[
            \underset{a \in \act(s)}{\opt} ~\Big(~ \underset{s' \in \post(s,a)}{\min} \hat{\rank}(s') + [\exists s',s'' \in \post(s,a): \hat{\rank}(s')\neq\hat{\rank}(s'')]\Big) \geq n.
        \]
        Because of ($\dagger$), the algorithm correctly sets $\rank(s) = n$.
        
        If $\underset{a \in \act(s)}{\opt} ~ \underset{s' \in \post(s,a)}{\min} \rank(s') = n-1$, then there exists a successor $s'$ of $s$ such that $\rank(s') = n-1$ and by induction hypothesis the algorithm correctly outputs $\rank(s') = n-1$ too. Since $s$ is a predecessor of $s'$, $s$ is surely considered by the algorithm at a time when $\hat{\rank(s')} = n-1$ is already correctly set. Then, either $[\exists s',s'' \in \post(s,a): \hat{\rank}(s')\neq\hat{\rank}(s'')] = 1$ and $\rank(s) = n$ is correctly set by the algorithm or $[\exists s',s'' \in \post(s,a): \hat{\rank}(s')\neq\hat{\rank}(s'')] = 0$, but then some successor $s''$ will get a rank update at a later time by the algorithm, and thereafter $[\exists s',s'' \in \post(s,a): \hat{\rank}(s')\neq\hat{\rank}(s'')] = 1$ holds and $\rank(s) = n$ is correctly set by the algorithm.
    \end{itemize}
    Termination follows from the fact that states are inserted into the queue only a finite amount of times. This is because each state only gets inserted into the queue after its rank increased, and the rank of all states $s$ with $\rank(s) = \infty$ is already correctly set on initialization.\qed
\end{proof}

\section{Additional Experiments}\label{app:experiments}
\begin{figure}[t]
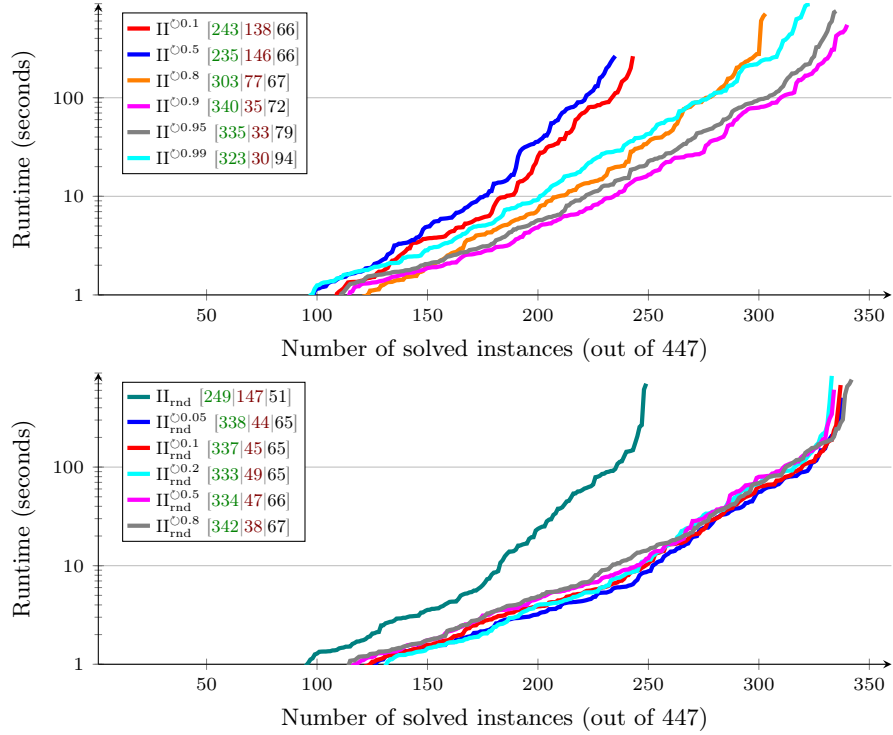

    \centering
    \quantileplot{plotdata/quantile.csv}{
        logs.Storm.topofpsmoothii10/plotred,
        logs.Storm.topofpsmoothii50/plotblue,
        logs.Storm.topofpsmoothii80/plotorange,
        logs.Storm.topofpsmoothii90/plotpink,
        logs.Storm.topofpsmoothii95/plotdarkgray,
        logs.Storm.topofpsmoothii99/plotcyan
    }{
        \siionedata,
        \siifivedata,
        \siieightdata,
        \siininedata,
        \siininefivedata,
        \siinineninedata
    }{1}{360}{1}{900}{north west}\\
    \quantileplot{plotdata/quantile.csv}{
        logs.Storm.topofproundii00/plotteal,
        logs.Storm.topofproundii05/plotblue,
        logs.Storm.topofproundii10/plotred,
        logs.Storm.topofproundii20/plotcyan,
        logs.Storm.topofproundii50/plotpink,
        logs.Storm.topofproundii80/plotdarkgray
    }{
        \riidata,
        \riizerofivedata,
        \riionedata,
        \riitwodata,
        \riifivedata,
        \riieightdata
    }{1}{360}{1}{900}{north west}
    \caption{More Runtime comparison of algorithms for computing certificates.}
    \label{appendix:fig:quantile}
\end{figure}

\Cref{appendix:fig:quantile} shows more plots similar to the one in \Cref{fig:RQ1} (left).
We see that for \sii{\gamma}, larger $\gamma$ values yield better results, where $\gamma = 0.9$ performs best among the ones we have considered.
For \rii[\gamma], the $\gamma$-smooth Bellman operator yields significant improvements over the standard bellman operator. Compared to \sii{\gamma}, however, the performance of \rii[\gamma] is less sensitive to the hyper-parameter $\gamma$.

}{} % if not arxiv then no appendices

\end{document}